\documentclass[11pt,letterpaper]{article}

\usepackage{style}
\usepackage{shortcuts}
\usepackage{lowco2}

\newcommand*\samethanks[1][\value{footnote}]{\footnotemark[#1]}
\newif\ifdraft%
\drafttrue

\title{Deterministic Edge Coloring with few Colors in \CONGEST}

\author{Joakim Blikstad  \textcircled{r}\footnote{The author ordering was randomized using \url{https://www.aeaweb.org/journals/policies/random-author-order/} generator. It is requested that citations of this work list the authors separated by \texttt{\textbackslash textcircled\{r\}} instead of commas.}\hspace{.5em} 
Yannic Maus\thanks{TU Graz, Austria. This research was funded in whole or in part by the Austrian Science Fund (FWF) \url{https://doi.org/10.55776/P36280} and \url{https://doi.org/10.55776/I6915}. For open access purposes, the author has applied a CC BY public copyright license to any author-accepted manuscript version arising from this submission.} \hspace{.5em} \textcircled{r} \hspace{.5em} Tijn de Vos\samethanks}
\date{}

\begin{document}

  \begin{titlepage}
    \maketitle
    \thispagestyle{empty}

    \begin{abstract}
As the main contribution of this work we present  deterministic edge coloring algorithms in the \CONGEST model. In particular, we present an algorithm that edge colors any $n$-node graph with maximum degree $\Delta$ with $(1+\varepsilon)\Delta+O(\sqrt{\log n})$ colors in $\tilde{O}(\log^{2.5} n+\log^2 \Delta \log n)$ rounds. This brings the upper bound polynomially close to the lower bound of $\Omega(\log n/\log\log n)$ rounds that also holds in the more powerful \LOCAL model [Chang, He, Li, Pettie, Uitto; SODA'18].  As long as $\Delta \geq c\sqrt{\log n}$ our algorithm uses fewer than $2\Delta-1$ colors and to the best of our knowledge is the first polylogarithmic-round \CONGEST algorithm achieving this for any range of $\Delta$.  

As a corollary we also improve the complexity of edge coloring with $2\Delta-1$ colors for all ranges of $\Delta$ to $\tilde{O}(\log^{2.5} n+\log^2 \Delta \log n)$. This improves upon  the previous  $O(\log^8 n)$-round algorithm from [Fischer, Ghaffari, Kuhn; FOCS'17].

Our approach builds on a refined analysis and extension of the online edge-coloring algorithm of Blikstad, Svensson, Vintan, and Wajc [FOCS’25], and more broadly on new connections between online and distributed graph algorithms. We show that their algorithm exhibits very low locality and, if it can additionally  have limited local access to future edges (as distributed algorithms can), it can be derandomized for smaller degrees. Under this additional power, we are able to bypass classical online lower bounds and translate the results to efficient distributed algorithms. This leads to our \CONGEST algorithm for $(1+\varepsilon)\Delta+O(\sqrt{\log n})$-edge coloring. Since the modified online algorithm can be implemented more efficiently in the \LOCAL model, we also obtain (marginally) improved complexity bounds in that model.

\end{abstract}

    \vfill
    \lowcotwo\ This is a low-co2 research paper: \lowcotwourl[\lowcotwoversion]. This research was developed, written, submitted and presented without the use of air travel.

    \clearpage
    \tableofcontents
    \thispagestyle{empty}
    \newpage
    \thispagestyle{empty}
      \listoftodos
  \end{titlepage}

\section{Introduction}
\label{sec:intro}
In the edge coloring problem, the goal is to assign a color to each edge of an input graph so that any two adjacent edges receive different colors. 
The problem has been extensively studied across a wide range of computational models, including centralized, sublinear, massively parallel, dynamic, and online settings~\cite{arjomandi1982efficient,BMN92,harvey2018greedy,behnezhad2019streaming,duan2019dynamic,bhattacharya2021online,ansari2022simple,christiansen2023power,bhattacharya2024faster,Assadi25,BlikstadSVW25,chechik2025improved}. 
This work focuses on designing better algorithms in the distributed setting, specifically within the \CONGEST model. In this model \cite{peleg00},  a communication network is abstracted as an $n$-node graph, where the vertices communicate with each other through the edges of the graph in synchronous rounds by sending a message of $O(\log n)$ bits to each neighbor. The complexity measure is the number of communication rounds  until every node has produced its output, e.g., has output the colors of its incident edges. 

\paragraph{Historic focus on \boldmath $(2\Delta-1)$-edge coloring within distributed computing.} Edge coloring is one of the central symmetry-breaking problems that has been extensively studied in both the \CONGEST model and also in the more powerful \LOCAL model. The latter is  identical to the \CONGEST model, except that messages can be of unbounded size~\cite{linial92}. Classically, research in the field has focused on the \emph{greedy version} of the problem in which one aims to use $2\Delta-1$ colors for graphs with maximum degree $\Delta$. For constant-degree graphs, $O(\log^* n)$-round deterministic algorithms have been known for decades~\cite{linial92,panconesi-rizzi}. For general graphs, a breakthrough result in 2017 provided the first polylogarithmic-time deterministic algorithm for the problem~\cite{FGK17}.

\paragraph{Coloring with fewer colors.}
The number of colors used by the aforementioned algorithms is far from optimal. A seminal result by Vizing shows that every graph admits an edge coloring with at most 
$\Delta+1$ colors~\cite{vizing1964estimate}. Consequently, across all of the models discussed above, substantial effort has been devoted to designing algorithms that approach this bound as closely as possible.

In the randomized distributed setting, algorithms achieving substantially fewer than $2\Delta-1$ colors are known. Already a single round in which each edge picks a tentative color and permanently adopts it if no adjacent edge picks the same color creates substantial slack for solving the problem and reducing the number of colors.  Hence, there are several randomized  algorithms usually targeting $(1+\eps)\Delta$ colors, starting with the relatively slow R\"odl nibble method-based algorithm from Panconesi, Dubashi, and Grable \cite{NibbleMethod}, a $(1+\eps)\Delta$-edge coloring for sufficiently large $\Delta$~\cite{EPS15}, and a  $\Delta+O(\log n/\log\log n)$-edge coloring algorithm  by Su and Vu in $\poly(\Delta,\log n)$ rounds~\cite{SuVu19}. Lately multiple works devised $\poly\log\log n$-round algorithms for coloring with $(1+\eps)\Delta$ colors \cite{Davies23,HMN22} and another work even obtained an $O(\log^*n)$-round algorithm for large degree graphs \cite{HN21}. While all of the former results are for the \LOCAL model, the latter two   hold in \CONGEST. 


Deterministically, the situation is more challenging. Fast distributed algorithms rely on many nodes deciding simultaneously, and here the \emph{greedy nature} of $2\Delta-1$ edge coloring is crucial: any partial coloring can always be extended to a complete one, which enables highly parallel progress. With fewer colors this guarantee disappears: partial colorings may block further progress and fail to extend to a valid solution. This obstruction is inherent: \cite{CHLPU18} shows that any deterministic distributed algorithm requires $\Omega(\log_{\Delta} n)$ rounds, ruling out sublogarithmic-time algorithms.

In the \LOCAL model, edge colorings with few colors can nevertheless be obtained either via intricate procedures, e.g., \cite{GKMU18,BMNSU25,JMS25}, or, as the state of the art for general graphs, is obtained through applying the powerful general derandomization techniques \cite{GhaffariKM17,GHK18,RG20,GG24} to the aforementioned randomized results; all these algorithms run in polylogarithmic time. These approaches, however, do not extend to \CONGEST, where no general derandomization method is known. Despite efficient randomized \CONGEST algorithms achieving fewer than $(1+\varepsilon)\Delta$ colors, no deterministic counterpart is known and the aforementioned \LOCAL model derandomization results all rely on gathering large neighborhoods and apply brute force locally which is infeasible in \CONGEST. This motivates the following question.

\begin{center}
\vspace{-0.5em}
\emph{Is there a deterministic \CONGEST algorithm that edge colors graphs of maximum degree $\Delta$ with fewer than $2\Delta-1$ colors in polylogarithmic time?}
\end{center}


\subsection{Our Contributions}

\paragraph{Contribution 1: Edge coloring  in \boldmath \CONGEST. }

As our main contribution we answer the question in the affirmative unless the maximum degree $\Delta$ is too small. 
We prove the following theorem, where we use $\tilde O(f) = O(f\poly \log f)$.

\begin{restatable}{theorem}{thmCONGESTmain}
\label{thm:CONGESTmain}
For any constant $\eps>0$, there is a deterministic \CONGEST algorithm to edge color any $n$-node graph with maximum degree $\Delta$ with $(1+\eps)\Delta+O(\sqrt{\log n})$ colors in 
$\tilde O(\log^{2.5} n+\log^2\Delta\log n)$
rounds. 
\end{restatable}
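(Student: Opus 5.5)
We plan to obtain the theorem by simulating a locality-aware, derandomized version of the online edge-coloring algorithm of Blikstad, Svensson, Vintan, and Wajc, and by handling large and small degrees separately. The first reduction is degree splitting. Using a deterministic \CONGEST degree-splitting procedure, we recursively partition the edge set into $2^k$ subgraphs. Each subgraph has maximum degree at most $\Delta/2^k+O(\text{error})$, and the errors across levels are chosen so that their sum is an $\eps/2$ fraction of $\Delta$. We stop once the degree is $\Delta'=\Theta(\mathrm{poly}\log n)$, or earlier if $\Delta$ is already small. Coloring each part with disjoint palettes of $(1+\eps/2)\Delta'+O(\sqrt{\log n})$ colors uses $(1+\eps)\Delta + 2^k\cdot O(\sqrt{\log n})$ colors in total. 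To prevent the additive term from multiplying by $2^k$, we stop splitting at $\Delta' = \Theta(\sqrt{\log n}/\eps)$ or higher. Then the additive $O(\sqrt{\log n})$ per part is absorbed into $\eps\Delta'$, and the additive term survives only when $\Delta$ itself is $O(\sqrt{\log n})$. The splitting costs $O(\log \Delta)$ levels of $\tilde O(\log\Delta \log n)$ rounds each, which accounts for the $\log^2\Delta\log n$ term.

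The core step is coloring a graph of degree $\Delta'$ with $(1+\eps)\Delta'+O(\sqrt{\log n})$ colors in $\tilde O(\log^{2.5} n)$ rounds. We process edges in a sequence of batches, obtained by computing a proper $O(\Delta'^2)$ edge coloring (Linial-style, $O(\log^* n)$ rounds) and treating each color class as a matching that arrives "online". For each arriving edge, the online algorithm assigns a color from a fractional/random distribution maintained locally at its endpoints. We first show that its state at a vertex depends only on a small-radius neighborhood in the arrival order, which is the low-locality claim. Next, we derandomize: the randomized analysis shows that every vertex's "bad event" (running out of palette, i.e., exceeding $(1+\eps)\Delta'$) has probability $1/\mathrm{poly}(n)$ only when $\Delta'\gtrsim \log n$. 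For smaller $\Delta'$, we use the extra lookahead to future edges to fix the random choices by the method of conditional expectations, using pessimistic estimators of Chernoff type that each vertex can compute from $O(\log n)$-bit aggregates. These choices are coordinated via a network decomposition of the conflict graph, computable in \CONGEST in $\tilde O(\log^2 n)$ rounds per color class of the decomposition. The $\sqrt{\log n}$ additive loss comes from balancing the number of derandomization phases against the concentration loss: with slack $s$, a Chernoff bound with deviation $s$ over $\Delta'$ trials gives the required failure rate once $s^2\gtrsim \log n$.

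The main obstacle is this derandomization in \CONGEST. Pessimistic estimators must be decomposable into per-vertex sums so that the conditional-expectation step can be performed cluster by cluster with small messages, and the locality of the online algorithm must be small enough that clusters of diameter $\tilde O(\log n)$ see everything they need. We would first try the estimator $\sum_v \exp(\lambda(\text{load}_v - \mu_v))$ and fix the seeds of $O(\log n)$ bits per cluster, bit by bit, by aggregating along cluster Steiner trees. Summing the costs gives $\tilde O(\log^{2.5} n)$ for the coloring phase plus $\tilde O(\log^2\Delta\log n)$ for splitting. The total color count is $(1+\eps)\Delta+O(\sqrt{\log n})$.
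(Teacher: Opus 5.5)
There is a genuine gap in your base case, which is coloring a graph of degree $\Delta'=\Theta(\sqrt{\log n})$ with $(1+O(\eps))\Delta'$ colors. Conditional expectations rules out every bad event only if the pessimistic estimator starts below $1$. Your estimator $\sum_v \exp(\lambda(\mathrm{load}_v-\mu_v))$ controls per-vertex quantities, each a sum of only $\Delta'$ bounded steps. Hoeffding then gives $\exp(-2s^2/\Delta')$ per vertex, so the initial value is about $n\exp(-2s^2/\Delta')$. This is below $1$ only when $s^2\gtrsim \Delta'\log n$, not $s^2\gtrsim\log n$ as you state. At $\Delta'=\Theta(\sqrt{\log n})$ this forces $s\approx\log^{3/4}n\gg\eps\Delta'$, and summed over the $\Delta/\Delta'$ parts it destroys the color bound. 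If you instead stop splitting at $\Delta'=\Theta(\log n)$, you have nothing for $c\sqrt{\log n}\le\Delta\ll\log n$, which is exactly the new regime of the theorem.

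Lookahead to future edges cannot repair this, because derandomization never beats the estimator's initial value. In the randomized process at this degree, vertices really do turn bad, each with probability $2^{-\Theta(\Delta')}\gg 1/n$. The missing idea is the structure of the oblivious-adversary analysis of \cite{BlikstadSVW25}:
\begin{itemize}
\item Bad vertices are tolerated and handled separately, via capping at $A$ and greedy use of free main-palette colors at bad vertices.
\item What is guaranteed instead are the invariants ``few bad colors'', ``few bad neighbors'' and the ``bad vertex property''.
\item Each invariant is proved via martingales aggregated over a set $U$ of $\Theta(\eps\Delta)$ neighbors of a center $w$, and over submatchings of a fixed matching partition of $G[U]$.
\item These martingales have $\Theta(\Delta^2)$ steps and deviation $\Theta(\Delta^2)$, so their tails are $2^{-\Omega(\Delta^2)}$. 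This survives the union bound over $2^{O(\Delta)}$ subsets and $n$ centers exactly when $\Delta\ge c\sqrt{\log n}$.
\end{itemize}
Seeing future edges is needed only to evaluate these potentials from the $3$-hop neighborhood of $w$, since $U_w$ and $G[U]$ contain unarrived edges; this is what yields locality $5$. Relatedly, the additive $O(\sqrt{\log n})$ does not come from balancing slack. For $\Delta\ge c\sqrt{\log n}$ the paper uses $(1+\eps)\Delta$ colors, and the additive term only covers $\Delta=O(\sqrt{\log n})$, where the $(2\Delta-1)$-coloring of \cite{BEG17} suffices.

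Your round count is also not substantiated. Even granting your figure of $\tilde O(\log^2 n)$ rounds per color class of a network decomposition, there are typically $\Theta(\log n)$ color classes. You also need this coordination for each of your $\Theta(\Delta'^2)$ sequential batches, so the total is far above $\tilde O(\log^{2.5}n)$. Moreover, it is unclear how $O(\log n)$-bit seeds could support the martingale analysis, which conditions each edge's sample on the entire past.

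The paper uses no network decomposition. Each edge deterministically picks an outcome that does not increase the local potential. Edges are scheduled by an $O(\Delta'^4)$-coloring of a conflict graph. That graph has degree $O(\Delta'^4)$ rather than $\Delta'^5$ because all matching-based potentials use one fixed $2\Delta'$-edge coloring. Information is pushed rather than pulled, so each schedule step moves $O(\Delta'^3)$ messages of $O(\log\Delta')$ bits per edge, using IDs from an $O(\Delta'^6)$-coloring of $G^6$. That is $\tilde O(\sqrt{\log n})$ rounds per step and $\tilde O(\log^{2.5}n)$ overall.
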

In fact, \Cref{thm:CONGESTmain} produces a coloring with less than $2\Delta-1$ colors when $\Delta\ge c\sqrt{\log n}$ for a sufficiently large constant $c$. 
To the best of our knowledge there is no prior \CONGEST algorithm for deterministic edge coloring with fewer than $2\Delta-1$ colors, regardless of the choice of $\Delta$. 
For $\Delta\ge \sqrt{\log n}$, the aforementioned lower bound $\Omega(\log_{\Delta}n)$ for the problem amounts to $\Omega(\log n/\log\log n)$  \cite{CHLPU18}; \Cref{thm:CONGESTmain} comes polynomially close to this. 

As a corollary of \Cref{thm:CONGESTmain} we also obtain faster algorithms for coloring with $2\Delta-1$ colors. 
This improves on the $O(\log^8 n)$ round complexity claimed in~\cite{FGK17}.\footnote{The algorithm in \cite{FGK17} is formulated in the \LOCAL model but a footnote states that it can be implemented in \CONGEST; it is unclear whether this requires additional overhead due to congestion. }


\begin{restatable}{corollary}{CorTwoDelta}
\label{cor:CONGEST}
There is a deterministic \CONGEST algorithm to edge color any $n$-node graph with maximum degree $\Delta$ with $2\Delta-1$ colors in $\tilde O(\log^{2.5}n+\log^2 \Delta \log n)$ rounds.
\end{restatable}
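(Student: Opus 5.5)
We derive \Cref{cor:CONGEST} from \Cref{thm:CONGESTmain} by splitting according to $\Delta$. Fix $\eps=1/2$ and let $C$ be the constant hidden in the $O(\sqrt{\log n})$ term of \Cref{thm:CONGESTmain}. Choose a threshold $\Delta_0=c\sqrt{\log n}$ with $c$ a sufficiently large constant, say $c\ge 4C+4$.

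\emph{Case $\Delta\ge\Delta_0$.} Run the algorithm of \Cref{thm:CONGESTmain} directly. It uses at most $\tfrac32\Delta+C\sqrt{\log n}$ colors. Since $C\sqrt{\log n}\le \Delta/4$, this is at most $\tfrac74\Delta\le 2\Delta-1$ colors, provided $\Delta\ge 4$, which holds for large $n$. The round complexity is exactly the claimed one.

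\emph{Case $\Delta<\Delta_0$.} The degree is now only $O(\sqrt{\log n})$, and we use a classical greedy-style reduction. Compute an $O(\Delta^2)$-edge coloring, and then reduce it to $2\Delta-1$ colors. One option for the first step is a deterministic $O(\Delta^2)$-coloring of the line graph via Linial's algorithm~\cite{linial92}. It can be simulated in \CONGEST with $O(\log^* n)$ rounds, since each edge is handled by one endpoint and IDs have $O(\log n)$ bits. The line graph has maximum degree $2\Delta-2$, so we obtain $O(\Delta^2)$ colors.

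For the second step, iterate over the $O(\Delta^2)=O(\log n)$ color classes one at a time. In iteration $i$, every edge of class $i$ greedily picks a color in $\{1,\dots,2\Delta-1\}$ that is not used by any already colored adjacent edge. Such a color always exists because an edge has at most $2\Delta-2$ neighbors in the line graph. The edges of one class form a matching, so they do not conflict with each other. Each edge is owned by one endpoint. That endpoint learns the colors at the other endpoint after each iteration, at a cost of $O(1)$ rounds per newly colored edge per neighbor. This gives $O(\Delta^2)=O(\log n)$ rounds overall, which is within the claimed bound.

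A faster alternative exists but is not needed: one could instead apply \Cref{thm:CONGESTmain} to an auxiliary graph. The simple greedy bound already suffices.

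\emph{Main obstacle.} The delicate step is verifying that the greedy class-by-class step really runs in $O(1)$ \CONGEST rounds per class. The concern is that each endpoint must know the colors at the other endpoint of its edge. After each class, every node broadcasts its newly colored edges. Each node has at most one new edge per class, because a class is a matching, so one message of $O(\log \Delta)$ bits per neighbor suffices. This makes the small-$\Delta$ case cost $O(\log n)$ rounds, and combining both cases yields $\tilde O(\log^{2.5}n+\log^2\Delta\log n)$ rounds with $2\Delta-1$ colors.
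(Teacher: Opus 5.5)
Your proof is correct and has the same structure as the paper's. You split at $\Delta=\Theta(\sqrt{\log n})$ and invoke \Cref{thm:CONGESTmain} above the threshold, and you make explicit the constant bookkeeping showing $(1+\eps)\Delta+C\sqrt{\log n}\le 2\Delta-1$, which the paper leaves implicit.

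The only real difference is the small-degree case. The paper cites the deterministic \CONGEST $(2\Delta-1)$-edge coloring of Barenboim, Elkin, and Goldenberg \cite{BEG17}, which runs in $O(\Delta+\log^* n)=O(\sqrt{\log n})$ rounds. You instead build the routine yourself: an $O(\Delta^2)$-edge coloring via Linial, followed by greedy extension one color class (matching) at a time. This costs $O(\Delta^2+\log^* n)=O(\log n)$ rounds. Your route is more self-contained, needing only Linial's algorithm and the matching structure of the color classes. Its bound is slower, but it is still dominated by $\tilde O(\log^{2.5} n)$. The paper's route is a one-line reduction with a faster small-case bound.

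One small inaccuracy: Linial on the line graph does not run in $O(\log^* n)$ \CONGEST rounds merely because IDs have $O(\log n)$ bits. The endpoint owning $uv$ must receive the current colors of up to $\Delta-1$ other edges at $v$, all over the single edge $vu$. So the first round, which works with $\Theta(\log n)$-bit edge IDs, costs $O(\Delta)$ rounds under a naive simulation. Later rounds cost only $O(1)$, since colors then have $O(\log\Delta+\log\log n)$ bits and $\Delta=O(\sqrt{\log n})$. The extra $O(\sqrt{\log n})$ rounds are harmless here, so your conclusion stands.
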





\paragraph{Contribution 2: Improvements in the \boldmath \LOCAL model.}
Our techniques also yield a slightly faster algorithm in the \LOCAL model compared to the previous state of the art. \Cref{thm:LOCALmainNoND} is the \LOCAL-model counterpart of \Cref{thm:CONGESTmain} and achieves an improved runtime, as the absence of bandwidth restrictions eliminates the need to handle congestion. 

\begin{restatable}{theorem}{thmLOCALmainNoND}
\label{thm:LOCALmainNoND}
For any constant $\eps>0$, there is a deterministic \LOCAL algorithm to edge color any $n$-node graph with maximum degree $\Delta$ with $(1+\eps)\Delta+O(\sqrt{\log n})$ colors in $O(\log^{2}n)+\tilde O(\log^2\Delta\log n)$ rounds.
\end{restatable}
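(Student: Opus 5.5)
We prove \Cref{thm:LOCALmainNoND} by running the same pipeline as for \Cref{thm:CONGESTmain} and replacing every step whose cost comes from bandwidth limits with its direct \LOCAL counterpart. The backbone is the online algorithm of Blikstad, Svensson, Vintan, and Wajc, viewed as a distributed procedure. We order the edges, or groups of edges, in a way that can be computed locally. We then simulate the online algorithm, which processes edges in that order and gives each arriving edge a color drawn from a distribution determined by the fractional state of its endpoints. Our low-locality analysis should show that the color of an edge depends only on the state within a small radius, roughly $O(\log \Delta)$ hops along the arrival order. With $(1+\eps)\Delta$ colors the algorithm fails on only few edges, provided degrees are large enough relative to $\log n$. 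The additive $O(\sqrt{\log n})$ term covers low-degree remainders and failed edges.

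The first step is to split the graph into $O(\log \Delta)$ phases, or $O(\log\Delta)$ degree classes, through a degree-splitting recursion. Each phase reduces to a subproblem in which the online algorithm is simulated for one batch of edges whose degrees are suitably controlled. The second step derandomizes each batch. Since edges may look ahead at future edges locally, the randomness is used only to control bad events that are local, namely deviations in the load of each vertex and color. We replace the random choices with pessimistic estimators and then fix the bits via a network decomposition or a local rounding framework. In \LOCAL, a $(O(\log n),O(\log n))$ network decomposition can be computed in $\tilde O(\log^2 n)$ rounds, or with the Ghaffari–Grunau–Rozhoň type bound in $O(\log^2 n)$. Once we have it, each cluster fixes its random choices by brute force, gathering the whole cluster at a leader. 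This is exactly where \CONGEST paid the extra $\sqrt{\log n}$ factor, since there the conditional expectations had to be rounded over $O(\log n)$-bit messages. The third step is to compute the decomposition only once, on a suitable power graph of radius $\tilde O(\log\Delta)$, at cost $O(\log^2 n)$. Reusing it for all $O(\log\Delta)$ phases costs $\tilde O(\log\Delta)\cdot O(\log n)$ per phase. Summed over the phases this gives the $\tilde O(\log^2\Delta\log n)$ term, so the total is $O(\log^2 n)+\tilde O(\log^2\Delta\log n)$.

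The last step is the cleanup. When $\Delta\le C\sqrt{\log n}$, or for the residual graph of uncolored edges whose maximum degree is $O(\sqrt{\log n})$, we use fresh colors. A deterministic $(2\Delta'-1)$-edge coloring of the residual graph with $\Delta'=O(\sqrt{\log n})$ suffices. In \LOCAL this is done within the same budget by reusing the network decomposition, or by applying known $\poly\log\Delta'\cdot\log n$-round algorithms.

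The main obstacle is the accounting in the third step. A fresh decomposition for each of the $O(\log\Delta)$ phases on a power graph would cost $O(\log\Delta\cdot\log^2 n)$ in total, which is too much. We must instead argue that a single decomposition of $G^{\tilde O(\log\Delta)}$ supports all the phases, with per-phase cost equal to cluster diameter times locality. I would try first to show that the locality radius of the derandomized batch is $\tilde O(\log \Delta)$ independent of $n$, so that clusters of diameter $O(\log n)$ in $G^{r}$ with $r=\tilde O(\log\Delta)$ cost $\tilde O(\log\Delta\log n)$ rounds per phase.
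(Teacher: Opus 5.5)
Your proposal has a genuine gap. The network-decomposition route cannot meet the stated bound, and the accounting meant to make it fit is wrong.

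\textbf{The cost accounting.} Using a decomposition of $G^r$ with $O(\log n)$ colors and cluster diameter $O(\log n)$ means processing the color classes one after another, each costing (diameter)$\times r$. A single sweep therefore costs $\Theta(r\log^2 n)$ rounds, not $r\cdot O(\log n)$; you dropped the factor for the $O(\log n)$ color classes. Computing the decomposition is also not $O(\log^2 n)$:
\begin{itemize}
\item the best known deterministic bound \cite{GG24} is $\tilde O(\log^2 n)$, and its $\poly\log\log n$ factors are exactly what this theorem is designed to remove;
\item on $G^r$ it costs another factor $r$.
\end{itemize}
With your own $r=\tilde\Theta(\log\Delta)$ and $O(\log\Delta)$ sweeps you land at $\tilde O(\log^2\Delta\log^2 n)$. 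For $\Delta=2^{\Theta(\sqrt{\log n})}$ the theorem promises $O(\log^2 n)$, while even one sweep costs $\tilde\Omega(\log^{2.5} n)$.

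\textbf{The other steps.} The locality bound $\tilde O(\log\Delta)$ is only something you ``would try'' to show. The degree-splitting recursion also does not produce $O(\log\Delta)$ sequential phases of the online algorithm. It produces $\Delta/\Delta'$ edge-disjoint subgraphs, colored in parallel with disjoint palettes; $O(\log\Delta)$ is just its depth. Finally, the additive $O(\sqrt{\log n})$ term is needed only when $\Delta=O(\sqrt{\log n})$, where $(2\Delta-1)$-coloring in $O(\Delta+\log^* n)$ rounds suffices. Edges marked by the online algorithm are absorbed by $O(\eps\Delta)$ greedy colors, not by a residual graph of degree $O(\sqrt{\log n})$.

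\textbf{The missing idea: no network decomposition is needed.} Derandomizing the algorithm of \cite{BlikstadSVW25} with pessimistic estimators gives a potential $\Phi=\sum_w\Phi_w$ in which each $\Phi_w$ depends only on the $3$-hop neighborhood of $w$. This yields a deterministic \SLOCALedge{} algorithm of \emph{constant} locality $5$. It is valid once $\Delta\ge c\sqrt{\log n}$, because that is exactly when $\Phi(0)<1$: its terms are $2^{-\Omega(\Delta^2)}$-small and are summed over $n$ vertices.

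A constant-locality algorithm is simulated in \LOCAL by a schedule, namely an $O(\Delta^4)$-coloring of the conflict graph. The consistent matchings taken from a fixed $2\Delta$-edge coloring save a factor $\Delta$ over $L(G)^5$. The schedule is computed and executed in $O(\Delta^4+\log^* n)$ rounds.

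The proof then proceeds in two stages:
\begin{itemize}
\item Iterate degree splitting with $\eta=\Theta(\eps/\log\Delta)$ for $\log(\Delta/\Delta')$ levels, until every part has maximum degree $\Theta(\Delta')$ with $\Delta'=\Theta(\sqrt{\log n})$. This costs $\tilde O(\log^2\Delta\log n)$. The per-part palettes sum to $(1+O(\eps))\Delta$, since the $O(1)$ additive discrepancies contribute only $O(\Delta/\Delta')\le O(\eps\Delta)$.
\item Run the $\poly(\Delta')$-round algorithm on all parts in parallel, which costs $O(\Delta'^4)=O(\log^2 n)$ rounds with no $\log\log n$ factors.
\end{itemize}
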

For $\Delta= 2^{O(\sqrt{\log n})}$, our running time becomes $O(\log^2 n)$, removing the $\poly \log \log n$ factor from the state of the art~\cite{CHLPU18,Davies23,RG20,HN21,GG24}. In this introduction, we restrict ourselves to the case where $\eps$ is constant. However we note that some of the earlier algorithms were able to achieve even fewer colors, e.g., $\Delta+o(\Delta)$ colors in $\tilde O(\log^2n)$ rounds as they allow for setting $\eps=o(1)$, obtained by combining~\cite{Davies23} with black-box derandomization~\cite{GhaffariKM17,GHK18,RG20,GG24}.

\paragraph{The additive \boldmath $O(\sqrt{\log n})$ term in the number of colors.}
In our technical sections we design algorithms that color with $(1+\eps)\Delta$ colors (without the additive $O(\sqrt{\log n})$ term) under the assumption that $\Delta\geq c\sqrt{\log n}$ holds for a sufficiently large constant $c$. \Cref{thm:CONGESTmain,thm:LOCALmainNoND,cor:CONGEST}  follow from this result as for $\Delta=O(\sqrt{\log n})$, one can compute an edge coloring with $2\Delta-1\leq (1+\eps)\Delta+O(\sqrt{\log n})$ colors in $O(\Delta+\log^*n)=O(\log^{o(1)} n)$ rounds in \CONGEST (and hence in \LOCAL)~\cite{BEG17}.

The fact that our deterministic $(1+\eps)\Delta$-edge coloring  algorithm works for maximum degrees as low as $\Omega(\sqrt{\log n})$ is significant, as $\Theta(\log n)$ has not only been a provable barrier for deterministic online algorithms\footnote{Online algorithms play an important role in our techniques.} \cite{BMN92} , and a natural barrier for the deterministic algorithm of \cite{GHKM18}, but has also appeared as a limitation in randomized settings, for example, in the near-linear-time randomized centralized algorithm by Assadi where this limitation is rooted in Chernoff-type concentration bounds \cite{Assadi25}.

Moreover, $\Omega(\log n)$ has also been a barrier in the distributed setting: More precisely,  restricted to the case $\Delta=\Omega(\log n)$ the $(1+\eps)\Delta$-edge coloring algorithms of \cite{NibbleMethod,CHLPU18} (combined with the algorithm of \cite{HN21} for greedy edge coloring) run in $O(\log^*n)$ rounds.  To obtain $(1+\eps)\Delta$-edge coloring algorithm that work for smaller values of $\Delta$ these works fall back to the Lovász Local Lemma which causes runtime increase in the \LOCAL model, see, e.g.,  \cite{HMN22,Davies23}, and is poorly understood in the \CONGEST model~\cite{MU21}.

In the remainder of the paper, we focus on computing a $(1+\eps)\Delta$-edge coloring for $\Delta \geq c\sqrt{\log n}$.

\paragraph{Contribution 3: Connection between online and distributed algorithms and a succinct overview of our techniques.} 
Our technique builds on analyzing and extending an existing online algorithm, and lifting it to distributed models of computation. In a recent work Blikstad, Svensson, Vintan, and Wajc provide  deterministic (randomized) online $(1+\eps)\Delta$ edge coloring algorithms that work for $\Delta= \Omega(\log n)$ and $\Delta\geq \Omega(\sqrt{\log n})$, respectively \cite{BlikstadSVW25}. 
We build on their algorithm and uncover new connections to the area of distributed graph algorithms. 
In the online setting, edges arrive in an adversarial sequence and the algorithm must irrevocably assign a color upon arrival~\cite{BMN92}. While such decisions may depend on the entire past, translating these algorithms to the distributed setting requires that each decision is based only on local information around the arriving edge.
We show that the algorithm of \cite{BlikstadSVW25} indeed has low \emph{locality}. Furthermore, we show that if the algorithm's deterministic choices are additionally allowed to access limited information about future edges within a small neighborhood, it remains effective even for degrees as low as $\Omega(\sqrt{\log n})$ thereby circumventing the logarithmic lower bound on $\Delta$ for deterministic online edge coloring  from \cite{BMN92}. Although online algorithms cannot see future edges, distributed algorithms  can see all edges in the vicinity.

\subparagraph{Towards efficient distributed \CONGEST algorithms.}
As a result of the above we obtain that $(1+\eps)\Delta$-edge coloring is a local greedy problem unless the maximum degree $\Delta$ is less than $\Theta(\sqrt{\log n})$. The \emph{greedy} steps, however, require carefully designed color choices that guarantee the algorithm never gets stuck in the future. 
Such a result can also be derived from \cite{NibbleMethod,CHLPU18} combined with black-box derandomization~\cite{GHK18} but only for $\Delta=\Omega(\log n)$.

A central benefit of our greedy algorithm with local greedy choices is that it can be implemented in the \CONGEST model with moderate overhead by iterating through an appropriate schedule of edges and dealing with congestion. Optimizing the schedule for the internals of the greedy choices of the pseudo-online algorithm and optimizing the implementation for congestion   yields a \CONGEST algorithm with complexity $\poly\Delta\cdot O(\log^* n)$ --oversimplified, it becomes $\tilde O(\Delta^5\cdot \log^* n)$ rounds-- for using $(1+\eps)\Delta+O(\sqrt{\log n})$ colors. Using degree-splitting we can split into smaller instances in a divide-and-conquer fashion using $\tilde{O}(\log^2\Delta\log n)$ rounds, see \Cref{sec:congest,sec:Congest_DS} for details. As a result, we can restrict to the case where $\Delta=\Theta(\sqrt{\log n})$ yielding the $\tilde{O}(\log^{2.5} n)$-round algorithm of \Cref{thm:CONGESTmain}. Besides additional potential overhead of derandomizing the algorithms of \cite{NibbleMethod,CHLPU18}, decreasing the threshold where $(1+\eps)\Delta$ becomes a greedy problem to $\Theta(\sqrt{\log n})$ directly improves the runtime \Cref{thm:CONGESTmain}.

\paragraph{Outline.} 
Further related work appears in \Cref{sec:related work}.
 In \Cref{sec:technical_overview}, we formalize the relationship between online, greedy, and distributed algorithms discussed above.
 Then, in \Cref{sec:online_locality} we show that $(1+\eps)\Delta$-edge coloring has low locality, with corresponding lower bounds provided in \Cref{sec:LB}. We provide our \LOCAL algorithms in \Cref{sec:local}, and finally our \CONGEST algorithms in \Cref{sec:congest}.


\section{Background and Extensive Technical Overview}\label{sec:technical_overview}

While our core contribution lies in a fast, polylogarithmic-time \CONGEST algorithm for edge coloring, our techniques are significantly easier to grasp through the lens of the more powerful \LOCAL model. Only at the very end of this section, i.e., in \Cref{sec:tecOverviewCongest}, we provide the challenges with implementing our algorithm in the \CONGEST model.



\paragraph{Outline of \Cref{sec:technical_overview}.}
In \Cref{sec:tecOverviewSLOCAL} we formally define greedy algorithms with low locality, in \Cref{sec:tecOverviewRelationOnline} we relate this definition to online algorithms, and in \Cref{sec:tecOverViewOnline} we present the online algorithm of \cite{BlikstadSVW25} as well as an overview of our new analysis. In \Cref{sec:tec:scheduling} we show how these results imply new distributed algorithms in the \LOCAL model. Finally, in \Cref{sec:tecOverviewCongest}, we present the challenges with implementing the algorithm in \CONGEST and how we overcome them.

\subsection{Sequential Greedy Algorithms with Low Locality} \label{sec:overview_greedy}
\label{sec:tecOverviewSLOCAL}
The central model capturing whether a problem is a greedy algorithm with small locality is the \SLOCAL model introduced in \cite{GhaffariKM17}. 

\paragraph{\boldmath \SLOCAL model.} In the \SLOCAL model, the nodes are processed sequentially in some arbitrary order chosen by an adversary. When node $v \in V$ is processed, it chooses its internal state and output based on the information available within distance $T(n)$ in the graph. This information also includes the internal states of the nodes processed before node $v$. The radius $T(n)$ is referred to as the \emph{locality} of the algorithm.
%
We note that any \LOCAL algorithm with locality $T(n)$ is also an \SLOCAL algorithm with locality $T(n)$. In fact, \SLOCAL is a strictly stronger model. 

The model was introduced in \cite{GhaffariKM17} as a means to gain insights into distributed algorithms and complexity theory and most importantly to understand how one could obtain efficient deterministic, i.e., $\poly\log n$-round, algorithms for greedy-type problems like maximal independent set, or $\Delta+1$-vertex coloring. In a complexity theoretic approach, the work showed that several problems are complete in the sense that a $\poly\log n$-round algorithm for any complete problem would imply a $\poly\log n$-round algorithm for any problem that can be solved with polylogarithmic locality in \SLOCAL. 
Later, the \SLOCAL model was also used for a blackbox derandomization result\footnote{In fact, for any problem for which solutions can be verified efficiently with a deterministic  \LOCAL model an efficient randomized \LOCAL algorithm also implies an efficient deterministic \SLOCAL algorithm. } \cite{GHK18}, until in 2020 Rozho\v{n} and Ghaffari \cite{RG20} designed the first $\poly\log n$-round deterministic algorithm for one of the complete problems, namely for the network decomposition problem, yielding deterministic polylogarithmic-time \LOCAL algorithms for most problems in the field. 

However, the larger the complexity of a problem is in the \SLOCAL model, the larger is the multiplicative overhead in these generic reductions. 
While the original work introducing the model did not care for polylogarithmic factors, neither in the \LOCAL model nor in the \SLOCAL model, follow-up work did, e.g.  \cite{GS17,GHK18,BalliuKKLOPPR0S23} provided tight complexities for the sinkless orientation problem in both models.
We aim to contribute to a more fine-grained understanding of complexities in the \SLOCAL model. This motivates the following definition. 


\begin{definition*}
    A problem is a \emph{(low locality) greedy-type problem} if it can be solved with constant locality in \SLOCAL.
\end{definition*}
Examples of greedy-type problems are maximal independent set, maximal matching, $\Delta+1$-vertex coloring and $2\Delta-1$ edge coloring. 
\cite{NibbleMethod,CHLPU18} imply that  $(1+\eps)\Delta$-edge coloring is a greedy-type problem for $\Delta = \Omega(\log n)$. We extend this to the regime $\Delta = \Omega(\sqrt{\log n})$. As explained earlier, this extension is also essential to get our efficient \CONGEST algorithm. 
\begin{restatable}{theorem}{thmSLOCALmain}
\label{thm:SLOCALmain}
For any constant $\eps>0$, there is a deterministic algorithm that edge colors any $n$-node graph with maximum degree $\Delta\geq c\sqrt{\log n}$ with $(1+\eps)\Delta$ colors in the \SLOCAL model with constant locality, for some sufficiently large constant $c>0$. 
\end{restatable}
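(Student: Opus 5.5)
The plan is to simulate the online algorithm of \cite{BlikstadSVW25} in the \SLOCAL model, with edges processed in the adversarial \SLOCAL order: an edge is handled when one of its endpoints is processed. Two facts have to be established. First, the color an online algorithm assigns to an arriving edge $e=uv$ must depend only on the history within $O(1)$ hops of $e$. Second, the randomness, which \cite{BlikstadSVW25} needs for $\Delta=\Omega(\sqrt{\log n})$, must be replaced by a deterministic choice. That choice may look at the whole $O(1)$-hop neighborhood, including edges that have not yet arrived.

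\textbf{Step 1 (locality).} Recall the structure of the algorithm of \cite{BlikstadSVW25}. Each vertex $v$ and color $c$ carries a fractional/weight state, and an arriving edge $uv$ samples a color $c$ from a distribution determined by the states of $u$ and $v$. The states of $u$ and $v$ are in turn updated by multiplicative-weights-type rules that use only the colors of edges incident to $u$ and $v$. I would verify that the distribution for $e$ is a function of the colors already placed on edges sharing an endpoint with $e$ and of the degrees of $u$ and $v$. Failures, handled by falling back to a small reserved palette of $\eps\Delta/2$ colors colored greedily, also depend only on edges at distance $1$. This shows the decision at $e$ has radius $O(1)$.

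\textbf{Step 2 (derandomization via local lookahead).} The analysis of \cite{BlikstadSVW25} controls, for each vertex $v$ and color $c$, a potential whose expected increase per arrival is bounded. Failure happens at $v$ only if some potential exceeds its threshold, and with $\Delta\ge c\sqrt{\log n}$ this occurs with probability $\exp(-\Omega(\Delta^2/\Delta))$ per step in the randomized analysis. For the deterministic \SLOCAL version, I would apply the method of conditional expectations/pessimistic estimators restricted to the neighborhood. When $e=uv$ is processed, it picks the color minimizing the sum of pessimistic estimators of all vertices within distance $O(1)$. These estimators are computed as if the not-yet-colored edges in the neighborhood (visible to an \SLOCAL algorithm) were colored randomly. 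Each estimator involves only a constant-radius neighborhood, so the choice is local. It cannot increase the total estimator of any vertex's local bad event by more than the estimator's conditional expectation. Initially each vertex's estimator is below $1$ once the deviation terms are set up as products over at most $\Delta$ incident edges with concentration exponent $\Delta^2\eps^2/\Delta\cdot\Delta$ — i.e. I aim for a bound $\Delta\exp(-\Omega(\eps^2\Delta^2))\ll 1$ using $\Delta^2\ge c^2\log n$. That is exactly the reason $\sqrt{\log n}$ suffices: the lookahead lets each vertex's estimator account for all $\Delta$ of its edges jointly rather than suffering the online adversary's union bound over $n$.

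\textbf{Step 3 (conclude).} Since estimators are kept below $1$ and bad events are integer-valued, no bad event occurs. Hence every edge receives a color from the $(1+\eps)\Delta$ palette, and the algorithm has locality $O(1)$. The main obstacle is Step 2. A single edge's choice affects estimators of several vertices simultaneously, so I must use a global sum of estimators that is nonetheless decomposable into local terms. I would try first to show that the sum of estimators over the $O(1)$-ball around $e$ captures all terms that the choice at $e$ changes, so that minimizing the local sum equals minimizing the global sum. The global sum starts below $n\cdot\exp(-\Omega(\Delta^2))$, which is below $1$ for $c$ large.
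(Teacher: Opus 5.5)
Your outer architecture matches the paper's. You simulate \cite{BlikstadSVW25} edge by edge, replace the sampling step by minimizing a global potential $\Phi=\sum_w\Phi_w$ of locally computable terms, and use visibility of future edges. (The paper goes through a deterministic \SLOCALedge{}(5) algorithm and notes that edge-locality $\ell$ gives \SLOCAL-locality $\ell+2$; your direct simulation is equivalent. Also, the state in \cite{BlikstadSVW25} is per edge and color, $P_{ec}$, not per vertex, though your locality conclusion in Step 1 is still right.)

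The gap is in Step 2, and it is quantitative. An estimator built from the at most $\Delta$ edges at one vertex is an Azuma/Chernoff bound over about $\Delta$ bounded steps. It starts at $\exp(-\Omega(\eps^2\Delta))$, as your own intermediate figure $\exp(-\Omega(\Delta^2/\Delta))$ shows, not at $\exp(-\Omega(\eps^2\Delta^2))$. For $\Delta\approx c\sqrt{\log n}$ the sum over $n$ vertices is then far above $1$. Your greedy fallback on a reserved palette needs every vertex to have few failed edges, which is exactly such a per-vertex event. In this regime the local failures you want to exclude in Step 3 each occur with probability roughly $2^{-\Delta}$ in the randomized process: some $\sum_cP_{ec}>1$, some $P_{ec}$ exceeding the cap, or a vertex accumulating many uncolored edges. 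So a sum of local estimators certifying that none of them ever happens starts far above $1$, and the derandomization never gets off the ground.

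The missing idea, from \cite{BlikstadSVW25}, is to let these failures happen and \emph{contain} them. Values $P_{fc}>A=c_A/(\eps^2\Delta)$ are no longer scaled up. A vertex with $2c_K\eps\Delta$ uncolored edges turns bad, which deterministically caps its marked edges, and its later edges take any still-available main color. The potentials then certify only three aggregate invariants: Few Bad Colors, Few Bad Neighbors, and the Bad Vertex Property. Each is controlled by martingales that aggregate $\Theta(\Delta^2)$ steps over the $2$-hop neighborhood of a vertex $w$. Examples are $Q_{U_wC}$, with step size $24A$ and $\Delta^2$ steps, and sums over a set $U$ of $\alpha\Delta$ neighbors of $w$ or over submatchings of a fixed matching decomposition of $G[U]$. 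These have tails $\exp(-\Omega(\eps^{16}\Delta^2))$, which survive the union bound over the $2^{O(\Delta)}$ choices of $C$, $U$, $M$ per vertex and over all $n$ vertices.

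Lookahead does not remove the union bound over $n$; that bound is still taken. Lookahead is what makes these oblivious-adversary estimators computable at all, because $U_w$, $N(w)$, the matchings of $G[U]$ and the step counts all depend on edges that have not yet arrived. This structure also fixes the constant locality: the affected potentials sit at vertices within distance $2$ of $e$ and each reads its $3$-hop neighborhood.
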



In the following, we show how \Cref{thm:SLOCALmain} follows via an \emph{online algorithm}. We note that our general approach for derandomization to obtain \Cref{thm:CONGESTmain,thm:LOCALmainNoND} only requires an algorithm to have low locality. Hence similar results can for example be obtained via \cite{NibbleMethod,CHLPU18} for the regime $\Delta =\Omega(\log n)$. As explained earlier, this leads to a higher $\poly \log n$ round complexity in the \CONGEST model.

\subsection{Relation between Sequential Local Algorithms and Online Algorithms}
\label{sec:tecOverviewRelationOnline}


 An online algorithm is an algorithm that processes its input piece by piece in a sequential manner, making decisions irrevocably at each step without knowledge of future input. The \SLOCAL model has a natural relation to online algorithms; vertices arrive one after the other, and the online/\SLOCAL algorithm makes irrevocable decisions on the output of a node. The central difference between an online algorithm and an \SLOCAL algorithm is that an online algorithm can base its decision on \emph{all} information on nodes that it has already processed and cannot see future inputs, while in \SLOCAL, a vertex is restricted to only read the information stored in the $T(n)$-hop radius around it, but it also has access to all vertices in this radius to come.

 \begin{observation*}
     Any online algorithm whose decisions on the output of a node $v$ only depend on information stored within $T(n)$ hops from $v$ is an \SLOCAL algorithm with locality $T(n)$.
 \end{observation*}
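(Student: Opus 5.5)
The plan is a direct simulation argument: take the online algorithm and let it serve, unchanged, as the \SLOCAL algorithm. The adversarial order in which \SLOCAL processes the nodes is fed to the online algorithm as its arrival order. When node $v$ is processed in \SLOCAL, it gathers everything stored within its $T(n)$-hop neighborhood, namely the input there and the internal states and outputs of the already processed nodes in that ball. It then runs the online algorithm's decision rule for $v$ and stores the resulting state and output at $v$.

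The argument has three steps, carried out in order.

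\begin{enumerate}
\item \emph{Matching executions.} By induction over the processing order, the state recorded at every processed node equals the state the online algorithm would have assigned to it under the same arrival sequence.

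\item \emph{The inductive step.} Suppose this holds for all nodes processed before $v$. By hypothesis, the online decision for $v$ depends only on information stored within $T(n)$ hops of $v$, and this information consists of the inputs and previously computed states there. That is exactly what $v$ can read in \SLOCAL. Moreover, by induction those stored states agree with the online execution. Hence $v$ computes the same output as the online algorithm.

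\item \emph{Correctness.} The overall output coincides with the online algorithm's output on this arrival order. So it inherits the online algorithm's correctness guarantee, which holds against adversarial orders and therefore for the \SLOCAL adversary's order in particular. The locality of the simulation is $T(n)$.
\end{enumerate}

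The only delicate point is the phrase ``depend only on information stored within $T(n)$ hops.'' This must be read as a function of the input and the states of already-arrived nodes in the ball, not of global quantities such as the time step or counters over the whole history. I would fix this interpretation explicitly as the definition of locality of an online algorithm.

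For edge-arrival online algorithms, I would additionally note that an edge's decision can be simulated by one of its endpoints, for instance the endpoint processed later. The locality changes by at most an additive constant under this simulation.

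Future information is simply ignored: \SLOCAL permits reading unprocessed nodes in the ball but does not require it. So no obstacle arises there.
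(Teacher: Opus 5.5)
Your simulation argument, feeding the \SLOCAL processing order to the online algorithm as its arrival order and observing that the decision rule only reads what a processed node can see within $T(n)$ hops, is the same reasoning the paper relies on. The paper states the observation without a formal proof and only adds that determinism, and w.h.p.\ guarantees against an oblivious adversary, carry over because the \SLOCAL order is fixed in advance. Your explicit reading of ``information stored within $T(n)$ hops'' as a function of the inputs and the already-arrived states in the ball is the right one; it matches how the paper, in its locality-1 warm-up, recomputes the $P_{fc}$ values lazily when $f$ arrives instead of writing them to not-yet-processed edges.
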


 If the online algorithm is deterministic, then so is the \SLOCAL algorithm. If the online algorithm holds w.h.p.\ against an oblivious adversary, it holds w.h.p.\ in the \SLOCAL model. The reason for the latter is that the fixed graph arrives in an arbitrary order in the \SLOCAL model, which can be seen as an oblivious adversary in the online model.
 
While existing online algorithms operate under edge arrivals, the \SLOCAL model is based on \emph{vertex arrivals}. To achieve the precise bounds in \Cref{thm:LOCALmainNoND,thm:CONGESTmain}, lifting the \SLOCAL-based \Cref{thm:SLOCALmain} to the \CONGEST and \LOCAL model respectively is insufficient, as it is lossy in the locality radius. To this end and also since edge arrivals are intrinsic to online edge coloring, we introduce an edge-based variant of the \SLOCAL model; it is similar to the \SLOCAL model on the line graph of the input graph.


\paragraph{\boldmath Edge \SLOCAL.}
Formally, we define the \SLOCALedge model as follows. The edges are processed sequentially in some arbitrary order chosen by an adversary. When edge $e \in E$ is processed, it chooses its internal state and output based on the information available within distance $T(n)$ in the input graph $G$. This information includes edges arriving in the future but it also includes the internal states of the edges processed before edge $e$. For distances between edges, we say that $d_G(e,f):=d_{L(G)}(e,f)$, where $L(G)$ is the \emph{line graph} of $G$: $L(G):=(E,\{\{uv,vw\}:uv,vw\in E\})$, i.e., it has the edges of $E$ as vertices and two edges are connected if and only if they share a vertex in $G$.
In other words, it counts how many vertices are between $e$ and $f$, see \Cref{fig:path} for an example. 

\begin{figure}[!ht]
    \includegraphics[page=1]{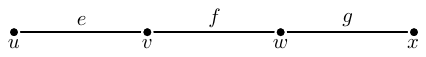}
    \centering
    \caption{In this graph $d(e,f)=1$ and $d(e,g)=2$.}
    \label{fig:path}
    \vspace{-0.7em}
\end{figure}

One can choose here, whether with locality $T(n)$, we do or do not know the degrees of the endpoints of the last edge. For example, if in \Cref{fig:path} we have locality $1$, does $e$ know the degree of $w$?
In this paper, the weaker version suffices -- where we do not know the vertex degrees. If we do know the vertex degrees, then the \SLOCALedge model is equivalent to the \SLOCAL on the line graph of the input graph. 

\begin{observation*}
     Any edge-arrival online algorithm that bases its output for an edge $e$ only on information within $T(n)$ hops from $e$ is an \SLOCALedge algorithm with locality $T(n)$.
\end{observation*}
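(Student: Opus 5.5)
The statement is essentially a simulation claim: we have to show that an edge-arrival online algorithm whose outputs depend only on information within $T(n)$ hops can be run in the \SLOCALedge model with locality $T(n)$. The plan is to simulate the online algorithm edge by edge, feeding it the adversarial \SLOCALedge processing order as the online arrival sequence, and to check that each step only reads what \SLOCALedge makes available.

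Fix any order $e_1,e_2,\dots,e_m$ chosen by the \SLOCALedge adversary. We interpret this order as the online arrival sequence and prove by induction on $i$ that, when $e_i$ is processed, it can compute exactly the state and output the online algorithm $\mathcal{A}$ would assign to $e_i$ on the sequence $e_1,\dots,e_m$.

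\emph{Inductive step.} By hypothesis, $\mathcal{A}$'s decision for $e_i$ depends only on the following data:
\begin{itemize}
\item[(a)] the graph structure within $T(n)$ hops of $e_i$ in $L(G)$, restricted to edges that have already arrived;
\item[(b)] the internal states (colors and any auxiliary data) of the already-arrived edges in that ball.
\end{itemize}
In \SLOCALedge, $e_i$ sees the whole $T(n)$-ball, including future edges. It identifies which edges have arrived by checking which ones carry a stored state, and simply ignores the others. This recovers (a). By induction, the stored states of $e_1,\dots,e_{i-1}$ coincide with those produced by $\mathcal{A}$, which gives (b). Edge $e_i$ then runs $\mathcal{A}$'s decision rule and stores the result as its internal state, maintaining the invariant. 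Since the simulation reproduces $\mathcal{A}$'s run on a valid input sequence, the final output equals $\mathcal{A}$'s output and is therefore correct.

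\emph{Randomized algorithms.} If $\mathcal{A}$ is randomized and succeeds w.h.p.\ against an oblivious adversary, the same argument applies. The edge order is fixed by the \SLOCALedge adversary independently of the random bits, exactly as an oblivious adversary fixes the sequence in advance, so the success probability carries over.

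\emph{Main obstacle.} The only delicate point is to make precise what ``information within $T(n)$ hops'' means. The online algorithm's locality must be measured in the same metric $d_{L(G)}$, and it must not use quantities \SLOCALedge withholds. The natural example is the degree of a far endpoint of a boundary edge: in the weaker variant used here, $\mathcal{A}$'s rule must not depend on such degrees. I would state this convention explicitly, assume the online algorithm's locality is defined with the same visibility, and observe that global parameters such as $n$ and $\Delta$ are known to all edges in both models.
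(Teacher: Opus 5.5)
Your proposal is correct and follows the same route as the paper. The paper justifies the observation only informally: the \SLOCALedge processing order plays the role of the online arrival sequence, each edge recomputes its decision from the stored states of already-processed edges in its ball, and a fixed graph with a fixed order acts as an oblivious adversary, so w.h.p.\ guarantees carry over. Your explicit induction, together with your convention that ``information within $T(n)$ hops'' is measured in $d_{L(G)}$ and excludes data \SLOCALedge withholds (such as degrees of far endpoints), makes precise what the paper leaves implicit.
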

Below, we explain how an \SLOCALedge algorithm can be  run in the distributed setting at the cost of a suitable parallel schedule of edges. As our main technical contribution, we show the following theorem about the \SLOCALedge complexity of edge coloring. For a parameter $t$ and some computational model $\mathsf{MODEL}$ the complexity class $\mathsf{MODEL}(t)$ denotes the class of problems that have complexity at most $t$ in the model. 
\begin{restatable}{theorem}{ThmOnlineisESLOCAL}\label{thm:online_is_ESLOCAL}
    For any constant $\eps>0$, there is a randomized \SLOCALedge{}(1) algorithm and a deterministic \SLOCALedge{}(5) algorithm to color any $n$-node graph with maximum degree $\Delta\ge c\sqrt{\log n}$ with $(1+\eps)\Delta$ colors, for some sufficiently large constant $c>0$.
\end{restatable}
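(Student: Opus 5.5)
The plan is to start from the online edge-coloring algorithm of \cite{BlikstadSVW25} and show that each color decision depends only on edges within constant line-graph distance of the arriving edge. Then the observation relating edge-arrival online algorithms to \SLOCALedge gives the randomized \SLOCALedge{}(1) claim. The deterministic \SLOCALedge{}(5) claim comes from derandomizing this local process with the method of conditional expectations, applied locally, using access to not-yet-arrived edges within a bounded radius.

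\textbf{Step 1: locality of the randomized algorithm.} Recall the structure of \cite{BlikstadSVW25}. Each vertex $v$ and color $c$ carry a fractional load (a "budget" or probability mass) that tracks how much of color $c$ has been used or reserved at $v$. When an edge $e=uv$ arrives, the algorithm samples a color from a distribution determined only by the current states of $u$ and $v$, namely the per-color quantities at the two endpoints. It then updates those states. If the sampled color conflicts, it falls back to greedy, and the slack of $\eps\Delta$ extra colors absorbs this.

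The point to verify is that the distribution for $e$ is a function only of the history of edges incident to $u$ or $v$, i.e.\ edges at line-graph distance $\le 1$. Hence the rule is \SLOCALedge{}(1). The analysis should carry over unchanged, because the \SLOCALedge adversary fixes the graph and the arrival order in advance, so it is oblivious. I would then re-check that the concentration argument yields a failure probability of $n^{-\Omega(1)}$ per vertex already when $\Delta\ge c\sqrt{\log n}$.

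\textbf{Step 2: derandomization with lookahead.} For the deterministic version, I would replace each random choice with a deterministic choice that minimizes a pessimistic estimator, in the style of conditional expectations. The estimator is a sum over "bad events", each being a vertex/color budget exceeding its threshold. Every bad event touched by $e$'s choice lives at $u$ or $v$. The estimator for such an event depends on the future edges incident to $u$ and $v$, and on how those edges' own states evolve. Those states are in turn influenced by their other endpoints.

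So, to evaluate the conditional expectation of all affected potentials, $e$ must see:
\begin{itemize}
\item edges at distance 1 (the neighbors' incident edges);
\item their future randomness, which depends on radius-2 states;
\item the potentials of vertices whose budgets they affect.
\end{itemize}
Accounting carefully, the radius accumulates to 5 in the line graph. The goal is to choose potentials so that the total estimator never increases and starts below 1. Since the estimator upper-bounds the indicator of failure, it then certifies that no bad event ever occurs.

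\textbf{Main obstacle: $\Delta\approx\sqrt{\log n}$.} Deterministic online coloring needs $\Delta=\Omega(\log n)$ by \cite{BMN92}. So the estimator cannot be a global union bound over $n$ vertices using only past information; otherwise the starting value would be about $n e^{-\Theta(\Delta)}$, which is not below 1.

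My first attempt would be to use the lookahead so that each bad event's potential is initialized knowing the actual future degree profile near it. The aim is an exponential-moment potential with per-event failure $e^{-\Theta(\Delta^2)}=n^{-\Omega(c^2)}$. That would require concentration over roughly $\Delta^2$ weakly dependent choices within the 2-hop neighborhood, rather than over $\Delta$ choices. I expect this to be the main obstacle. Getting the $\Delta^2$ exponent, i.e.\ a Chernoff-type bound over the 2-hop neighborhood via bounded differences, is exactly what makes $c\sqrt{\log n}$ suffice, and it is what dictates the locality constant $5$.
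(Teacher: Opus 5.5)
Your randomized half is fine and matches the paper. In \cref{alg:edge-coloring} the values $P_{ec}$, the badness counters of $u,v$ and their bad-neighbor counts can all be recomputed from data stored on already-arrived edges adjacent to $e$, and the \SLOCALedge{} adversary is oblivious. You do need the full algorithm with the cap $A$ and the bad-vertex branch; the ``per-vertex budget plus greedy fallback'' process you sketch is the simplified version, which only works for $\Delta=\Omega(\log n)$.

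The deterministic half has a genuine gap, exactly at the point you call the main obstacle and leave open. As you set it up, it cannot be closed.
\begin{itemize}
\item Your bad events concern a single vertex's budget. That budget changes only when one of its at most $\Delta$ incident edges arrives, so $N=O(\Delta)$.
\item Since $\lambda\le NS$, any Azuma-type estimator for such an event starts at $\exp(-2\lambda^2/(S^2N))\ge e^{-2N}$. Summed over $n$ vertices this is far above $1$ when $\Delta\approx\sqrt{\log n}$, and lookahead does not change $N$.
\item Worse, at this degree the randomized process really does produce such individual failures somewhere (bad vertices, colors with $P_{ec}>A$). So no pessimistic estimator for ``no bad event ever occurs'' can start below $1$.
\end{itemize}
The missing idea is to stop trying to prevent individual failures; the algorithm tolerates them through the cap $A$ and the bad-vertex branch. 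Instead, certify only aggregate invariants: few bad colors per edge, few bad neighbors per vertex, and the bad vertex property. These are driven by martingales aggregated over $\Theta(\Delta)$ objects:
\begin{itemize}
\item $Q_{U_wC}$ over color sets with $|C|=\eps^5\Delta$, with step size $24A$, $\Delta^2$ steps and $\lambda=\eps^6\Delta/2$;
\item $H_U$, together with the matching sums $K_{MC}$ and their martingale part $L_{MC}$, over neighbor sets with $|U|=\alpha\Delta$ and sub-matchings of a canonical matching partition of $G[U]$ (the matching structure is what keeps the step size at $O(A)$);
\item $X_{C,U}$.
\end{itemize}
Each of these takes order $\Delta^2$ nontrivial steps. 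This gives initial potentials $\exp(-\Omega(\eps^{16}\Delta^2))\le 2^{-100\Delta}n^{-100}$ for $\Delta\ge c\sqrt{\log n}$, which survives the union over the $2^{O(\Delta)}$ subsets per vertex. Then $\Phi<1$ yields the invariants and hence $O(\eps\Delta)$ marked edges per vertex.

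Your locality count also rests on two faulty premises. First, the estimator must be a function of the current state and the static graph only. The paper uses $\phi(Z^{(t)}-Z^{(0)},s_Z(t),\lambda,S,N)$, which is non-increasing in expectation by Hoeffding's lemma. A conditional expectation over future random choices will not do, because those choices propagate along unbounded chains of arrivals and are not local. Second, it is not true that only events at $u$ and $v$ are touched. Coloring $e$ changes $P_{fc}$ for every not-yet-arrived $f$ adjacent to $e$, and this moves the potentials $\Phi_w$ for all $w$ within distance $2$ of $e$. Each $\Phi_w$ is determined by $w$'s $3$-hop neighborhood: arrival time, color, $P$-values and line for arrived edges, and mere existence for future edges. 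That is where access to the future enters, e.g.\ through $U_w$, $N$ and the matchings of $G[U]$. So the locality is $2+3=5$, not an accumulation of ``future randomness'' radii.
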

\Cref{thm:online_is_ESLOCAL} immediately implies that the \SLOCAL locality of the respective edge coloring problem is also constant, proving \Cref{thm:SLOCALmain}. 

\paragraph{Lower bounds in \boldmath \SLOCALedge.}
Before we go into the technical details of \Cref{thm:online_is_ESLOCAL}, let us state the next theorem lower bounding the required locality for edge coloring in \SLOCALedge.
\begin{restatable}{theorem}{ESLOCALLB}\label{thm:LB_det}
    Any deterministic edge coloring algorithm in \SLOCALedge{} of locality $\le 2$ needs
    to use $2\Delta-1$ colors, even on graphs of degree $\Delta \ge c\sqrt{n}$, for some constant $c>0$.

    Any randomized edge coloring algorithm in \SLOCALedge{} of locality $\le 2$ needs
    to use $2\Delta-1$ colors on graphs of degree $\Delta = O(\sqrt{\log n})$.
\end{restatable}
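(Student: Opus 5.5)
Both parts go by an adversary argument that exploits the limited view of locality $\le 2$. The adversary builds the graph incrementally and chooses the arrival order. It keeps the part of the graph that the current edge can see fixed, and it hides the rest of the structure beyond distance $2$. The aim is to force two edges, colored independently, to be adjacent at a common vertex $v$. Each sees $\Delta-1$ distinct colors at $v$ that it must avoid, and the two sets can be disjoint, so $2\Delta-1$ colors are needed.

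\textbf{Deterministic part.} Fix a palette of size $k\le 2\Delta-2$. We will force a conflict. Take vertices $u$ and $w$ that will eventually both be endpoints of a final edge $uw$. The adversary first builds many disjoint "gadgets". Each gadget is a vertex $x$ together with $\Delta-1$ incident edges $xy_i$, presented in an order chosen so that, when each edge $xy_i$ arrives, its locality-$2$ view is isomorphic across gadgets. The edge sees $x$, the $y_j$'s, and at most one hop further, but the identity of the future partner of $x$ is outside its view. A deterministic algorithm therefore maps each view to a color, so each gadget ends up with a set $S_x$ of $\Delta-1$ colors at $x$ that depends only on the view and the IDs. Since $\binom{k}{\Delta-1}$ is finite, pigeonhole over many gadgets (polynomially many suffice, giving $n=O(\Delta^2)$, i.e.\ $\Delta\ge c\sqrt n$) yields two gadgets $x,x'$ with $S_x\cup S_{x'}$ of size at least $\Delta$ in a useful configuration. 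The cleanest approach is to choose gadget vertices $x$ and $x'$ whose sets differ sufficiently, then identify them via a path of length $2$: $x$ and $x'$ become the endpoints of a middle structure. Concretely, the adversary finally adds the edge $e=xz$ and the edge $zx'$, with $z$ fresh, or directly the edge $xx'$. That edge must avoid $S_x\cup S_{x'}$. We then need $|S_x\cup S_{x'}|\ge k$, which requires $S_x\cap S_{x'}=\emptyset$ relative to $k\le 2\Delta-2$. Intersections may shrink the union, so we instead use a chain argument. If every pair of gadgets shares a color, the adversary builds gadgets whose vertex $x$ is reused at distance $3$ and merges them. The main obstacle is making the views genuinely indistinguishable: because locality-$2$ views include the edges that will arrive in the future, the gadgets must be constructed so that the final edge $xx'$ lies at distance $\ge 3$ from every earlier edge's perspective \emph{except} through $x$ itself. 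I would handle this by hanging each $y_i$ off a private subtree of depth $2$, so that every view is a fixed tree shape, and by processing the edges of $x$ before revealing the endpoints that lie at distance $2$ away. The counting of the needed IDs then gives the bound $\Delta\ge c\sqrt n$.

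\textbf{Randomized part.} The same construction is used, but now against an oblivious adversary. Each gadget's color set $S_x$ is a random variable whose distribution depends only on the (isomorphic) view. For $\Delta=O(\sqrt{\log n})$ we can afford $n/\Delta^2\ge 2^{\Theta(\Delta^2)}\gg \binom{k}{\Delta-1}^2$ independent-looking gadgets. There is then a pair of fixed sets $(A,B)$ with $|A\cup B|\ge k$ such that the probability that some pair of gadgets realizes $(A,B)$ is large, by a second-moment or pigeonhole argument over the support. The adversary cannot adapt, so it instead pairs up all gadgets with a matching of final edges $xx'$. The main difficulty is showing that, for some pair, the realized sets cover the palette with noticeable probability. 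I would argue that the marginal distribution over $(\Delta-1)$-subsets has some set $A$ of probability at least $\binom{k}{\Delta-1}^{-1}$, and that its complement-type set $B$ likewise appears. With $2^{\Theta(\Delta^2)}$ disjoint gadget pairs, independence across far-apart gadgets (the views are disjoint, though the randomness may be shared, so I would condition on private randomness or use that success must hold w.h.p.) gives failure with probability bounded away from $1/\poly(n)$, contradicting a w.h.p.\ guarantee.
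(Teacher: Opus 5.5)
\textbf{Genuine gap: the proposal aims at the wrong combinatorial target.}

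Your plan needs two gadgets whose center color sets $S_x,S_{x'}$ are \emph{disjoint}, so that one connecting edge must avoid $2\Delta-2$ colors. Indistinguishability gives you the opposite.

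\begin{itemize}
\item Without IDs, a deterministic algorithm colors isomorphic views identically. This is the paper's explicit assumption; with comparable IDs, stars in such constructions see the other centers' IDs at distance $2$ and can break symmetry. So $S_x=S_{x'}$, and an edge joining the two centers needs only one fresh color.
\item Pigeonhole likewise yields gadgets with \emph{equal} sets, not disjoint ones. Moreover, pigeonholing over $\binom{k}{\Delta-1}=2^{\Theta(\Delta)}$ classes needs exponentially many gadgets, which contradicts your claimed $n=O(\Delta^2)$.
\item A direct edge $xx'$ puts $x'y'_j$ at distance $2$ from $xy_i$. The later star then sees the earlier star's colors and can simply copy them.
\item The path $x$--$z$--$x'$ keeps the two stars' views independent, but forces only $\Delta+1$ colors.
\item The ``chain argument'' is not specified.
\item ``Revealing endpoints later'' is not available. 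In this model the graph and arrival order are fixed in advance, and future edges within the radius are visible.
\end{itemize}

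\textbf{The missing idea.} Exploit the forced equality by putting many identical stars at one common vertex. Let $\Delta$ stars with $\Delta-1$ leaves each arrive first, then join all $\Delta$ centers to a single fresh vertex $r$.

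\begin{itemize}
\item Leaf edges of distinct stars are at distance $3$ (namely $xy$--$xr$--$rx'$--$x'y'$).
\item Hence, under locality $\le 2$, every star sees the same view: its own edges, the edge to $r$, and the other future edges at $r$.
\item So all stars use the same set $S$ of $\Delta-1$ colors at their centers.
\item The $\Delta$ edges at $r$ must avoid $S$ and be pairwise distinct. This forces $2\Delta-1$ colors with $n\approx\Delta^2$.
\end{itemize}

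\textbf{The randomized part.} Your claim that a ``complement-type set $B$ likewise appears'' is unfounded. An algorithm that ignores its randomness always outputs the same set, so no pair ever has disjoint sets. The same construction fixes this:

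\begin{itemize}
\item The stars' color sets are i.i.d.\ from one distribution.
\item Some set has probability at least $\binom{2\Delta-2}{\Delta-1}^{-1}\ge 2^{-2\Delta}$.
\item All $\Delta$ stars of one gadget receive that set with probability at least $2^{-2\Delta^2}$.
\item Taking $2^{2\Delta^2}$ disjoint copies, so $\Delta\approx\sqrt{\log n}$, gives constant failure probability.
\end{itemize}
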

\Cref{thm:LB_det} shows that the locality in \SLOCALedge of edge coloring with few colors has to be at least $3$, while our upper bound shows that locality $5$ suffices. 

\paragraph{Scheduling via conflict graphs.} Instead of translating the locality 5 online algorithm (or one that can also have local access to the future) in a blackbox manner, we further optimize the online algorithm to allow more edges to be processed simultaneously, even when they are close together.  We identify which edges can be handled in parallel by constructing a conflict graph. This conflict graph then enables us to compute a shorter schedule for parallelizing the execution of the online algorithm in the distributed setting. 
As a result we obtain a distributed algorithm using $O(\Delta^4+\log^* n)$ rounds. 
In combination with the explained divide-and-conquer approach we obtain  \Cref{thm:LOCALmainNoND}. 

\subsection{The Local Greedy Online Algorithm}
\paragraph{Online coloring.}
\label{sec:tecOverViewOnline}
In online edge coloring, the input is an unknown graph of maximum degree $\Delta$ (this value is known in advance). At each time $t = 1, \dots , m$, an edge $e_t$ is revealed. An online algorithm must decide for this edge on its arrival what color to assign it, immediately and irrevocably. The subgraphs induced by each color must form a matching; put otherwise, each node must have at most one edge of each color. 

\paragraph{Online edge coloring algorithm of \cite{BlikstadSVW25}.}
For this overview, we describe a simplified version of the randomized online algorithm of \cite{BlikstadSVW25} (see \cref{alg:edge-coloring} for the full algorithm). This simplified version works when $\Delta \ge \log n$, but with a more complicated algorithm~\cite{BlikstadSVW25} shows how one can make it work starting at $\Delta = c\sqrt{\log n}$ for a sufficiently large constant $c$; the latter algorithm is the one which we use in the body of our paper. Certain bad events happen in the execution of the randomized algorithm with probability $2^{-\Delta}$, and when $\Delta = \Omega(\log n)$ this is enough for ``with high probability'' concentration. By looking at larger neighborhoods, \cite{BlikstadSVW25} manages to bound the probability of bad events by $2^{-\Delta^2}$ instead, allowing for $\Delta$ being as low as $\approx \sqrt{\log n}$. This condition on $\Delta$ cannot be lowered further for \emph{any} randomized online algorithm: \cite{BMN92} show that any randomized online edge coloring algorithm needs to use $2\Delta-1$ colors whenever $\Delta = O(\sqrt{\log n})$.

\begin{tcolorbox}
Fix a palette $C = \{1,2,\dots,\Delta\}$.\\[0.05cm]
For each $e \in E$ and $c \in C$, initialize $P_{ec} = (1-\eps)/\Delta$.\\[0.05cm]
When an edge $e$ arrives:
\begin{itemize}
    \item \textbf{Step~1:} Sample color $c$ with probability $P_{ec}$; if no color is chosen (i.e., with probability $1 - \sum_{c} P_{ec}$), leave edge $e$ uncolored.
    \item \textbf{Step~2:} For each adjacent edge $f$ and color $c\in C$, update:
    \[
    P_{fc} = \begin{cases}
        0 & \text{if } c \text{ was assigned to } e,\\
        P_{fc} / (1 - P_{ec}) & \text{otherwise}.
    \end{cases}
    \]
    \item \textbf{Step~3:} If the edge was left uncolored, fall back to color it greedily with a separate palette $\{\Delta+1, \Delta+2, \ldots\}$.
\end{itemize}
\end{tcolorbox}

The algorithm maintains values $P_{ec}$ for each (future) edge $e$ and color $c\in [\Delta]$, initially set to $P_{ec} = \frac{1-\eps}{\Delta}$. This ensures that each edge has approximately probability $\sum_{c\in [\Delta]} P_{ec} = 1-\eps$ of receiving a color from the main palette; the updating rule $P_{fc}=0$ ensures that there are no conflicts for edges colored in Step~1. Edges that remain uncolored from the main palette $[\Delta]$ are forwarded to a standard greedy algorithm using a fresh palette with at most $2\Delta'-1$ additional colors, where $\Delta'$ is the maximum degree in the subgraph forwarded to greedy. Since each edge has probability $\approx 1-\eps$ of being colored from $[\Delta]$, we expect at most $\Delta' \approx \eps\Delta$ edges incident to each vertex to be forwarded to greedy. Consequently, the total number of colors used is at most $\Delta + 2\Delta'-1 = \Delta + O(\eps \Delta)$.  

When edge $e$ arrives, a color $c$ is sampled according to the probability distribution defined by weights $P_{ec_1}, P_{ec_2}, \ldots$, or the edge remains uncolored with probability $1-\sum_c P_{ec}$. The algorithm then updates the values $P_{fc}$ for neighboring edges $f$ accordingly. For the color $c$ assigned to $e$, we set $P_{fc} = 0$ since color $c$ cannot be reused at edge $f$. For all other colors not used at $e$, we scale up $P_{fc}$ by a factor of $\frac{1}{1-P_{ec}}$ to preserve the marginal probabilities that color $c$ will be used at the shared vertex of edges $e$ and $f$.

By inspecting the pseudo-code, we can observe that it can be implemented (randomized) with low locality. In \SLOCALedge{}, each arriving edge need only compute its $P_{ec}$ values by examining its already-arrived neighboring edges, thus yielding a randomized \SLOCALedge{} algorithm with locality $1$.

The analysis in \cite{BlikstadSVW24bipartite} critically depends on showing that $\sum_c P_{ec}$ remains well-concentrated around their initial expectation for future arriving edges. Specifically, two key properties must hold:
\begin{itemize}
\item $\sum_c P_{ec} < 1$, ensuring that the values form a valid probability distribution and the algorithm is well-defined.
\item $\sum_c P_{ec} > 1-2\eps$, guaranteeing that the probability of an edge being forwarded to greedy is at most $2\eps$.
\end{itemize}

In \cite{BlikstadSVW25}, this analysis is performed by constructing several martingales that control the required concentration bounds. We refer readers to \cite{BlikstadSVW25} for additional intuition behind the algorithm and a detailed analysis.

\paragraph{Derandomizing the algorithm.}
Derandomizing the algorithm while preserving low locality is far from trivial and most of \Cref{sec:online_locality} is spent to this end. To achieve this, we must carefully examine the randomized algorithm's analysis and ensure that both the algorithm and its underlying analysis remain local. Fortunately, the analysis in \cite{BlikstadSVW25} relies primarily on martingales and Azuma's inequality, where each martingale controls the probability of specific ``bad'' events such as $\sum_{c} P_{ec}$ growing too large or vertices receiving too many incident edges forwarded to greedy. 

Crucially, all martingales in this analysis are \emph{locally defined}, which enables the application of standard derandomization techniques such as conditional expectations or pessimistic estimators. We employ pessimistic estimators, which yield a potential function that, while somewhat complex to describe, ensures deterministic performance with identical guarantees to the randomized algorithm when the algorithm selects colors that minimize this potential at each step.

In essence, every application of Azuma's inequality or Chernoff bounds in \cite{BlikstadSVW25} translates to a \emph{local} potential function centered at some vertex, computable using only information within that vertex's $3$-hop neighborhood. This locality property allows our \SLOCALedge{} algorithm to identify all relevant vertices (within distance $2$) upon edge arrival, compute their potential functions (by examining the distance-$3$ neighborhood), and make locally optimal deterministic decisions that minimize potential changes. In total, this algorithm has \SLOCALedge locality $5$.

\begin{remark}
We note that \cite{BlikstadSVW25} also designs a \emph{deterministic} online algorithm, by black-box derandomizing a randomized online algorithm that works against an \emph{adaptive adversary}.  Such a black-box derandomization is always possible in the online setting, but not in the distributed setting as it has very high locality. 
Instead, we use the randomized algorithm of \cite{BlikstadSVW25} that works against the weaker \emph{oblivious adversary}. This is because that algorithm also works when $\Delta \approx \sqrt{\log n}$, while their deterministic (or randomized algorithm against an adaptive adversary) works only when $\Delta \ge \log n$.
In the online setting, it would be impossible to derandomize the algorithm against an oblivious adversary. This is because a lower bound by \cite{BMN92} that says that any deterministic online edge coloring algorithm needs to use $2\Delta-1$ colors whenever $\Delta = O( \log n)$.

Importantly, we can overcome this online lower-bound when working in the \SLOCALedge{} model. This is since the \SLOCALedge{} algorithm can, unlike a standard online algorithm, see the non-arrived edges. For example, if we have a path $a$--$b$--$c$--$d$, and the edge $(a,b)$ is arriving first. Then, when the edge $(c,d)$ arrives, in the \SLOCALedge{} model (if locality $\ge 2$) it will be able to see that it is distance 2 away from the edge $(a,b)$, and can use this fact in the decision-making (while an online algorithm will not know this).
\end{remark}

\subsection{Running Sequential Local Algorithms in \LOCAL}\label{sec:tec:scheduling}
Our distributed algorithms follow from the efficient parallelization of the deterministic \SLOCALedge algorithm with locality $5$ from \Cref{thm:online_is_ESLOCAL}, similar to the approach to run \SLOCAL algorithms in the \LOCAL model \cite{GhaffariKM17}. We obtain a generic translation of \SLOCALedge to the \LOCAL model. 

\vspace{-.5em}
\begin{restatable}{corollary}{CorESLOCAL}
\label{cor:complexity}
For any $\ell> 0$, we have that 
\begin{equation*}
    \text{\SLOCALedge{}}(\ell)\subseteq \LOCAL(O(\ell\cdot \Delta^\ell)+O(\log^* n)).
\end{equation*}
\end{restatable}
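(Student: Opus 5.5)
We simulate the \SLOCALedge algorithm in \LOCAL by processing, in parallel, sets of edges that are far enough apart that their decisions cannot interact, following the approach of \cite{GhaffariKM17}. Define the conflict graph $H$ on vertex set $E$ by connecting two edges $e,f$ whenever $d_G(e,f)\le 2\ell$, i.e., $H=L(G)^{2\ell}$. Edges that are non-adjacent in $H$ have disjoint radius-$\ell$ neighborhoods. So if two such edges are processed in the same round, neither can see the other's freshly written state.

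The plan has three steps.

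\textbf{Step 1: bound the degree of $H$.} In $L(G)$ every edge has at most $2(\Delta-1)<2\Delta$ neighbors. Hence the maximum degree of $H$ is $\Delta_H\le \sum_{i=1}^{2\ell}(2\Delta)^i=O(\Delta)^{2\ell}$.

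This naive bound gives a schedule with $\Delta^{O(\ell)}$ phases, not $\Delta^\ell$. To reach $O(\ell\Delta^\ell)$, refine the conflict notion. Two edges $e,f$ only truly conflict if the state written by one can lie in the radius-$\ell$ view of the other. Since the state of $e$ is stored at $e$ itself, the relevant conflict is $d_G(e,f)\le \ell$, not $2\ell$. The point is that processing sequentially in an order consistent with a proper coloring of $L(G)^{\ell}$ is simply one admissible adversarial order. Within a color class no edge sees another's output, so parallel processing within a class is equivalent to some sequential order.

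Counting vertices reached by paths of length at most $\ell$ gives $\Delta_{L(G)^\ell}\le \sum_{i\le \ell}(2\Delta)^i$. To obtain $O(\Delta^\ell)$ rather than $2^\ell\Delta^\ell$, count via $G$: an edge within line-distance $\ell$ of $e$ is determined by an endpoint of $e$ and a walk of at most $\ell$ further vertex steps, giving $O(\Delta^\ell)$ such edges. Pinning down these constants is the main obstacle; if it fails, the stated bound holds up to the constant in the exponent's base.

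\textbf{Step 2: color $H$.} Compute a proper $O(\Delta_H)$-coloring of $H$ in \LOCAL, using Linial's algorithm \cite{linial92} followed by standard color reduction. One round on $H$ is simulated by $O(\ell)$ rounds on $G$, so this costs $O(\ell\log^* n)$ rounds, plus $O(\ell)$ rounds per reduction step if we reduce all the way down. To avoid extra overhead, we do not reduce further than $O(\Delta_H)=O(\Delta^\ell)$ colors, which Linial's algorithm reaches in $O(\log^* n)$ iterations, each costing $O(\ell)$ rounds. Since $\ell$ is constant in our application, this matches the $O(\log^* n)$ term.

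\textbf{Step 3: simulate the schedule.} Iterate through the $O(\Delta^\ell)$ color classes. In phase $j$, every edge of color $j$ gathers its radius-$\ell$ neighborhood, including the states of all previously processed edges, in $O(\ell)$ rounds. It then computes its output exactly as the \SLOCALedge algorithm would and stores it. Correctness follows because the resulting execution equals the sequential run under the order "by color class, arbitrary within a class". Since the \SLOCALedge algorithm is correct for every order, the output is correct. The total cost is $O(\ell\cdot\Delta^\ell)+O(\log^* n)$ rounds.
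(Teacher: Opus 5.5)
Your overall plan matches the paper's proof: properly color $L(G)^\ell$ so that same-colored edges are more than $\ell$ apart, then process the color classes one after another with $O(\ell)$ rounds each. The detour through $2\ell$ is unnecessary, since each edge writes only its own state, as you note. Step~1 is also not an obstacle. An edge at line-distance exactly $k\ge 1$ from $e=\{u,v\}$ is determined by an endpoint of $e$ and a path of $k$ further vertices in $G$. Hence $\Delta(L(G)^\ell)\le 2\sum_{k=1}^{\ell}\Delta^k\le 4\Delta^\ell$ for $\Delta\ge 2$.

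The real gap is in Step~2. Linial's algorithm does not reach $O(\Delta_H)$ colors in $O(\log^* n)$ rounds; it only gets down to $O(\Delta_H^2)$ colors. If you stop there and use that coloring as the schedule, you get $O(\Delta^{2\ell})$ phases and $O(\ell\cdot\Delta^{2\ell})$ rounds, not $O(\ell\cdot\Delta^\ell)$. The ``standard'' reduction that eliminates one color class per round does not help either, because going from $O(\Delta_H^2)$ to $\Delta_H+1$ colors that way also takes $O(\Delta_H^2)$ rounds.

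The missing ingredient is an algorithm that produces an $O(\Delta_H)$-coloring in $O(\Delta_H+\log^* n)$ rounds, such as the one of Barenboim, Elkin, and Kuhn~\cite{BarenboimEK14} that the paper uses. Simulated on $G$, it costs $O(\ell\cdot\Delta^\ell+\ell\log^* n)$ rounds, which fits the budget. The color-reduction overhead you tried to avoid is linear in $\Delta_H=O(\Delta^\ell)$, so it is no larger than executing the schedule itself. The factor $\ell$ in front of $\log^* n$ also appears in the paper's argument, which treats $\ell$ as a constant when simulating $L(G)^\ell$ on $G$.
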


\paragraph{Proof idea of  \Cref{cor:complexity}.} From a \LOCAL model point of view, the purely sequential \SLOCALedge model seems to be extremely powerful. The only weak point of an algorithm is that it has to compute a valid solution for \emph{any} arrival order of the edges. While the computed solution can depend on that order (as long as it solves the desired problem), the algorithm cannot decide on the order. 

In order to obtain an efficient \LOCAL algorithm for a given \SLOCALedge algorithm $\mathcal{A}$ with locality~$\ell$, we compute a schedule to process the edges as follows. We compute a distance-$\ell$ edge coloring of $G$, i.e., edges that have the same color are at least $\ell$ hops apart. Such a coloring requires $O(\Delta^\ell)$ colors and can be computed in $O(\Delta^\ell+\log^*n)$ rounds. We define an order $\pi$ on the edges by ordering all edges according to their color, breaking ties arbitrarily for edges with the same color.  

Our final \LOCAL algorithm $\mathcal{B}$ will simulate the \SLOCALedge algorithm $\mathcal{A}$ according to the order $\pi$. Let us denote this latter algorithm by $\mathcal{A}_{\pi}$.
To simulate $\mathcal{A}_{\pi}$, algorithm $\mathcal{B}$  iterates through the color classes. Now, we compute the output of edges within one color class in parallel using $\mathcal A$ in $\ell$ rounds.  The reason this can be done in parallel is that any two edges with the same color need to have distance at least $\ell$ and hence their computed outputs cannot influence each other in $\mathcal{A}_{\pi}$. This process takes $O(\ell\cdot \Delta^\ell)$ rounds. 

\medskip

Applying \Cref{cor:complexity} to $(1+\eps)\Delta$-edge coloring would result in an $O(\Delta^5+\log^*n)$-round algorithm.  As described in \Cref{sec:tecOverviewRelationOnline}, we further optimize the online algorithm to allow more edges to be processed simultaneously. 
As a result we obtain the following theorem, where \Cref{thm:LOCALLBDelta} shows that it  is tight up to an $\tilde{O}(\Delta^2)$ factor by providing an $\tilde{\Omega}(\Delta^2+\log^* n)$ lower bound for the problem.

\begin{restatable}{theorem}{LOCALNoNDNoDS}
\label{thm:LOCALmainNoNDNoDS}
For any constant $\eps>0$, there is a deterministic \LOCAL algorithm to edge color any $n$-node graph with maximum degree $\Delta\geq c\sqrt{\log n}$ with $(1+\eps)\Delta$ colors in $O(\Delta^4+\log^* n)$ rounds, for some sufficiently large constant $c>0$.
\end{restatable}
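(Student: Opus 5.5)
Starting from the deterministic \SLOCALedge{}(5) algorithm of \Cref{thm:online_is_ESLOCAL}, I would compute a schedule through a conflict graph that is sparser than the full distance-$5$ graph. A direct application of \Cref{cor:complexity} already yields $O(\Delta^5+\log^* n)$ rounds, so one factor of $\Delta$ has to be saved. The saving should come from the structure of the potential-minimizing choice. When edge $e=uv$ is processed, it needs the current state, namely the $P_{fc}$ values and the pessimistic-estimator potentials, only for the vertices within distance $2$ of $e$. Each of these potentials in turn depends on the colors of edges within distance $3$ of its center. I would therefore declare $e$ and $e'$ to be \emph{in conflict} exactly when some local potential, or some $P$-value, read or modified by $e$ is also read or modified by $e'$. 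I would then verify that edges not in conflict can be processed simultaneously with the same guarantees as in some sequential order $\pi$. This works because each potential, being a product or sum of local pessimistic terms, is then changed by at most one edge per round. As a result the per-step non-increase argument goes through, and summing over potentials still gives the deterministic guarantee of \Cref{thm:online_is_ESLOCAL}.

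The key steps, in order, are as follows.
\begin{enumerate}
\item Restate the decision rule of the deterministic algorithm so that it makes explicit which vertex-centered potentials an edge $e$ touches. I expect the decision rule to touch the potentials centered at vertices near $u,v$. I also expect that the potentials that actually change are those whose defining martingale includes $e$ or an edge adjacent to $e$ with updated $P$-values.
\item Define the conflict graph $H$ on $E$ accordingly. Bound its maximum degree by $O(\Delta^4)$: the relevant centers lie within distance $\approx 2$ of $e$, and from each such center we count the edges that touch it. The conflict relation should then essentially amount to a distance-$4$ relation in the line graph rather than a distance-$5$ one.
\item Compute an $O(\Delta^4)$-coloring of $H$ in $O(\Delta^4+\log^* n)$ rounds. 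For this I would use a Linial-type coloring followed by color reduction, or the algorithm of \cite{BEG17} simulated on $H$; adjacency in $H$ is within $O(1)$ hops of $G$, so each round of $H$ costs $O(1)$ rounds of $G$.
\item Iterate through the color classes. In each class every edge gathers its $O(1)$-radius neighborhood in $O(1)$ rounds, chooses its color by minimizing the potential, and updates the state. This takes $O(\Delta^4)$ rounds in total.
\item Finish by coloring greedily the edges forwarded to the fallback palette. These induce maximum degree $O(\eps\Delta)$, so the $(2\Delta'-1)$-coloring of \cite{BEG17} costs $O(\eps\Delta+\log^* n)$ rounds and brings the total to $(1+O(\eps))\Delta$ colors; rescaling $\eps$ gives $(1+\eps)\Delta$.
\end{enumerate}

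The main obstacle is step 2. I have to show that the parallel execution is faithful even though the distance between non-conflicting edges is below $5$. Concretely, two edges at distance $4$ might both influence a single potential centered at a vertex between them, and the contributions of their simultaneous changes need not combine monotonically. My first approach would be to argue that the potential changes are multiplicative in independent factors, so that processing in any order yields the same final state and each edge's choice remains a valid sequential choice. If that fails, I would instead refine the definition of conflict so that it excludes exactly such shared-potential pairs, and count those pairs carefully to retain the $O(\Delta^4)$ degree bound.
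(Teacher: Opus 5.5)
Your architecture matches the paper's: a conflict graph, an $O(\Delta^4)$-coloring of it computed on a constant-radius virtual graph in $O(\Delta^4+\log^* n)$ rounds, then iteration over the color classes. The gap is in step~2, which is the only new content of the theorem beyond \Cref{cor:complexity}. You assert that the conflict relation ``should essentially amount to a distance-4 relation''. For the algorithm of \Cref{sec:online_locality} as stated, nothing supports this, and your own count does not give it. The potential $\Phi_w$ contains the terms $\Phi_{M,C}$ of \Cref{lem:matchings}, where $M$ ranges over submatchings of the canonical matchings $\mathcal{M}(w,U)$ of $G[U]$ with $U\subseteq N(w)$. An edge $e$ changes $\Phi_{M,C}$ as soon as it shares an endpoint with some edge of $M$. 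So $e$ and $f$ share a potential term whenever there is a path $e,a,b,c,d,f$, with $b,c$ incident to $w$, such that $a$ and $d$ lie in a common matching of some $\mathcal{M}(w,U)$. These partitions are chosen separately for each of the exponentially many sets $U$. Nothing then bounds the number of admissible $d$ at the far end of $c$ by less than $\Delta$, so you only get the naive $O(\Delta^5)$. Equivalently: $O(\Delta^2)$ centers near $e$, times $O(\Delta^3)$ edges affecting a center whose potential depends on its $3$-hop neighborhood, as you noted yourself.

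The missing idea is to modify the algorithm before scheduling it. First compute a proper $2\Delta$-edge coloring $C'$ in $O(\Delta+\log^* n)$ rounds, and take every $\mathcal{M}(w,U)$ to be the color classes of $C'$ restricted to $G[U]$. The analysis already rounds the $\Delta+1$ matchings up to $2\Delta$, so nothing breaks. Now, in the path $e,a,b,c,d,f$, the edge $d$ must be adjacent to $c$ and have the same $C'$-color as $a$, which leaves $O(1)$ choices. The count becomes $O(\Delta)\cdot O(\Delta)\cdot O(\Delta)\cdot O(1)\cdot O(\Delta)=O(\Delta^4)$. The remaining terms ($Q_{U_wC}$, $H_U$, $X_{C,U}$) depend only on the $2$-hop neighborhood of $w$ and create conflicts only at distance $\le 3$. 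Even then the relation is not a distance-4 relation: some distance-5 pairs still conflict, just sparsely.

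Relatedly, the obstacle you flagged, faithfulness of the parallel execution, is not the real difficulty. With conflicts defined as sharing a read or modified potential term or $P$-value, non-conflicting edges act on disjoint terms. Simultaneous processing of a class is then identical to any sequential order of it. Your ``multiplicative factors'' fallback would not have worked anyway. Each edge only guarantees non-increase of the \emph{sum} of the terms it touches, not of each term, so two edges sharing terms could jointly increase the potential.
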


\subsection{The \CONGEST Algorithm}
\label{sec:tecOverviewCongest}

\paragraph{High-level algorithm.}
By \Cref{thm:online_is_ESLOCAL}, we know that edges that are farther than $5$-hops apart can be processed simultaneously. To create a schedule that does this, we compute a \emph{distance-5 coloring}: edges that are within $5$ hops need to receive distinct colors. Such a coloring can be computed in $\poly(\Delta)$ \CONGEST rounds with $O(\Delta^5)$ colors. 

Now we execute this schedule: for any color, all edges of that color are processed according to the algorithm of \Cref{thm:online_is_ESLOCAL}. In the \LOCAL model, this is straightforward since any edge only need information from $5$ hops away. In the \CONGEST model, this immediately leads to congestion issues.

When an edge $e$ is processed it needs two things: its sampling probabilities $P_{ec}$ and a way to compute the change in potential affected by the edge $e$. The former is easy to maintain: whenever we process an edge, we inform the neighboring, unprocessed edges $f$ that they need to update their values $P_{fc}$. This leads to a congestion of $O(\Delta)$, since there are $O(\Delta)$ colors~$c$, which is small enough. 

The harder part is making the potentials for derandomization available. Hereto, the edge $e$ needs to know:\footnote{This is a simplification, for more details, see \Cref{lem:potential-gives-slocal}.}

\begin{itemize}
    \item All unprocessed edges within distance $5$;
    \item For each \emph{processed edge} $f$ within distance $5$: their sampling probabilities $P_{fc}$ (for all colors $c$) right before they are processed, i.e., colored. 
\end{itemize}
The former is relatively easy, and can be done in $O(\Delta^5)$ rounds at the start of the algorithm. The latter is more complicated. If edge $e$ requests this information at the time it needs it, this takes $O(\Delta^6)$ rounds. Multiplying this with the number of steps in the schedule gives a total running time of $O(\Delta^{11})$ -- much larger than we want. 

In the \LOCAL model, \Cref{thm:LOCALmainNoND} additionally uses degree splitting to first divide the input graph into many independent graphs with maximum degree $\Theta(\sqrt{\log n})$. Using degree splitting, the total running time becomes $\tilde O(\log^2 \Delta\log n)$ plus the time needed to color a graph with maximum degree $\Theta(\sqrt{\log n})$. 
\CONGEST degree splitting can not be found in the literature. In \Cref{sec:Congest_DS}, we show how to obtain it with minor adaptations of existing algorithms. 

Applying degree splitting to the algorithm as just described gives $O(\Delta^{11})=O(\log^{5.5} n)$ rounds.

\paragraph{Challenge 1: Reducing the number of rounds in the schedule.}
As already outlined in \Cref{sec:tecOverviewRelationOnline}, we do not need to schedule all edges within distance $5$ at different times, but only some conflicting edges, encoded in a \emph{conflict graph}. This conflict graph has degree $O(\Delta^4)$ -- a factor $\Delta$ less than the degree of $G^5$. This means we can color the edges of $G$ with $O(\Delta^4)$ colors, such that no two conflicting edges are the same color. Hence we can take this coloring as our schedule. To compute such a coloring, we are working with some virtual graph. This means we have some congestion issues in the \CONGEST model. In \Cref{sec:CONGEST_schedule}, we provide an efficient algorithm. 

\paragraph{Challenge 2: Reducing the number of messages.}
Our second challenge is that we need to send many messages to update the potential functions: an edge $e$ needs information from processed edges within 5 hops: up to $O(\Delta^5)$ many. Pulling this information when needed is relatively inefficient. 
We speed this up by \emph{pushing} the changes in the $5$-hop neighborhood, rather than pulling them. Suppose only one edge is updated at the time. When $f$ is colored, it can broadcast its $O(\Delta)$ values from the probability distribution to its 5-hop neighborhood in $O(\Delta)$ rounds. Then when $e$ is being processed, it already has the right information. The complication here is, that multiple edges are processed at the same time. In \Cref{sec:CONGEST_main}, we show that this leads to a congestion of $O(\Delta^3)$. This careful analysis uses again that we are scheduling with the conflict graph and not with $G^5$. 

By overcoming these two challenges, we achieve an $O(\Delta^7)$-round algorithm. 
To decrease this further, we need to consider the \emph{size} of the messages we are sending.

\paragraph{Challenge 3: Reducing the message size.}
As explained above, we have a schedule of $O(\Delta^4)$ steps, where in each step $O(\Delta^3)$ messages need to traverse each edge. Initially, those messages are of size $O(\log n)$. In \Cref{sec:CONGEST_main}, we will show that the size of a vertex identifier is the bottleneck here. 
Since edges only use information from their $5$-hop neighborhood, vertex identifiers do not need to be distinct beyond this distance. We can attain new identifiers satisfying this using a vertex coloring of the power graph from prior work. This reduces the size of the identifiers to $O(\log \Delta)$. Using that $\Delta = \Theta(\sqrt{\log n})$, we see that we can send $\Delta^2$ messages in $O(\log(\Delta))=O(\log\log n)$ rounds. This means that for each step of the schedule, we only need $O(\log^{0.5}n \log\log n)$ rounds. 
In total, this means that our algorithm takes $\tilde O(\log^{2.5}n + \log^2 \Delta \log n)$ rounds.



\newpage
\section{An Online Algorithm that is \SLOCALedge with locality 5}\label{sec:online_locality}

\newcommand\udeg{\mathsf{badness}}
\newcommand\baddeg{\mathsf{baddeg}}
\newcommand\Calg{\mathcal{C}}
\newcommand{\idx}[1]{^{(#1)}}
\newcommand{\idxz}{\idx{0}}
\newcommand{\idxt}{\idx{t}}
\newcommand{\idxtp}{\idx{t+1}}


In this section, we prove the following theorem.
\ThmOnlineisESLOCAL*

We provide the proof in two parts. As a warm-up, we first show that the randomized online algorithm of \cite{BlikstadSVW25} has locality $1$ when implemented in \SLOCALedge{}. Then we show how to derandomize that algorithm with \SLOCALedge{} locality $5$. Formally, the resulting algorithm is not an online algorithm as it accesses future edges.


\paragraph{\boldmath Implication for \SLOCAL.}
\Cref{thm:online_is_ESLOCAL} immediately implies the following result in \SLOCAL. 

\thmSLOCALmain*
\begin{proof}
    Any \SLOCALedge{}($\ell$) algorithm can be run on \SLOCAL with locality $\ell+2$. The reason is that the knowledge of edges in \SLOCAL can be stored at the endpoints. The distance between the endpoints of two edges is at most the distance between the edge plus two. Now \Cref{thm:online_is_ESLOCAL} gives a deterministic \SLOCALedge{}(5) algorithm for this edge coloring problem, so there is a deterministic \SLOCAL{}(7) algorithm. 
\end{proof}


\subsection{Warm-Up: Randomized with locality 1}
We begin by observing that the online algorithm of \cite{BlikstadSVW25} works in \SLOCALedge{} with locality $1$.
\begin{theorem}
\label{thm:slocal-rand}
    Let $\eps>0$ be a parameter. There is a randomized \SLOCALedge algorithm that, given a graph with maximum degree $\Delta =\Omega(\sqrt{\log n}/\eps^{16})$, colors it with $(1+\eps)\Delta$ colors with locality $1$. 
\end{theorem}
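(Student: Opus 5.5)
The plan is to import the randomized online algorithm of \cite{BlikstadSVW25} (the full version that works for $\Delta=\Omega(\sqrt{\log n}/\eps^{16})$) as a black box. We then check two things: that every step it performs for an arriving edge $e$ depends only on information within distance $1$ of $e$, and that its guarantees transfer from the online model against an oblivious adversary to \SLOCALedge. The transfer is the easy half. In \SLOCALedge the graph is fixed in advance and the adversary only picks an arrival order. Such an order cannot depend on the algorithm's random coins, so it is exactly an oblivious adversary, and the w.h.p.\ guarantee of \cite{BlikstadSVW25} (all $\sum_c P_{ec}$ stay in $[1-2\eps,1)$, and every vertex forwards at most $O(\eps\Delta)$ edges to greedy) carries over unchanged. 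Rescaling $\eps$ by a constant then gives $(1+\eps)\Delta$ colors.

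The main step is the locality check, and I would organize it around the quantities $P_{ec}$. When $e=uv$ arrives, its current values $P_{ec}$ are determined by the initial value $(1-\eps)/\Delta$ together with the sequence of updates from Step~2. Each update is triggered by an already-processed edge $f$ adjacent to $e$, that is, at distance $1$. The update uses only $f$'s own values $P_{fc}$ at its processing time and the color $f$ chose. Both are part of $f$'s internal state, which $e$ may read. Hence $e$ can recompute $P_{ec}$ from the stored states of its processed neighbors, by applying the multiplicative update rules in their arrival order. If the ordering is not directly visible, we instead have each processed edge store a timestamp, or simply store the resulting products at the shared endpoint.

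Step~1 samples from these values using only $e$'s own randomness. The greedy fallback of Step~3 needs only the colors of edges incident to $u$ and $v$, which again lie at distance $1$.

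I expect the main obstacle to be the extra bookkeeping in the full algorithm of \cite{BlikstadSVW25} that handles $\Delta\approx\sqrt{\log n}$. That algorithm does more than the simplified box: it has auxiliary steps such as capping or adjusting probabilities, and it treats edges whose endpoint has accumulated too many uncolored edges. I would go through each such rule and verify that it is a function of quantities at the endpoints $u,v$, for example counts of incident colored, uncolored, or forwarded edges. The higher-neighborhood martingales in \cite{BlikstadSVW25} appear only in the \emph{analysis} (to get the $2^{-\Delta^2}$ failure bounds), not in the algorithm's decisions, so they do not affect locality. Once every decision rule is confirmed to read only from edges sharing an endpoint with $e$, the theorem follows.
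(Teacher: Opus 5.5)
Your proposal is correct and follows essentially the same route as the paper. Defer the updates of $P_{fc}$ until $f$ arrives, and let each processed edge store its arrival time, color, pre-arrival values $P_{fc}$ and the branch it took. From these records on incident edges (locality $1$), recompute the badness counters, the bad-neighbor count, $P_{ec}$ and the greedy fallback, and transfer the guarantee of \cite{BlikstadSVW25} because a fixed graph and arrival order amount to an oblivious adversary. Two small corrections that do not affect the argument: in the $\Delta\approx\sqrt{\log n}$ regime the imported guarantee is only that every vertex has $O(\eps\Delta)$ marked edges, since individual sums $\sum_c P_{ec}$ do leave $[1-2\eps,1)$ (this is exactly what the extra rules handle); and because of the cap $A$ in Line~\ref{line:obliv:scaleup}, the interleaving of updates from the $u$-side and the $v$-side matters, so the timestamp option works but storing per-endpoint products would not.
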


\paragraph{Online edge coloring of \cite{BlikstadSVW25}}
Below we present the pseudo-code of the randomized online edge coloring algorithm of \cite{BlikstadSVW25}, simplifying to the case that we have access to future edges. We skip the analysis\footnote{When we derandomize the algorithm in the next section, we will need to redo large parts of the analysis.} and refer to \cite{BlikstadSVW25} for a formal proof of correctness. 

\begin{lemma}[\cite{BlikstadSVW25}]
\label{lem:online-random-color}
    The online \cref{alg:edge-coloring} edge colors a graph, where edges arrive in adversarial order (from an oblivious adversary), with $\Delta + O(\eps \Delta)$ colors, with high probability in $n$.
\end{lemma}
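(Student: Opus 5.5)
The lemma is attributed to \cite{BlikstadSVW25}, so the plan is to reconstruct their analysis of \cref{alg:edge-coloring} in outline. The goal is to isolate two facts. First, the sampling values stay valid and large: $1-2\eps \le \sum_c P_{ec} \le 1$ for every edge $e$ at its arrival time. Second, the subgraph of edges sent to the greedy fallback has maximum degree $O(\eps\Delta)$. Once both hold w.h.p., the color count follows directly: the main palette uses $\Delta$ colors, and the greedy fallback uses at most $2\cdot O(\eps\Delta)-1$ fresh colors, for $\Delta+O(\eps\Delta)$ in total. Correctness of the main-palette coloring is immediate from the update rule, since $P_{fc}$ is set to $0$ whenever $c$ is used at an adjacent edge.

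\textbf{Martingale invariant.} For a vertex $v$, a color $c$, and a future edge $f\ni v$, the rescaling $P_{fc}\mapsto P_{fc}/(1-P_{ec})$ when $e\ni v$ arrives makes $P_{fc}$ a martingale. With probability $P_{ec}$ it becomes $0$, and otherwise it is scaled by $1/(1-P_{ec})$, so its expectation is unchanged. I would therefore write $\sum_c P_{fc}$ as a sum of martingales. I would then apply Azuma/Freedman to show that, over the at most $2\Delta$ updates touching $f$, the sum deviates from $1-\eps$ by at most $\eps$. For this I need the increments to be bounded. The per-update changes are $O(P_{ec}\cdot P_{fc})$-ish, which is $O(1/\Delta^2)$ per color, provided all values stay $O(1/\Delta)$. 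That proviso is itself part of the invariant being maintained, so I would set up a stopping-time argument: stop the process the first time any $P$ value exceeds $2/\Delta$, and bound the stopped martingales.

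\textbf{Greedy degree.} The number of edges at $v$ forwarded to greedy is a sum of indicators. Each indicator has conditional probability $1-\sum_c P_{ec}\le 2\eps$, so a Chernoff/Freedman bound for martingale differences gives $O(\eps\Delta)$ with failure probability $e^{-\Omega(\eps\Delta)}$.

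\textbf{Main obstacle.} A direct tail of $e^{-\Omega(\Delta)}$ only gives a high-probability guarantee via a union bound over $n$ vertices when $\Delta=\Omega(\log n)$. Reaching $\Delta=\Omega(\sqrt{\log n})$ is exactly the refinement of \cite{BlikstadSVW25}: the bad events at a vertex are controlled by quantities aggregated over its $2$-hop neighborhood, involving about $\Delta^2$ roughly independent contributions. This yields tails of $e^{-\Omega(\eps^{O(1)}\Delta^2)}$, which is $n^{-\omega(1)}$ for $\Delta\ge c\sqrt{\log n}/\eps^{16}$. The oblivious adversary is needed here, because it fixes the arrival order in advance and so the martingale filtration is well defined. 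I would defer the full computation to \cite{BlikstadSVW25} and recover it explicitly only in the derandomized version that follows.
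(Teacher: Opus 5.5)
There is a genuine gap. Your outline would only go through when $\Delta=\Omega(\log n)$ (with $\varepsilon$-dependent constants). In the regime the lemma covers, $\Delta$ down to about $c\sqrt{\log n}/\varepsilon^{16}$, the two facts you set out to prove are false with high probability, so the plan breaks at its first step.

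\textbf{Why the two facts fail.} A single edge's $Z_e=\sum_c P_{ec}$ is a martingale over only $O(\Delta)$ updates of size $O(A)$. The number of marked edges at a single vertex is a sum of at most $\Delta$ indicators. So each per-edge or per-vertex failure has probability of order $2^{-\Theta(\Delta)}$ for fixed $\varepsilon$. For $\Delta\approx\sqrt{\log n}$ this is far above $1/\mathrm{poly}(n)$, and over $n$ vertices these events do occur; the paper says so explicitly. In addition, edges at bad vertices have colors zeroed without any rescaling, so $Z_e=0$ genuinely happens.

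\textbf{Your stopping threshold.} Stopping at $2/\Delta$ is wrong even for large $\Delta$. When the last edge at $u$ arrives, only about $\varepsilon\Delta$ colors are still free at $u$, and about $\varepsilon^2\Delta$ are free at both endpoints. To keep $Z_e\approx 1-\varepsilon$, the surviving $P_{ec}$ must therefore be of order $1/(\varepsilon^2\Delta)$, so your process would stop in typical runs. This is why the algorithm itself caps rescaling at $A=c_A/(\varepsilon^2\Delta)$, and that cap creates a downward drift you would have to control.

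\textbf{The missing mechanism.} What is missing is how the algorithm tolerates sparse local failures, and which aggregate statements suffice. The $O(\varepsilon\Delta)$ bound on the greedy degree is not a concentration statement at each vertex. It is enforced by the $\mathsf{badness}$ counter. Marks caused by $Z_e>1$ or by sampling $\perp$ increment it. Once it reaches $2c_K\varepsilon\Delta$ the vertex is bad, and its later edges bypass sampling. Hence at most $2c_K\varepsilon\Delta$ such marks occur at any vertex, with certainty.

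\textbf{The three invariants.} Randomness is then needed only for three aggregate invariants, as in \cite{BlikstadSVW25} and mirrored in the paper's derandomization:
\begin{enumerate}
\item Every edge has at most $2\varepsilon^5\Delta$ colors with $P_{ec}>A$. This uses $Q_{U_wC}$ for all $|C|=\varepsilon^5\Delta$.
\item Every vertex has at most $\alpha\Delta$ bad neighbors. This uses the Chernoff-type $H_U$ together with two-sided bounds on $K_{MC}$, over submatchings $M$ of a fixed matching decomposition of $G[U]$ with $|U|=\alpha\Delta$. Matchings are essential: one arrival then moves at most two summands, so the step size stays $O(A)$.
\item A bad vertex receives at most $\varepsilon\Delta$ later edges with $Z_e=0$. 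This uses $X_{C,U}$ and $K_{MC}$.
\end{enumerate}
Each of these controls $O(\Delta^2)$ bounded increments with tail $e^{-\Omega(\varepsilon^{16}\Delta^2)}$. That tail survives a union bound over $2^{O(\Delta)}$ sets per vertex and over all $n$ vertices. Together the invariants bound the marked degree by $2c_K\varepsilon\Delta+\alpha\Delta+\varepsilon\Delta$.

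Your phrase ``quantities aggregated over the $2$-hop neighborhood'' points in this direction. As written, however, those aggregate bounds are meant to deliver per-edge facts they cannot imply, and you never replace those facts with statements that can actually be proven.

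A smaller point: the oblivious adversary is needed because the families of sets $U$, $C$, $M$ live in the final graph, and they must be fixed independently of the coin flips for the union bound to be valid. The martingale filtration itself is well defined either way.
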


For this section, we simply need to observe that their online algorithm almost straightforwardly also works in \SLOCALedge{}. This can be done without understanding the analysis and instead by just inspecting the pseudo-code. For convenience, we begin with a brief overview of the algorithm of \cite{BlikstadSVW25}.

\begin{algorithm}[ht!]
	\caption{Randomized Online Edge Coloring Algorithm \cite{BlikstadSVW25} (with access to future edges)}
	\label{alg:edge-coloring}
	\begin{algorithmic}[1]
      \Statex \underline{\smash{\textbf{Input:}}} Maximum degree $\Delta\in \mathbb{Z}_{\geq 0}$, and parameter $\eps \ge c_{\eps} \cdot (\frac{\sqrt{\log n}}{\Delta})^{1/16}$.
        \Statex 
                      \textbf{\underline{Initialization:}} 
        $\Calg \leftarrow [\Delta]$, 
        $A \gets \frac{c_{A}}{\eps^2\Delta}$,
        $\alpha \gets \eps^3/100$, and constants
        $c_\varepsilon := 10$, $c_A := 4$, $c_{K} := 35c_{A}^2$.
        \Statex\qquad\qquad\qquad\;\;\textbf{for each} $e \in E$ and $c \in \Calg$:\; Set $P^{(0)}_{ec} \leftarrow \frac{1 - \varepsilon}{\Delta}$.
        
        \Statex 
		\For{\textbf{each} online edge $e_t=\{u,v\}$ on arrival} 
            \State \textbf{for each} $f\in E$ and $c\in \Calg$: Set $P^{(t)}_{fc}\gets P^{(t-1)}_{fc}$. \Comment{May be overridden below if $f$ neighbors $e_t$}

            \If{$u$ or $v$ are \emph{bad} (i.e., have $\udeg \ge 2c_{K}\eps \Delta$)}
                \If{$P^{(t-1)}_{ce_t} = 0$ for all $c\in \Calg$, or either $u$ or $v$ have $\ge \alpha \Delta$ \emph{bad} neighbors}
                \State Mark $e_t$.
                    \label{line:obliv:mark_bad}
                \Else 
                    \State Assign to $e_t$ an arbitrary color $c \in \Calg$, for which $P^{(t-1)}_{e_tc} > 0$.
                    \label{line:obliv:color_bad}
                    \For{\textbf{each} $f \in E$ not arrived yet, such that $f \cap e_t \neq \emptyset$}
                        \State $P^{(t)}_{fc} \leftarrow 0$.
                        \label{line:obliv:burn}
                    \EndFor
                \EndIf
                \Else

            \If{$\sum_{c\in \Calg} P^{(t - 1)}_{e_tc} > 1$}
                Mark $e_t$.
                \label{line:obliv:mark_z_ge_1}
            \Else
                \State Sample $K_t$ from $\Calg \cup \{\perp\}$ with probabilities $\left(P^{(t-1)}_{e_t1},\dots,P^{(t-1)}_{e_t\Delta}, 1 - \sum_{c\in \Calg}P^{(t-1)}_{e_tc}\right)$.
                \label{line:obliv:sample}
                \If{$K_t\in \Calg$} Assign color $K_t$ to $e_t$. \label{line:obliv:color}
                \Else\, Mark $e_t$. \label{line:obliv:mark_bot}
                \EndIf
                \For{\textbf{each} $f \in E$ not arrived yet, such that $f \cap e_t \neq \emptyset$}
                    \State $P^{(t)}_{fK_t} \leftarrow 0$ \textbf{if} $K_t \in \Calg$
                        \Comment{Prevent $f$ from being colored $K_t$}
                    \label{line:obliv:zero}
                        \State $P^{(t)}_{fc} \leftarrow\frac{P^{(t-1)}_{fc}}{1 - P^{(t-1)}_{e_tc}}$
                        \textbf{for each}
                    $c\in \Calg\setminus\{ K_t\}$ \textbf{where} { $P^{(t-1)}_{fc} \leq A$}
                    \label{line:obliv:scaleup}
                    \label{line:obliv:star}
                \EndFor
                
            \EndIf
                \If{$e_t$ marked in Line~\ref{line:obliv:mark_z_ge_1} or Line~\ref{line:obliv:mark_bot}}
                    \State Increment $\udeg(u)$ and $\udeg(v)$. \label{line:obliv:increment_badness}
                    \EndIf

            \EndIf
            \If{$e_t$ marked}
            \State Color $e_t$ greedily using the separate palette $\{\Delta+1, \Delta+2, \ldots\}$.
                \label{line:obliv:greedy}
            \EndIf
		\EndFor
	\end{algorithmic}
\end{algorithm}

\paragraph{A brief overview of the algorithm.}
For each edge $e\in E$ and color $c\in [\Delta]$, it keeps track of a variable $P_{ec}$, intuitively representing the probability that $e$ should be colored with $c$, when the edge $e$ arrives. These are initially set to $\frac{1-\eps}{\Delta}$. The algorithm aims to color most edges using these $\Delta$ colors, leaving only $O(\eps \Delta)$ uncolored edges incident to each vertex, for which we fall back to greedy with a separate palette. Indeed, if the degree of marked edges is at most $\Delta' = O(\eps \Delta)$, greedy will use at most $2\Delta'-1 = O(\eps \Delta)$ additional colors.

When an edge $e$ arrives, the common case will be that we sample a color $K_t$ according to the probabilities $P_{ec}$, and assign this color to $e$. We must now update the probabilities $P_{fc}$ for neighboring edges~$f$. For example, we must set $P_{fK_{t}}\gets 0$, as we are not allowed to also color $f$ the same color $K_{t}$. Additionally, to preserve marginals, the algorithm naturally should scale up the probabilities for colors $c$ not chosen at $e$, specifically we scale up $P_{fc}$ by $\frac{1}{1-P_{ec}}$.

The algorithm explained so far is the same as the simplified version in the technical overview and would work as long as $\Delta = \Omega(\log n)$. \cite{BlikstadSVW25} introduces a few changes to make the algorithm well-defined, easier to analyze, and also work for lower maximum degree $\Delta = \Omega(\sqrt{\log n})$:

\begin{itemize}
\item In case $\sum_{c} P_{ec} > 1$, then these cannot be used as sampling probabilities. Instead the algorithm just leaves the edge uncolored and argues it does not happen so often. 
\item The algorithm thresholds the $P_{fc}$ by not scaling them up if they already exceed a carefully chosen threshold~$A$.
\item The algorithm counts, in the variables $\udeg(v)$, how many of $v$'s already processed edges are left uncolored. If this is too large (already $\Theta(\eps \Delta)$, it calls the vertex \emph{bad}, and handles its remaining incident edges in a different fashion by simply greedily assigning them available colors in Lines~\ref{line:obliv:mark_bad}--\ref{line:obliv:burn}.
\end{itemize}

Whenever $\Delta \ge \log n$, with high probability none of the above three things occur (that is, w.h.p., $\sum_{c} P_{ec} < 1$, and $P_{ec}\le A$, and no vertex becomes \emph{bad}). However, when $\Delta \approx \sqrt{\log n}$, each of these three events will occur, but quite rarely (intuitively, for a single edge or vertex, it occurs with probability $2^{-\Delta} \approx 2^{-\sqrt{\log n}}$). The thresholding of $P_{ec}$'s by $A$ makes sure that the error (or ``bad events'') does not spread in the graph. The \emph{bad} vertices will be reasonably spread out in the graph: in particular, each vertex will have at most $O(\eps \Delta)$ \emph{bad} vertices as neighbors. Intuitively, this means that the special-case handling of edges incident on bad vertices does not affect the rest of the algorithm too much.

\paragraph{\boldmath Implementing \cref{alg:edge-coloring} in \SLOCALedge{}.} Now we explain how to implement the algorithm in \SLOCALedge{} with locality $1$.  Instead of updating the values $P_{fc}$ in Lines~\ref{line:obliv:burn}, \ref{line:obliv:zero}, and \ref{line:obliv:scaleup}, we delay this until edge $f$ arrives (in which case we will have all the information to compute the values $P_{fc}$ in retrospect).
At each edge $e_t$ that has already arrived, we store:
\begin{itemize}
\item the time of arrival $t$;
\item the color assigned to it (either in $\Calg$ or in the greedy palette $\{\Delta+1, \Delta+2, \ldots\}$).
\item the values $P^{(t-1)}_{e_{t}c}$ for all colors $c$ (possibly used to sample the color);
\item the line of the algorithm it was assigned a color (either Line~\ref{line:obliv:color} or Line~\ref{line:obliv:color_bad}).
\end{itemize}
To process an arriving edge $e = \{u,v\}$, we first need to compute the values $\udeg(v), \udeg(u)$, and $P_{ec}$ for all colors $c$. We can do this by inspecting all the stored values on already arrived edges $f$ neighboring $e$. We then simply run the online algorithm to color $c$, and store the values as specified above on the edge. The locality of the algorithm is thus just $1$.

The correctness of the algorithm follows from \Cref{lem:online-random-color}. We note that our adversary in the \SLOCALedge model is an oblivious one, as the graph and arrival order is fixed in advance and does not depend upon the algorithm's randomness. This proves \cref{thm:slocal-rand}.

\subsection{Deterministic with locality 5}
\begin{theorem}\label{thm:slocal-det}
     Let $\eps>0$ be a parameter. There is a deterministic \SLOCALedge algorithm that, given a graph with maximum degree $\Delta =\Omega(\sqrt{\log n}/\eps^{16})$, colors it with $(1+\eps)\Delta$ colors with locality~$5$. 
\end{theorem}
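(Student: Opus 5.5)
We derandomize \cref{alg:edge-coloring} by the method of pessimistic estimators, keeping every estimator local. The plan is to revisit the analysis of \cite{BlikstadSVW25} behind \Cref{lem:online-random-color} and list every ``bad event'' it controls. The events are:
\begin{itemize}
\item (i) $\sum_c P_{ec}$ drifting above $1$ or below $1-O(\eps)$ for a future edge $e$;
\item (ii) the badness of a vertex $v$ exceeding $2c_K\eps\Delta$;
\item (iii) a vertex having $\ge\alpha\Delta$ bad neighbours;
\item (iv) the number of marked edges at a vertex exceeding $O(\eps\Delta)$.
\end{itemize}
Each event is bounded in \cite{BlikstadSVW25} by Azuma/Freedman or Chernoff applied to a martingale whose increments are determined by the colour choices of edges near one vertex $v$ (or one edge $e$). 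For each such martingale $M_v$ we replace the tail bound by its exponential-moment form, and define a local potential
\[
\Phi_v^{(t)} = \exp\bigl(\lambda M_v^{(t)} - \textstyle\sum_{s\le t}\psi(\lambda,\sigma_s^2)\bigr),
\]
with the standard predictable-variance correction. This is a supermartingale with $\Phi_v^{(0)}\le 2^{-\Omega(\Delta^2\eps^{O(1)})}$, and $\Phi_v^{(t)}\ge 1$ whenever the bad event is witnessed. We then set $\Phi=\sum_v\Phi_v$, summed over all vertices and event types, so $\Phi^{(0)}\le n\cdot 2^{-\Omega(\Delta^2\eps^{16})}<1$ once $\Delta\ge c\sqrt{\log n}/\eps^{16}$. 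That is exactly where the degree assumption enters.

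The deterministic algorithm processes $e_t$ in the branch of \cref{alg:edge-coloring} that would sample. Instead of sampling $K_t$, it picks the outcome $k\in\Calg\cup\{\perp\}$ minimizing $\Phi$ after the step. Because $\mathbb{E}[\Phi^{(t)}\mid \text{past}]\le\Phi^{(t-1)}$ under the sampling distribution, such a $k$ exists with $\Phi^{(t)}\le\Phi^{(t-1)}$. Hence $\Phi$ stays below $1$ throughout, no bad event ever occurs, and the guarantees of \Cref{lem:online-random-color} hold deterministically. The final greedy phase uses $O(\eps\Delta)$ extra colours, and rescaling $\eps$ gives $(1+\eps)\Delta$.

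For locality, only the $\Phi_v$ whose martingale has an increment depending on $K_t$ change when $e_t$ is processed. Since $K_t$ affects $P_{fc}$ only for $f$ adjacent to $e_t$, and the martingales track quantities like $\sum_c P_{fc}$ or badness over edges at $v$, these $v$ lie within distance $2$ of $e_t$. The edge must therefore compare $\Delta\Phi$ restricted to those vertices. Evaluating $\Phi_v^{(t)}$ requires the current $P$-values on edges within one further hop of the edges at $v$, and knowing which of them are future edges. The future edges matter because $M_v$ sums over not-yet-arrived edges. These $P$-values are recomputable, as in the warm-up, from stored data on arrived edges within distance $3$ of $v$. In total $e_t$ looks at distance $2+3=5$, which also explains why access to future edges is used and why online lower bounds do not apply.

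The main obstacle is the first step. We must check that \emph{every} concentration argument of \cite{BlikstadSVW25} for $\Delta\approx\sqrt{\log n}$ is genuinely a local martingale with an efficiently computable exponential supermartingale. This is delicate for event (iii) and for the second-order arguments that give $2^{-\Delta^2}$ tails, which chain the $A$-threshold and bad-vertex events across neighbourhoods. We would handle these by writing each as a product or sum of per-vertex exponential potentials conditioned on the thresholded dynamics, which keeps the locality radius to $3$. We would verify the variance bounds directly, using $P_{fc}\le A$ to bound increments by $O(A)$. We would also handle lines \ref{line:obliv:mark_bad}--\ref{line:obliv:burn} (deterministic already) by checking that they do not increase any potential beyond the budgeted slack.
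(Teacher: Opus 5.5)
\textbf{The gap: the potential certifies the wrong events.} Your overall architecture matches the paper's: replace each Azuma/Chernoff bound of \cite{BlikstadSVW25} by an exponential pessimistic estimator, sum local potentials $\Phi=\sum_w\Phi_w$, and pick a colour that does not increase $\Phi$. Locality $2+3=5$ then follows, because $e_t$ only affects $\Phi_w$ for $w$ within distance $2$, and each $\Phi_w$ is determined by the $3$-hop neighbourhood of $w$. The problem is your events (i) and (ii), which concern a single edge or a single vertex.
\begin{itemize}
\item For (i), $\sum_c P_{ec}$ changes only when one of the $\le 2\Delta$ edges adjacent to $e$ arrives, each time by $O(A)=O(1/(\eps^2\Delta))$. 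Azuma therefore gives only $\exp(-\Theta(\eps^{6}\Delta))$.
\item For (ii), the badness of one vertex is a sum of $\le\Delta$ indicators, so its tail is $\exp(-\Theta(\eps^2\Delta))$.
\end{itemize}
Neither is $2^{-\Omega(\Delta^2\eps^{O(1)})}$. For $\Delta=\Theta(\sqrt{\log n})$ your summed potential starts at $n\,2^{-\Theta(\sqrt{\log n})}\gg 1$, so the derandomization cannot start. This is not a technicality. The paper points out that in this regime these events genuinely occur somewhere in the graph. That is why \cref{alg:edge-coloring} thresholds at $A$, marks edges with $Z_e>1$, and treats bad vertices specially in Lines~\ref{line:obliv:mark_bad}--\ref{line:obliv:burn}. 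So no potential can certify that ``no bad event ever occurs'' for your event list.

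\textbf{What the paper certifies instead.} It controls aggregate statements over $\Theta(\Delta)$ objects. Their martingales have $\Theta(\Delta^2)$ steps and hence $2^{-\Theta(\eps^{16}\Delta^2)}$ tails, which absorb a union over $2^{O(\Delta)}$ index sets per vertex.
\begin{itemize}
\item Few Bad Colors: at most $2\eps^5\Delta$ colours with $P_{ec}>A$, via $Q_{U_wC}$ for every $C$ with $|C|=\eps^5\Delta$.
\item Few Bad Neighbours: via $H_U$ for every $U\subseteq N(w)$ with $|U|=\alpha\Delta$, plus $K_{M,[\Delta]}$ for every size-$\eps\alpha\Delta$ submatching $M$ of a canonical matching decomposition of $G[U]$.
\item Bad Vertex Property: via $X_{C,U}$ and $K_{MC}$.
\end{itemize}
Three ingredients are missing from your sketch.
\begin{itemize}
\item The potentials are indexed by these subsets and submatchings.
\item $M$ must be a matching, so that one arrival moves $K_{MC}$ by only $O(A)$.
\item The invariants are chained. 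The lower bound on $K_{MC}$ holds only if Few Bad Colors holds and no endpoint of $M$ is bad. Few Bad Neighbours uses that bound, and the Bad Vertex Property uses Few Bad Neighbours.
\end{itemize}
Your closing remark about ``second-order arguments'' points in this direction. However, ``a product or sum of per-vertex exponential potentials conditioned on the thresholded dynamics'' does not say what is summed over or why the initial value is below $1$. Once your event list is replaced by these three invariants, event (iv) follows from them by the paper's counting lemma. The rest of your argument, including the locality count, then goes through.
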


\paragraph{Strategy.} The plan is to derandomize the randomized online/\SLOCALedge{} algorithm.\footnote{We note that \cite{BlikstadSVW25} already (implicitly) gives a \emph{deterministic} online algorithm, by (black-box) derandomizing their randomized algorithm \emph{against an adaptive adversary}. In this paper, we instead derandomize the randomized algorithm of \cite{BlikstadSVW25} that only works against a weaker \emph{oblivious adversary}, but also works for lower values of $\Delta$. We can do so in the \SLOCALedge{} model since the final graph is known to us (and not just the graph that arrived so far). We also care about the locality of the derandomization, unlike \cite{BlikstadSVW25}, which means that we need to be more careful with our approach.} This can be done by the use of \emph{pessimistic estimators}. This standard technique essentially replaces all the Chernoff bounds or uses of Azuma's inequality in the analysis of the randomized algorithms with potential functions. By combining these potential functions, one can obtain a single potential function $\Phi(t)$, parametrized by the time $t$, with the following properties:

\begin{itemize}
\item The initial value is $\Phi(0) < 1$.
\item In each step, there is a (deterministic) choice that will not increase the potential $\Phi$.
\item No ``bad'' events occur as long as $\Phi(t)<1$ (these are the same ``bad'' events that the Chernoff bounds or Azuma's inequality ruled out for the randomized algorithm---in our case this guarantees that the algorithm leaves few $O(\eps \Delta)$ edges uncolored incident to each vertex).
\end{itemize}

Now, a deterministic algorithm will just do the same as the randomized algorithm, except that whenever randomness is used it will instead simply pick some choice that does not increase the potential $\Phi$ (for example the choice that minimizes the potential). Concretely, in our algorithm, the only randomization is in the sampling step in Line~\ref{line:obliv:sample}. The deterministic algorithm will instead pick the color $K_{t}$ deterministically, by choosing the one that minimizes a potential $\Phi$ (which we will define later).

There are a few caveats with this approach. Firstly, computing the potential function $\Phi(t)$ might not be fast. In fact, for our edge coloring algorithm it would take $\approx n^{\Delta}$ time to compute the potential function in the sequential setting. However, in distributed models we are not restricted by computation time, as long as this computation only occurs inside the vertices, so this is not an issue for us.

Another potential problem is that the potential $\Phi$ is a global function, and so in our distributed setting one cannot fully compute it locally. Luckily, as we will see, whenever an edge $e$ arrives and the algorithm needs to deterministically choose a color for it (instead of randomly sample), it will be able to figure out the change in potential by only inspecting a local part of the graph of small distance away from the edge $e$. This property is crucial to our derandomization, and specific to the algorithm we analyse. On a high level, we have this property because all the Chernoff bounds and Azuma's inequalities in the analysis---in \cite{BlikstadSVW25}---of the original randomized algorithm were \emph{local}, in the sense that each individual application of Azuma/Chernoff only concerns random variables associated with edges and vertices close together in the graph.

We formalize this idea in the following lemma:

\begin{lemma}
\label{lem:potential-gives-slocal}
There is a potential function $\Phi$ such that:
\begin{itemize}
\item any algorithm that, upon the arrival of an edge, simulates \cref{alg:edge-coloring}, except instead of sampling colors $K_t$ (in Line~\ref{line:obliv:sample}) chooses one (deterministically) that does not increase the potential $\Phi$, will leave at most $O(\eps \Delta)$ edges uncolored incident to each vertex.
\item the potential can be decomposed as $\Phi(t) = \sum_{v\in V} \Phi_{v}(t)$, where each $\Phi_{v}$ can be computed from (local) knowledge of just the edges in the 3-hop neighborhood of vertex $v$. For such edge that arrives in the future, only their existence is needed. For each such edge $f$ that has already arrived, we need:
\begin{enumerate}[(1)]
    \item the time of arrival $t$;
    \item the color assigned to it;
    \item the values $P_{fc}^{(t)}$ (up to precision $\poly(\Delta)$); and 
    \item the line in the algorithm $f$ was assigned a color.
\end{enumerate}
\end{itemize}
\end{lemma}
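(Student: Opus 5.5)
We will build $\Phi$ by the method of pessimistic estimators, applied one by one to each concentration inequality used in the analysis of \cite{BlikstadSVW25} for \cref{alg:edge-coloring}. That analysis shows the bad events are unlikely via a family of martingales, each anchored at a vertex $v$. The events are: $\sum_c P_{ec}$ drifting above $1$ or below $1-2\eps$ for an edge $e$ at $v$; the badness counter $\udeg(v)$ exceeding $2c_K\eps\Delta$; and $v$ having $\ge\alpha\Delta$ bad neighbours. For each such martingale $X = \sum_s (X_s - X_{s-1})$ with increments bounded by $c_s$, Azuma's inequality is proved by bounding $\Pr[X\ge\lambda]\le e^{-\theta\lambda}\mathbb{E}[e^{\theta X}]$. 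The key fact is that $\mathbb{E}[e^{\theta X_s}\mid \mathcal F_{s-1}]\le e^{\theta X_{s-1}}e^{\theta^2c_s^2/2}$.

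For each anchored event $B$ we therefore define the estimator $\Phi_B(t)=e^{\theta X_t-\theta\lambda}\prod_{s>t}e^{\theta^2c_s^2/2}$. Here the product runs over the future increments, which we know are at most the future edges affecting $X$. By the displayed fact, $\Phi_B$ is a supermartingale under the randomized sampling rule. We set $\Phi_v=\sum_{B\text{ anchored at }v}\Phi_B$, scaled so that $\Phi(0)=\sum_v\Phi_v(0)<1$. This uses $\Delta\ge c\sqrt{\log n}/\eps^{16}$: each event has probability $\exp(-\Omega(\Delta^2\poly\eps))\le n^{-\Omega(1)}$, since \cite{BlikstadSVW25} gets $2^{-\Delta^2}$-type bounds precisely by such martingales over $2$-hop neighbourhoods, and there are only $\poly(n)$ events.

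\textbf{Property one} follows in the standard way. Since $\mathbb{E}_{K_t}[\Phi(t)]\le\Phi(t-1)$, where the expectation is over the sampling distribution in Line~\ref{line:obliv:sample} (all other steps are deterministic), some choice of $K_t$ does not increase $\Phi$. Along the deterministic run, $\Phi(t)<1$ throughout. If a bad event $B$ ever occurred we would get $\Phi_B\ge1$, because its final exponent is nonnegative. Hence none occur. Then the argument of \cite{BlikstadSVW25} applies with "w.h.p." replaced by "surely": each vertex has $O(\eps\Delta)$ marked edges, since badness is bounded and the few bad neighbours contribute $O(\alpha\Delta)$. Greedy therefore uses $O(\eps\Delta)$ extra colors.

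\textbf{Property two (locality)} is where we expect the real work. We must check that each martingale anchored at $v$ is a function only of edges within the $3$-hop neighbourhood of $v$, together with the recorded data (1)--(4) and the existence of future edges there. For $X$ tracking $\sum_c P_{ec}$ with $e\ni v$, the increments come from arrivals of edges adjacent to $e$. Their probabilities $P_{fc}$ at arrival depend on edges adjacent to $f$, and the badness of the endpoints of $f$ depends on edges one hop further. Tracing this carefully gives the radius $3$. The tail product $\prod_{s>t}$ needs only the count and location of future relevant edges. The $P$-values are exactly what we store. Rounding them to $\poly(\Delta)$ precision perturbs each increment negligibly, and we absorb this into $c_s$ by a constant factor.

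The main obstacle is the recursive dependence. Whether a neighbour $u$ of $v$ is bad depends on $\udeg(u)$, which depends on the $P$-values on edges at $u$, which in turn were built from edges further out. To keep $\Phi_v$ within radius $3$, we will use the recorded values (3)--(4) as given data rather than recomputing them; the line in which an edge was colored reveals whether its endpoint was bad. We then argue that the conditional-expectation bound of each increment needs only this recorded state, not its history. Finally, the locality of $\Phi_v$ means the arriving edge can evaluate the change in $\Phi$ by summing $\Phi_v$ over the $v$ within distance $2$ that it affects. Each such $\Phi_v$ reads distance $3$, which is where the total locality $5$ comes from.
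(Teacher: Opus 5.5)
There is a genuine gap, and it breaks the step $\Phi(0)<1$. You propose to exclude \emph{surely} two events: $\sum_c P_{ec}$ leaving $[1-2\eps,1]$ for a single edge $e$, and $\udeg(v)$ reaching $2c_K\eps\Delta$. Both are governed by only $O(\Delta)$ random choices. For instance, $Z_e$ moves only when one of the at most $2\Delta$ edges adjacent to $e$ arrives, each time by $O(A)=O(1/(\eps^2\Delta))$. So Azuma gives only $\exp(-\Theta(\eps^6\Delta))$, not $\exp(-\Omega(\poly(\eps)\Delta^2))$. For $\Delta=\Theta(\sqrt{\log n})$ this is $2^{-\Theta(\sqrt{\log n})}\gg 1/n$. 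These events really do occur in the randomized run at this degree (the paper's overview notes roughly $2^{-\Delta}$ per edge or vertex). That is exactly why \cref{alg:edge-coloring} marks edges with $Z_{e_t}>1$, caps the scaling at $A$, and has a separate branch for bad vertices.

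Summing your per-edge and per-vertex estimators therefore gives $\Phi(0)\gg 1$, and no other estimator can fix this. On a disjoint union of many copies, the randomized run triggers such an event almost surely. Yet a nonnegative potential that is non-increasing in expectation, starts below $1$, and is $\ge 1$ on the event would bound its probability by $\Phi(0)<1$.

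Your appeal to $2^{-\Delta^2}$ bounds in \cite{BlikstadSVW25} is where this goes wrong. The $\Delta^2$ exponent is available only for \emph{aggregate} quantities over $\Theta_\eps(\Delta)$ objects:
\begin{itemize}
\item $Q_{U_wC}$ for colour sets with $|C|=\eps^5\Delta$;
\item $K_{MC}$ for submatchings $M$ of size $\Theta(\poly(\eps)\Delta)$ of a fixed partition of $G[U]$ into matchings (the matching structure keeps the step size $O(A)$ although many edges are summed);
\item the Chernoff-type sums $H_U$ and $X_{C,U}$ over neighbour sets $U$.
\end{itemize}
These give exponents $\Omega(\eps^{16}\Delta^2)$, which is what pays for the union over the $2^{O(\Delta)}$ choices of $C$, $U$, $M$ at each vertex.

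So the potential must encode the three invariants that can hold deterministically: few bad colours per edge, few bad neighbours per vertex, and the bad vertex property. The local failures you listed must be allowed to happen. This also exposes a hole in your derivation of the first bullet. Since vertices do become bad, you must bound the edges at a bad vertex that are marked in Line~\ref{line:obliv:mark_bad} because $Z_e=0$. Neither ``badness is bounded'' nor ``few bad neighbours'' covers these. That is the job of invariant~\labelcref{inv:badvertexprop}, proved via $X_{C,U}$ and $\Phi_{M,C}$ for colour sets $|C|=\Theta(\eps\Delta)$ that stay free at the bad vertex.

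Your locality bookkeeping is fine in spirit: recorded data (1)--(4), only existence for future edges, and radius $3$ around $w$ plus $2$ from the arriving edge. In the correct construction, though, the radius $3$ comes specifically from $\Phi_{M,C}$. There $M\subseteq G[U]$ with $U\subseteq N(w)$, and $K_{MC}$ changes whenever an edge adjacent to $M$ arrives.
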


The first item in the above lemma is not surprising as it just follows from standard pessimistic estimators. The second item is also crucial for our use-case, and follows from that not only the randomized algorithms decisions are local, but that the analysis, in \cite{BlikstadSVW25}, of the randomized algorithm is also local.
    With the second item, we mean that in order to compute $\Phi_{v}(t)$, one only need to know the following for (already arrived) edges $f$ that are at most distance $3$ away from $v$\footnote{We note that the $\Phi_{v}$ will depend on an edge $(u,w)$ if an edge $(v,u)$ exists in the graph, \emph{even if the edge $(v,u)$ has not arrived yet}.} of $v$: (1) the time of arrival $t$, (2) the color assigned to it, (3) the values $P_{fc}^{(t)}$, and (4) the line in the algorithm $f$ was assigned a color. 

\Cref{lem:potential-gives-slocal} immediately implies the deterministic \SLOCALedge{} algorithm of \cref{thm:slocal-det}:
\begin{proof}[Proof of \cref{thm:slocal-det}.]
    Suppose an edge $e = (u,v)$ arrives.
    First, from inspecting the values stored at all neighboring edges, the algorithm can compute $\udeg(u),\udeg(v)$, and  $P_{ec}$ for all colors $c$ (as in the randomized setting).  The deterministic algorithm then simulates \cref{alg:edge-coloring}. If the algorithm should sample a color $K_{t}$ in Line~\ref{line:obliv:sample}, we must replace this by a deterministic way of choosing the edge. By \cref{lem:potential-gives-slocal}, edge $e$ can only affect the potentials $\Phi_{w}$ for vertices $w$ at distance at most $2$ away from $e$. In order to compute the change in potential we thus need to collect all information on edges up to distance $5$ away (so the change in the $\Phi_{w}$'s, which are distance $2$ away and depend on their $3$-hop neighborhood, can be computed). See \cref{fig:twohop} for an illustration. All in all, since we only explore distance $5$ away from the edge $e$, the locality of our deterministic \SLOCALedge{} algorithm is $5$.
\end{proof}

\begin{figure}[!ht]
    \includegraphics[page=1,width=\textwidth]{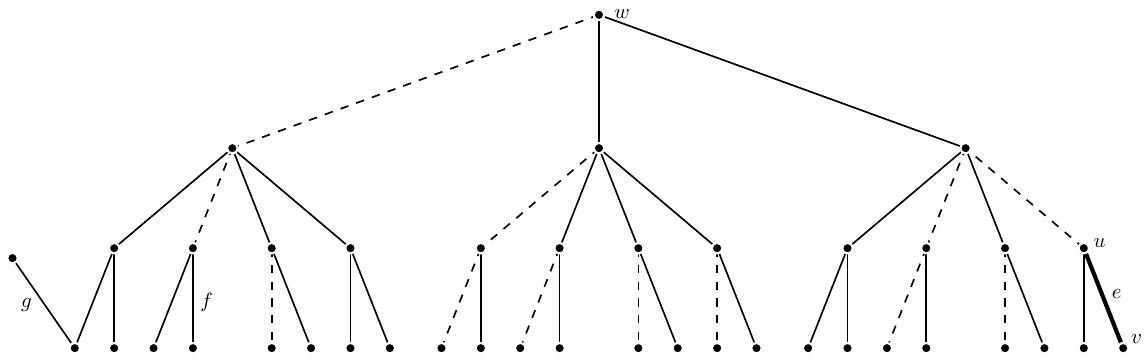}
    \centering
    \caption{At the time of arrival of edge $e = (u,v)$, some other edges have already arrived, while some (dashed) edges will arrive in the future. Edge $e$ can affect the potential $\Phi_{w}$ stored at $w$. In order to compute $\Phi_{w}(t)$, and see how it is affected by different choices of coloring $e$, all information on edges in the 3-hop neighborhood of $w$ (and thus $5$-hop neighborhood of $e$) needs to be collected. For example, information stored at edge $f$ is necessary to make the coloring choice at $e$, while information on edge $g$ is not necessary.}
    \label{fig:twohop}
\end{figure}

What remains, for this section, is to show how to construct a potential function as in \cref{lem:potential-gives-slocal}. Doing so is conceptually straightforward, but technically intricate, as we need to go over the $\approx30$-page analysis of \cref{alg:edge-coloring} in \cite{BlikstadSVW25} and replace each use of Chernoff bound or Azuma's inequality with an appropriate pessimistic estimator, and verify that the potentials we get can be computed locally.

\subsubsection{Invariants}

Before we begin setting up a potential $\Phi$, we define a few invariants which we will prove our final deterministic algorithm satisfies. Invariants~\labelcref{inv:fewbadcol}--\labelcref{inv:badvertexprop} each correspond to a key-lemma (of the same name) in the randomized analysis of \cite{BlikstadSVW25}.

\begin{definition}[Invariants]
    We analyze the following invariants at time $t$:
    \begin{enumerate}[(i)]
        \item\label{inv:potential} \textbf{Small Potential.} $\Phi(t) < 1$.
        \item\label{inv:fewbadcol} \textbf{Few Bad Colors.} For all edges $e$ (possibly not yet arrived), there are at most $2\eps^5 \Delta$ colors $c$ with $P_{ec}\idxt > A$.
        \item\label{inv:fewbadnei} \textbf{Few Bad Neighbors.} All vertices $v$ have at most $\alpha \Delta$ bad neighbors.
        \item\label{inv:badvertexprop} \textbf{Bad Vertex Property.}
            For each vertex $v$ that turns bad at time step $t_{0} < t$,
            there are at most $\eps \Delta$ neighbors $u$ of $v$, arriving between time steps $t_{0}$ and $t$, such that $Z\idx{t_{e_{t}-1}}_{e_{t}} = 0$, where
            $e_{t} = (u,v)$ and $t_{e_{t}} > t_{0}$ is the time step at which $e_{t}$ arrives.
    \end{enumerate}
\end{definition}

We will show that---assuming these invariants hold at the time an edge $e$ arrives---that there is a deterministic choice that makes these invariants also hold after $e$ is colored.
In fact, by setting up the potential properly, all invariants \labelcref{inv:fewbadcol}--\labelcref{inv:badvertexprop} will be implied by \labelcref{inv:potential}. Hence it will always suffice to make any coloring choice that does not increase the potential. The invariants are still useful for our analysis: that they hold at time $t$ will be crucial to show that there is a choice that avoids the potential $\Phi$ increasing at time step $t+1$. Moreover, the next lemma shows that the invariants guarantee that the algorithm uses few colors.

\begin{lemma}
    If the invariants are always satisfied, then the algorithm uses $\Delta + O(\eps \Delta)$ colors.
\end{lemma}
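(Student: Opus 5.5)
The plan is to count, at each vertex $v$, the marked edges incident to $v$ (those forwarded to the greedy fallback in Line~\ref{line:obliv:greedy}). If this count is at most $\Delta' = O(\eps\Delta)$ for every $v$, then greedy with the separate palette uses at most $2\Delta'-1 = O(\eps\Delta)$ extra colors. Together with the main palette $\Calg=[\Delta]$ this gives $\Delta+O(\eps\Delta)$ colors. Edges colored from $\Calg$ never conflict with each other: whenever a color $c$ is assigned to $e_t$ (in Line~\ref{line:obliv:color} or Line~\ref{line:obliv:color_bad}), we set $P_{fc}=0$ for every adjacent unarrived $f$ (Lines~\ref{line:obliv:burn} and \ref{line:obliv:zero}). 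A color with $P_{fc}=0$ is never chosen later: sampling gives it probability zero, and the deterministic replacement is also restricted to colors of positive weight. So it suffices to bound the marked degree.

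Marked edges at $v$ come from three sources, and I will bound each separately.
\begin{enumerate}[(a)]
\item Edges marked in Line~\ref{line:obliv:mark_z_ge_1} or Line~\ref{line:obliv:mark_bot} while both endpoints are good. Each such edge increments $\udeg(v)$. While $v$ is good, $\udeg(v) < 2c_K\eps\Delta$. Once $v$ turns bad, its later edges never reach these lines, and each marking raises $\udeg$ by only one. Hence at most $2c_K\eps\Delta+1$ such edges are incident to $v$ while $v$ is good. The same edge can also be marked when $v$ is good but the other endpoint $u$ is bad; that case is handled in (b).
\item Edges $e=(u,v)$ marked in Line~\ref{line:obliv:mark_bad} because $u$ or $v$ has at least $\alpha\Delta$ bad neighbors. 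Invariant~\labelcref{inv:fewbadnei} says every vertex has at most $\alpha\Delta$ bad neighbors. I would read the condition as a strict threshold ($>$), or else note the off-by-one, so that invariant~\labelcref{inv:fewbadnei} makes this branch never fire. If it can fire at the boundary, the edges through bad neighbors $u$ of $v$ still number at most $\alpha\Delta = O(\eps\Delta)$, since each is one edge per bad neighbor.
\item Edges marked in Line~\ref{line:obliv:mark_bad} because $P_{ec}=0$ for all $c$, with $v$ bad or with $v$ adjacent to a bad $u$. If $v$ itself is bad, invariant~\labelcref{inv:badvertexprop} bounds by $\eps\Delta$ the edges $(u,v)$ arriving after $v$ turned bad with $Z_e=\sum_c P_{ec}=0$. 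If $v$ is good but $u$ is bad, I charge the edge to $u$'s bad-neighbor count at $v$. There are at most $\alpha\Delta$ such $u$ by invariant~\labelcref{inv:fewbadnei}, and each contributes a single edge to $v$.
\end{enumerate}

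Adding the three cases gives a marked degree of at most $(2c_K+1)\eps\Delta+\alpha\Delta+O(1)=O(\eps\Delta)$. Invariant~\labelcref{inv:fewbadcol} is not needed here; it serves the potential argument.

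The main obstacle is the case bookkeeping: each marked edge must be attributed to exactly one of the counters above, looking at both endpoints, and the good/bad status of $u$ and $v$ must be evaluated at the edge's arrival time. I would handle this by fixing $v$ and partitioning its incident edges by arrival time relative to the moment $v$ turns bad.
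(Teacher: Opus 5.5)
Your proposal is correct and follows essentially the same argument as the paper. Marks from Lines~\ref{line:obliv:mark_z_ge_1} and \ref{line:obliv:mark_bot} are bounded by the $\udeg$ threshold, and the ``$\ge\alpha\Delta$ bad neighbors'' branch of Line~\ref{line:obliv:mark_bad} is excluded by invariant~\labelcref{inv:fewbadnei}; the $Z_e=0$ marks are bounded by invariant~\labelcref{inv:fewbadnei} while $v$ is good and by invariant~\labelcref{inv:badvertexprop} once $v$ is bad. Your strict reading of the threshold matches how the paper's proof treats it, and it is justified because the few-bad-neighbors argument shows that every $\alpha\Delta$-subset of $N(w)$ contains a good vertex. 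Drop the fallback in (b), though: a bad $v$ with exactly $\alpha\Delta$ bad neighbors would have all of its later edges marked, including those to good neighbors, so that bound would fail.
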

\begin{proof}
    It suffices to prove that all vertices $v$ have at most $\Delta' = O(\eps \Delta)$ incident edges \emph{marked} by the algorithm (and then colored greedily with $\le 2\Delta'$ colors).

    For each vertex $v$, it can only have at most $\eps \Delta$ incident edges marked either in Line~\ref{line:obliv:mark_z_ge_1} or Line~\ref{line:obliv:mark_bot}. This is since each time it is marked in those lines $\udeg(u)$ is incremented, and once $\udeg(v)\ge \eps \Delta$ (i.e., $v$ becomes \emph{bad}) incident edges $(v,u)$ never enters those lines.

    Thus it suffices to verify that Line~\ref{line:obliv:mark_bad} marks at most $O(\eps\Delta)$ edges incident to $v$. A vertex is \emph{dangerous} is it has $>\alpha \Delta$ bad neighbors, and invariant \labelcref{inv:fewbadnei} says this can never happen. While $v$ is good, by that same invariant, it has at most $\alpha \Delta$ bad neighbors, and thus at most $\alpha \Delta = O(\eps^3 \Delta) = O(\eps \Delta)$ incident edges to $v$ can be marked in Line~\ref{line:obliv:mark_bad} while $v$ remains good. If $v$ at some point becomes bad, we need to count the number of times the event $Z_{e_{t}c} = 0$ for arriving incident edges $e_{t}$. Invariant~\labelcref{inv:badvertexprop} guarantees this happens at most $\eps\Delta$ times.
\end{proof}

\subsubsection{Preliminaries: Martingales and Pessimistic Estimators}

\paragraph{Martingales.}
Suppose that $Z\idxz, Z\idx{1}, \ldots$ is a (super) martingale with step size $S$, that is:
\begin{itemize}
    \item \emph{Super-Martingale:} $E[Z\idxtp \mid Z\idxz,\ldots,Z\idxt] \le Z\idxt$
    \item \emph{Step Size:} There are $A_{t}$ and $B_{t}$ determined by $(Z\idxz, \ldots, Z\idxt)$ such that $A_{t} \le Z\idxtp-Z\idxt \le B_{t}$ and $B_{t}-A_{t} \le S$.
\end{itemize}

We then have the following classic concentration inequality:
\begin{theorem}[Azuma's Inequality]
    If $Z$ is a supermartingale with step size $S$, then for any $N$:
    \begin{equation*}
        \Pr[Z\idx{N}-Z\idxz\ge \lambda] \le \exp\left(-\frac{2\lambda^2}{S^2N}\right).
    \end{equation*}
\end{theorem}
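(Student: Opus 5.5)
The statement to prove is Azuma's inequality for supermartingales with step size $S$. I would follow the standard exponential-moment (Chernoff-style) route, with Hoeffding's lemma as the key step. Write $D_{s} = Z\idx{s+1}-Z\idx{s}$, so that $Z\idx{N}-Z\idxz = \sum_{s=0}^{N-1} D_s$. Fix $h>0$. By Markov's inequality applied to $e^{h(Z\idx{N}-Z\idxz)}$,
\[
\Pr[Z\idx{N}-Z\idxz\ge\lambda]\le e^{-h\lambda}\,E\bigl[e^{h\sum_{s<N} D_s}\bigr].
\]
It then suffices to bound the moment generating function by $\exp(h^2 S^2 N/8)$.

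The core step is a conditional Hoeffding lemma. Let $\mathcal F_s$ be the information $(Z\idxz,\ldots,Z\idx{s})$. Conditioned on $\mathcal F_s$, the increment $D_s$ lies in the interval $[A_s,B_s]$, whose endpoints are $\mathcal F_s$-measurable, and $B_s-A_s\le S$. Moreover $\mu_s := E[D_s\mid\mathcal F_s]\le 0$.

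To apply Hoeffding, use convexity of $x\mapsto e^{hx}$ on $[A_s,B_s]$:
\[
e^{hx}\le \frac{B_s-x}{B_s-A_s}e^{hA_s}+\frac{x-A_s}{B_s-A_s}e^{hB_s}.
\]
Taking conditional expectations and running the usual calculus argument (writing the logarithm of the right-hand side as a function $L(u)$ with $L(0)=0$, $L'(0)=0$ after centering, and $L''\le 1/4$), we get
\[
E[e^{hD_s}\mid\mathcal F_s]\le e^{h\mu_s + h^2(B_s-A_s)^2/8}\le e^{h^2S^2/8}.
\]
The last inequality uses $\mu_s\le 0$ together with $h>0$. This is the step where the supermartingale (rather than martingale) assumption enters. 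Centering at $\mu_s$ keeps the interval length unchanged, so the bound still applies.

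Next I would iterate with the tower property:
\[
E\bigl[e^{h\sum_{s<N}D_s}\bigr]=E\bigl[e^{h\sum_{s<N-1}D_s}\,E[e^{hD_{N-1}}\mid\mathcal F_{N-1}]\bigr]\le e^{h^2S^2/8}\,E\bigl[e^{h\sum_{s<N-1}D_s}\bigr],
\]
and induct on $N$ to get the bound $e^{h^2S^2N/8}$. This gives
\[
\Pr[Z\idx{N}-Z\idxz\ge\lambda]\le\exp\bigl(-h\lambda+h^2S^2N/8\bigr).
\]
Choosing $h=4\lambda/(S^2N)$ yields exactly $\exp(-2\lambda^2/(S^2N))$.

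The only delicate point is Hoeffding's lemma with data-dependent endpoints $A_s,B_s$. Since these endpoints are fixed once we condition on $\mathcal F_s$, the unconditional proof applies pointwise, so I expect no real obstacle. I would also note that in the later pessimistic-estimator use, the per-step inequality $E[e^{hD_s}\mid\mathcal F_s]\le e^{h^2S^2/8}$ is the property that gets derandomized. I would therefore state it explicitly as an intermediate claim.
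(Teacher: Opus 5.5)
Your proof is correct and is essentially the argument underlying the paper. The paper cites Azuma's inequality as classic without proof, but its supporting lemma for the potential $\phi$ uses exactly your conditional Hoeffding step with $h = 4\lambda/(S^2N)$, so $\Phi_Z(t)$ is non-increasing in expectation from the initial value $\exp(-2\lambda^2/(S^2N))$; Markov's inequality applied to $\Phi_Z(N)$, which is at least $1$ whenever $Z^{(N)}-Z^{(0)}\ge\lambda$, recovers the bound, and your tower-property induction is just the unrolled form of that step.
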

\begin{remark}[Chernoff Bound]
    If we have variables $X\idxt$ that are $\{0,1\}$ and independent,
    we can set $Z_{\idxt} = \sum_{i=1}^t (X_{i} - E[X_{i}])$.
    We see that
    $Z\idx{N}-Z\idxz = \sum_{i=1}^N X\idxt$, and that the step size $S = 1$, so Azuma's inequality recovers a standard additive Chernoff-Hoeffding bound
    $\Pr[\sum_{i=1}^N X\idx{i}\ge \E[\sum_{i=1}^{N}X\idx{i}] +\lambda] \le \exp\left(-2\lambda^2/N\right)$.
\end{remark}

\paragraph{Derandomizing Azuma's inequality.}
We now set up a potential function in order to derandomize Azuma's inequality (and Chernoff-Hoeffding bounds) using pessimistic estimators.

\begin{definition}
For parameters $t\ge 0$ and $\lambda, S, N > 0$, we define for any $X$:
\begin{align*}
\phi(X, t, \lambda, S, N)
    = \exp\left(\frac{4\lambda}{S^2N}\left(X - \frac{\lambda}{2}\left(1+ \frac{t}{N}\right)\right)\right).
\end{align*}
\end{definition}

The intuition behind the function $\phi$ is the following: Set
\begin{align*}
    \Phi_Z(t) = \phi(Z\idxt - Z\idxz, t, \lambda, S, N).
\end{align*}
This makes for a good potential function controlling the concentration of $Z\idxt$ for $t\le N$.  Indeed, if $\Phi_Z\idxt$ is small, then $Z\idxt - Z\idxz$ must also be small. More precisely we have the following lemma.

\begin{lemma}
    \label{lem:potential-implies-concentration}
If $\phi(X, t, \lambda, S, N) < 1$ and $t\le N$, then
$X < \lambda$.
\end{lemma}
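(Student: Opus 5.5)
We prove the contrapositive: assume $t\le N$ and $X\ge\lambda$, and show $\phi(X,t,\lambda,S,N)\ge 1$. Since $\exp(y)<1$ holds exactly when $y<0$, the hypothesis $\phi<1$ is equivalent to
\[
\frac{4\lambda}{S^2N}\left(X-\frac{\lambda}{2}\left(1+\frac{t}{N}\right)\right)<0 .
\]
Because $\lambda,S,N>0$, the prefactor $\frac{4\lambda}{S^2N}$ is strictly positive. We may therefore divide by it, and the condition becomes
\[
X<\frac{\lambda}{2}\left(1+\frac{t}{N}\right).
\]

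Next we use $t\le N$, which gives $1+t/N\le 2$. Hence
\[
\frac{\lambda}{2}\left(1+\frac{t}{N}\right)\le\lambda ,
\]
and combining this with the previous display yields $X<\lambda$. (The argument also uses $t\ge 0$, but only so that the quantities are well defined. The proof needs only the upper bound on $t$.)

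There is no real obstacle. The only points to get right are the sign of the prefactor, which requires $\lambda>0$, and the direction of the inequality when passing through $\exp$, which is monotone increasing.
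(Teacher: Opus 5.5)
Your proof is correct and is essentially the paper's argument. Both use $t\le N$ to get $\frac{\lambda}{2}\left(1+\frac{t}{N}\right)\le\lambda$, together with the positivity of $\frac{4\lambda}{S^2N}$ and the monotonicity of $\exp$, so that $X\ge\lambda$ would force $\phi\ge\exp(0)=1$; the only blemish is cosmetic, since you announce a contrapositive but then argue directly.
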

\begin{proof}
    If $t\le N$, then $\phi(X, t, \lambda, S,N) \ge \exp(\frac{4\lambda}{S^2 N}(X-\lambda))$ by definition. So, if $X \ge \lambda$, then
    $\phi(X, t, \lambda, S,N) \ge \exp(0) = 1$, contradicting the choice of $X$.
\end{proof}
As we will now see, when the next step of the martingale $Z\idxtp$ is randomly sampled (given the conditioning on $Z\idxz, \ldots, Z\idxt$), then the potential $\Phi_{Z}\idxtp$ does not increase in expectation. Here we make use of the assumption on the step size.





\begin{lemma}
If $X$ is a random variable such that $\E[X] \le x$ and 
$a \le X \le b$ where $b-a\le S$, then
\begin{align*}
\E[\phi(X, t+1, \lambda, S, N)]
\le \phi(x, t, \lambda, S, N)
\end{align*}
\end{lemma}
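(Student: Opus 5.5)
This is Hoeffding's lemma combined with the way $\phi$ is built. Write $\beta = \frac{4\lambda}{S^2N}$, so that $\phi(X,t,\lambda,S,N) = \exp\bigl(\beta X - \beta\frac{\lambda}{2}(1+\frac{t}{N})\bigr)$. Passing from $t$ to $t+1$ lowers the exponent by the deterministic amount $\beta\frac{\lambda}{2N} = \frac{2\lambda^2}{S^2N^2}$. Hence
\begin{align*}
\E[\phi(X,t+1,\lambda,S,N)] = \exp\Bigl(-\beta\tfrac{\lambda}{2}\bigl(1+\tfrac{t}{N}\bigr)\Bigr)\exp\Bigl(-\tfrac{2\lambda^2}{S^2N^2}\Bigr)\E\bigl[e^{\beta X}\bigr],
\end{align*}
and the claim reduces to the bound $\E[e^{\beta X}] \le e^{\beta x}\exp\bigl(\frac{2\lambda^2}{S^2N^2}\bigr)$.

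To get this bound, I invoke Hoeffding's lemma. For a random variable $X\in[a,b]$ and any $s>0$,
\begin{align*}
\E[e^{sX}] \le \exp\bigl(s\E[X] + s^2(b-a)^2/8\bigr).
\end{align*}
Because $s>0$ and $\E[X]\le x$, the right-hand side is at most $\exp(sx + s^2S^2/8)$. Setting $s=\beta$ gives $s^2S^2/8 = \frac{16\lambda^2}{S^4N^2}\cdot\frac{S^2}{8} = \frac{2\lambda^2}{S^2N^2}$. This is exactly the drift term, so it cancels and we get $\E[\phi(X,t+1,\cdot)]\le \phi(x,t,\cdot)$.

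The only point needing care is Hoeffding's lemma itself, which holds for arbitrary mean, not just mean zero. If I need to reprove it, I would first reduce to the mean-zero case by shifting $X$ by $\E[X]$. I would then use convexity of $e^{sy}$ to bound it on $[a,b]$ by the chord through the endpoints. Finally I would check that the log of the resulting two-point moment generating function has second derivative at most $(b-a)^2/4$. That step is the main technical obstacle; the rest is the bookkeeping above. I also need $\lambda>0$, so that $\beta>0$ and the inequality $\E[X]\le x$ can be used in the correct direction.
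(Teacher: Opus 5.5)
Your proof is correct and takes essentially the same approach as the paper. You apply Hoeffding's lemma with parameter $\frac{4\lambda}{S^2N}$ to get a factor $\exp\bigl(\frac{2\lambda^2}{S^2N^2}\bigr)$, and this factor is cancelled exactly by the extra drift from moving $t$ to $t+1$. The only difference is cosmetic: the paper uses Hoeffding in the form $\E[\exp(y(X-x))]\le \exp(y^2S^2/8)$ and bounds the ratio $\phi(X,t+1,\lambda,S,N)/\phi(x,t,\lambda,S,N)$ directly, rather than first factoring out the deterministic terms.
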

\begin{proof}
By Hoeffding's lemma, for any $y\ge 0$, we have
\begin{align*}
\E[\exp(y(X - x))] \le \exp(y^{2}S^2/8)
\end{align*}
Set $y = \frac{4\lambda}{S^2N}$, and we get 
\begin{align*}
\E\left[\exp\left(\frac{4\lambda}{S^2N}(X - x)\right)\right] \le \exp\left(\frac{2\lambda^2}{S^2N^2}\right).
\end{align*}
We finish the proof by rearranging the terms and applying the above inequality:
\begin{align*}
\E\left[
\frac{\phi(X, t+1, \lambda, S, N)}{\phi(x, t, \lambda, S, N)}
\right]
&=
\E\left[
    \exp\left(\frac{4\lambda}{S^2N}\left(X - \frac{\lambda}{2}\left(1+ \frac{t+1}{N}\right)\right)\right)
     / 
    \exp\left(\frac{4\lambda}{S^2N}\left(x - \frac{\lambda}{2}\left(1+ \frac{t}{N}\right)\right)\right)
\right]
\\ &=
\E\left[
    \exp\left(\frac{4\lambda}{S^2N}\left(X - x\right) - \frac{2\lambda^2}{S^2N^2}\right)
\right]
 \le 1.\qedhere{}
\end{align*}
\end{proof}

The above lemma thus implies that our martingale-controlling potential $\Phi_{Z}(t)$ is non-increasing in expectation. This means that a deterministic algorithm can always instead choose the random bits in such a way that the potential $\Phi_{Z}$ is non-increasing. If it started out smaller than one, then it will end smaller than one, and thus guaranteeing concentration of $Z\idxt$ for all $t\le N$. We summarize our findings for $\Phi_{Z}$ in the following theorem, which immediately follows from the two preceding lemmas.

\begin{theorem}
    \label{lem:potential-of-martingale}
    Suppose $Z\idxz, Z\idx{1},\ldots$ is a supermartingale with step size $S$, and that we are given a parameter $\lambda > 0$. We define a potential:
    \begin{equation*}
        \Phi_Z(t) = \phi(Z\idxt - Z\idxz, s_{Z}(t), \lambda, S, N),
    \end{equation*}
    where $s_{Z}(t)$ counts the number of non-trivial steps the martingale has taken up to time $t$. That is $s_{Z}(0) = 0$, and whenever $Z\idxtp \neq Z\idxt$, then $s_{Z}(t+1) = s_{Z}(t)+1$ (for trivial steps where $Z\idxtp=Z\idxt$, we simply let $s_{Z}(t+1)=s_{Z}(t)$). We set $N$ above as $s_{Z}(\infty)$, i.e., the total number of non-trivial steps.
Then, this potential satisfies the following properties:
\begin{itemize}
    \item If $\Phi_{Z}(t) < 1$, then $Z\idxt -Z\idxz < \lambda$.
    \item Initially $\Phi_{Z}(0) = \phi(0, 0, \lambda, S, N) = \exp(-\frac{2\lambda^2}{S^2 N})$. 
This term exactly corresponds to the bound in Azuma's Inequality.
    \item $\Phi_{Z}(t)$ is, in expectation, non-increasing. More formally: suppose $Z\idxz,\ldots, Z\idxt$ are measurable (i.e., all random bits have---deterministically---already been resolved for them). Then $E[\Phi_{Z}(t+1)] \le \Phi_{Z}(t)$, where the expectation is over the remaining random bits for $Z\idxtp$.
\end{itemize}
\end{theorem}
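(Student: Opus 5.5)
The theorem combines \Cref{lem:potential-implies-concentration} with the preceding Hoeffding-type lemma. The one new ingredient is the step counter $s_Z(t)$, which replaces the raw time $t$. I would verify the three bullets in order.

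For the first bullet, note that $s_Z(t)\le s_Z(\infty)=N$ holds for every $t$, because $s_Z$ is non-decreasing and only counts non-trivial steps. So if $\Phi_Z(t)=\phi(Z\idxt-Z\idxz, s_Z(t),\lambda,S,N)<1$, then \Cref{lem:potential-implies-concentration} applies with $X=Z\idxt-Z\idxz$ and time parameter $s_Z(t)\le N$, and it gives $Z\idxt-Z\idxz<\lambda$.

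The second bullet is a direct substitution. Since $s_Z(0)=0$ and $X=0$, the exponent is $\frac{4\lambda}{S^2N}\cdot(-\frac{\lambda}{2})=-\frac{2\lambda^2}{S^2N}$.

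For the third bullet, I condition on $Z\idxz,\dots,Z\idxt$ and split into two cases according to whether step $t+1$ is trivial.

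\emph{Trivial step.} If $Z\idxtp=Z\idxt$, then $s_Z$ does not advance and $\Phi_Z(t+1)=\Phi_Z(t)$ exactly.

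\emph{Non-trivial step.} Here $s_Z(t+1)=s_Z(t)+1$. Set $X:=Z\idxtp-Z\idxz$ and $x:=Z\idxt-Z\idxz$. The supermartingale property gives $\E[X]\le x$, and the step-size condition gives $x+A_t\le X\le x+B_t$ with $B_t-A_t\le S$. The preceding lemma, applied with time parameter $s_Z(t)$, then yields $\E[\Phi_Z(t+1)]\le \Phi_Z(t)$.

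The main subtlety is that whether a step is trivial may itself be random, so the two cases cannot simply be separated by conditioning. I would handle this as in the standard Azuma argument. The analysis assumes the set of non-trivial steps is determined in advance, i.e., by the arrival order and graph structure, so that $N$ and $s_Z$ are deterministic given the past. In the applications, "non-trivial" means that the arriving edge is in the relevant local neighbourhood, which is known beforehand. Under this reading the case split is legitimate.

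If instead one insists on random triviality, I would argue as follows. The exponent is increasing in $X$ and decreasing in $s$. A step that happens to be trivial contributes a factor of exactly $1$ rather than the factor $\exp(-2\lambda^2/(S^2N^2))<1$ that a non-trivial step would receive. So one bounds the trivial outcome by what the untruncated step would give, and the combined expectation still stays at most $\Phi_Z(t)$ when computed with the pre-declared counter. This bookkeeping is the step I expect to require the most care.
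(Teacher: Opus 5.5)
Your treatment of the first two bullets, and of the non-trivial-step case of the third, is exactly the paper's argument. The paper simply says the theorem follows immediately from \Cref{lem:potential-implies-concentration} (used with time parameter $s_Z(t)\le N$) and the preceding Hoeffding-type lemma.

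Where you go wrong is the ``subtlety'' paragraph. Your fallback for random triviality fails as written. Write $x=Z\idxt-Z\idxz$. A trivial outcome keeps $\Phi_Z$ at $\phi(x,s_Z(t),\lambda,S,N)$. This is \emph{larger} than the value an untruncated step would give, namely $\phi(x,s_Z(t)+1,\lambda,S,N)=\phi(x,s_Z(t),\lambda,S,N)\,e^{-2\lambda^2/(S^2N^2)}$. So the trivial outcome cannot be bounded by the untruncated one. Your primary fix also overstates what holds in the applications. For $H_U$ and $X_{C,U}$, whether step $t$ counts depends on $Z_{e_t}$ (resp.\ $Z_{e_tC}$) and on badness, i.e.\ on past randomness. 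It is therefore predictable, but not predetermined by the graph and arrival order.

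Neither assumption is needed, because the two cases \emph{can} be separated by conditioning. Let $D=Z\idxtp-Z\idxt$ and $E=\{D\neq 0\}$. Since $D=0$ off $E$, we have $\E[D]=\Pr[E]\cdot\E[D\mid E]$, so $\E[D\mid E]\le 0$. The bounds $A_t\le D\le B_t$ also still hold conditionally on $E$. Applying the Hoeffding-type lemma to the conditional law then gives
\[ \E[\Phi_Z(t+1)]=\Pr[E^{c}]\,\Phi_Z(t)+\Pr[E]\,\E[\phi(Z\idxtp-Z\idxz,s_Z(t)+1,\lambda,S,N)\mid E]\le\Phi_Z(t). \]

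The hidden assumption that genuinely matters is that $N$ be deterministic. If $N=s_Z(\infty)$ is random, then $\Phi_Z(t)$ depends on the future and cannot be evaluated at time $t$. The Hoeffding step, which treats $N$ as a constant, also breaks. So $N$ should be read as a fixed a priori upper bound on the number of non-trivial steps, as the applications use: $\Delta^2$, $\alpha\Delta^2$, $2|M|\Delta$. Both lemmas go through verbatim with such a bound.
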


\paragraph{Derandomizing union bounds.}
The preceeding lemma about the non-increasing expectation of the potential is very useful in combination with \emph{linearity of expectation}. Say we have (polynomially) many martingales
$(Z_1)\idx{t\ge0}, (Z_2)\idx{t\ge0}, \ldots, (Z_{\poly(n)})\idx{t\ge0}$.
Say, for each $i$, Azuma's inequality gives that the $i$'th martingale $Z_{i}$ is concentrated with high probability (in $n$) if we run some randomized algorithm. A union bound then tells us that with high probability \emph{all} such martingales are simultaneously concentrated.

We can derandomize the union bound by creating a global potential function by simply summing up the individual potential functions for the martingales:
\begin{align*}
\Phi(t) = \sum_{i} \Phi_{Z_{i}}(t).
\end{align*}
Now, by linearity of expectation, 
$\Phi(t)$ is also non-increasing (in expectation). Again, whenever the randomized algorithm needs to use random bits, a deterministic algorithm can choose them in such a way that $\Phi(t)$ is non-increasing. Thus, if $\Phi(0) < 1$, then $\Phi(t) < 1$ too, 
and in particular $\Phi_{Z_{i}}(t) < 1$ for each martingale $Z_{i}$ (since the potentials are all non-negative). By \cref{lem:potential-implies-concentration}, we thus have deterministically guaranteed the concentration bounds $Z\idxt_{i} -Z\idxz_{0} \le \lambda$ for all $i$. What remains is to verify that $\Phi(0) < 1$. This is just 
$\sum_{i}\Phi_{Z_{i}}(0)$, and as we explained previously these terms correspond to exactly the (probability) guarantees Azuma's inequality gives, and hence if Azuma + Union bound guaranteed probability $<1$ of failure for the randomized algorithm, then
$\Phi(0) < 1$ for the deterministic algorithm too.

\subsubsection{\texorpdfstring{Definitions from \cite{BlikstadSVW25}}{Definitions from [BSVW25]}}
Before setting up the potential, we will need the following definitions from \cite{BlikstadSVW25}. The following definitions are taken (almost) verbatim from \cite{BlikstadSVW25}.

\begin{definition}[Parameters] \label{def:parameters-obliv}
    Define $\varepsilon := c_\varepsilon \cdot \left( \frac{\sqrt{\ln n}}{\Delta} \right)^{1/16}$ and $A := \frac{c_A}{\varepsilon^2 \Delta}$. Here, $c_\varepsilon := 10$ and $c_A := 4$ are the same constants as before. Additionally, let $c_K := 35c^2_A$ be a new constant,
    and let $\alpha := \varepsilon^3 / 100$. 
\end{definition}

\begin{definition}[Bad and Dangerous Vertices] \label{def:bad_vertices_etc}
    Consider a fixed instance of the online edge coloring problem, and let $v \in V$ be an arbitrary vertex. Let $\udeg^{(t)}(v)$ be the value of $\udeg(v)$ at time step $t$, during the execution of \cref{alg:edge-coloring}. Vertex $v$ is \emph{bad} (at time step $t$), if $\udeg^{(t)}(v) \geq 2c_K \cdot  \varepsilon \Delta$. Else, $v$ is \emph{good}.

    Further, for any vertex $v \in V$ and time step $t$, let $\baddeg^{(t)}(v)$ be the number of neighbors $w$ of $v$ such that: the edge $f := \{v,w\}$ arrived at time $t_f$ and vertex $w$ was \emph{bad} at time step $t_f$. This is the number of neighbors of $v$, taken up to time step $t$, which were \emph{bad} when they were connected to $v$. 
    Vertex $v$ is \emph{dangerous} (at time $t$) if $\baddeg^{(t)}(v) \geq \alpha \Delta$.
\end{definition}

In the following, fix an instance of the online edge coloring problem, and an edge $e = \{u,v\}$ with arrival time $t_e$. Consider the edges $\delta(e)$ that intersect with $e$, and which arrive before $e$.
\begin{definition} \label{def:classifying_edges}
    We partition the edges of $\delta(e)$ into three subsets $\delta(e)_{\text{good}}$, $\delta(e)_{\text{bad}}$ and $\delta(e)_{\text{rest}}$, as follows:
    \begin{itemize}
        \item It holds that $f \in \delta(e)_{\text{good}}$ if, at the time of arrival, none of the two vertices to which $f$ is incident are \emph{bad}.
        \item It holds that $f \in \delta(e)_{\text{bad}}$ if $f$ is incident to a bad vertex on arrival, but this bad vertex is not $u$ or $v$.
        \item It holds that $f \in \delta(e)_{\text{rest}}$ if $f$ is incident to a bad vertex on arrival, and this vertex is $u$ or $v$.
    \end{itemize}
\end{definition}
\begin{definition} \label{def:Z_extension}
    Define $Z_{e} = \sum_{c\in [\Delta]} P\idxt_{ec}$. Also,
    for a subset $C \subseteq \Calg$, define $Z^{(t)}_{eC} = \sum_{c \in C} P^{(t)}_{ec}$.
\end{definition}
\begin{definition} \label{def:lb_on_Z_extended}
For each color $c\in \Calg$, time step $t$ and non-arrived edge $f\in E$ at this time step, let
\begin{equation*}
\overline{P^{(t)}_{fc}}
=
\begin{cases}
P^{(t-1)}_{fc} / (1-P^{(t-1)}_{e_tc}) & \text{if in Line~\ref{line:obliv:scaleup} no assignment to $P^{(t)}_{fc}$ is made because of ${P^{(t-1)}_{fc} > A}$}\\
P^{(t)}_{fc} & \text{otherwise.}
\end{cases}
\end{equation*}
    Define also, for any fixed subset $C \subseteq \Calg$:
    \begin{equation*}
        \overline{Z^{(t)}_{eC}} = \sum_{c \in C} \overline{P^{(t)}_{ec}} \text{ \ \ and \ \ } Y^{(t)}_{eC} 
        = \sum_{t'=1}^t 1[e_{t'} \in \delta(e)_{\text{good}}] \cdot \left( \overline{Z^{(t')}_{eC}} - Z^{(t'-1)}_{eC} \right).
    \end{equation*}
\end{definition}

\begin{definition} \label{def:introducing_Q}
    For $w\in V$, let $U_{w} = \{(f,t_{f}) : \text{edges $e$ incident to $w$}\}$, where
    $t_{f}$ is the arrival time of edge $f$ (possibly in the past or future).
    For $w\in V$, any $(f,t_f) \in U_w$, and for the fixed color $c \in \Calg$, let:
    \begin{equation*}
        R^{(t)}_{fc} = P^{(t)}_{fc} \cdot \prod_{\substack{g \in U_w \\ t_g \leq t}} \left( 1 - P^{(t_g-1)}_{gc} \right).
    \end{equation*}
    Finally, define $Q^{(t)}_{U_{w}c} = \sum_{f \in U_w} R^{(\min\{t,t_f - 1)\}}_{fc}$.
\end{definition}

\paragraph{Complimentary definitions.} Here we also define some new properties.
Fix a vertex $w$, and a subset $U\subseteq N(w)$. Let $G[U]$ be the graph containing
all edges incident to a vertex in $U$. The \SLOCALedge{} algorithm can by exploring the $3$-hop neighborhood from $w$, find all edges in $G[U]$. Since $G$, and thus $G[U]$, has max-degree $\Delta$, the edges of $G[U]$ can be covered by $\Delta+1$ matchings. Define 
$\mathcal{M}_{w,U} = \{M_{1}, \ldots, M_{\Delta+1}\}$ to be a canonical way of partitioning the edges $G[U]$ into matchings.

\begin{definition}
    For a matching $M$ and a subset of colors $C$, define
    $K_{MC} = \sum_{e\in M}\sum_{c\in C} P_{ec}$.
\end{definition}

\subsubsection{Setting up a Complicated Potential}

We now set up our potential as a sum of potentials guaranteeing each of the invariants.
For a vertex $w$, we will define
\begin{equation}
    \Phi_{w} =
    \Phi_{w, \text{few bad colors}} + 
    \Phi_{w, \text{few bad neighbors}} +
    \Phi_{w, \text{bad vertex property}},
\end{equation}
and our global potential function as:
\begin{equation}
    \Phi = \sum_{w\in V} \Phi_{w}.
\end{equation}

In particular, we will see that $\Phi(0) < 1$, and that $\Phi_{w}$ can be computed locally by just inspecting the 3-hop neighborhood of $w$. The $\Phi_{w, \ldots}$ potentials will be defined in such a way that if $\Phi_{w,\ldots}(t) < 1$, then the corresponding invariant must hold at time $t$. Moreover, $\Phi_{w}$ will just be a sum of martingale-inspired potential of \cref{lem:potential-of-martingale}, and hence in expectation is non-increasing. There must thus always exist a deterministic choice that makes $\Phi(t+1)\le \Phi(t)$, whenever edge $e_{t}$ should be colored, proving \cref{lem:potential-gives-slocal}.

In particular, we prove the following three similar lemmas:

\begin{lemma}[Few Bad Colors]
    \label{lem:fewbadcol}
The potential $\Phi_{w, \text{few bad colors}}$ satisfies:
\begin{itemize}
    \item as long as $\Phi(t) < 1$, invariant~\labelcref{inv:fewbadcol} must hold.
        \item  $\sum_{w}\Phi_{w, \text{few bad colors}}(0) \le 2^{-50\Delta}n^{-50}$.
        \item  $\Phi_{w, \text{few bad colors}}$ is non-increasing in expectation.
        \item  $\Phi_{w, \text{few bad colors}}(t)$ can be computed by inspecting the $2$-hop neighborhood of $w$.
\end{itemize}
\end{lemma}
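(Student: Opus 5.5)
We need to define $\Phi_{w,\text{few bad colors}}$ as a sum of martingale potentials from \cref{lem:potential-of-martingale}, one for each piece of the Few Bad Colors key lemma of \cite{BlikstadSVW25}, and then check the four properties one by one. Recall why that lemma holds. For a fixed edge $e$ and color set $C$, a color $c$ can only have $P_{ec}>A$ if it was scaled up many times, i.e.\ $\overline{Z_{eC}}$ grows well beyond $(1-\eps)|C|/\Delta$. The relevant process is $Y^{(t)}_{eC}$ from \cref{def:lb_on_Z_extended}: it counts the increments of $\overline{Z_{eC}}$ that come only from good neighboring arrivals, and it is a martingale. Its expected increment is zero, because the scale-up by $1/(1-P_{e_tc})$ exactly compensates the probability $P_{e_tc}$ that $c$ is zeroed out. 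Its step size is $O(|C|A)$, since thresholding caps every $P_{ec}$ at $A$ before scaling.

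The argument of \cite{BlikstadSVW25} then takes a union bound over all sets $C$ of size $\eps^5\Delta$. If more than $2\eps^5\Delta$ colors exceeded $A$, some such $C$ would have $Z_{eC}\ge \eps^5\Delta A = \eps^3 c_A$, far above its expectation $\approx \eps^5$. Azuma then bounds the failure probability by $\exp(-\Omega(\poly(\eps)\Delta^2))$ per $(e,C)$.

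\textbf{Definition of the potential.} For every vertex $w$, I assign the edges $e$ incident to $w$ with a canonical rule, e.g.\ to the endpoint with smaller ID. I then set
\begin{equation*}
\Phi_{w,\text{few bad colors}}(t)=\sum_{e\ni w \text{ assigned to } w}\ \sum_{C\subseteq\Calg,\,|C|=\eps^5\Delta}\phi\big(Y^{(t)}_{eC},\, s_{eC}(t),\,\lambda,\,S,\,N\big),
\end{equation*}
with $\lambda$ chosen as in the randomized analysis, $S=O(|C|A)$, and $N\le 2\Delta$. The number of nontrivial steps is at most the number of neighboring edges of $e$.

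\textbf{Verifying the four properties.}
\begin{itemize}
\item \emph{Non-increasing in expectation.} This follows from \cref{lem:potential-of-martingale} and linearity of expectation, once the supermartingale and step-size properties of $Y_{eC}$ are taken from \cite{BlikstadSVW25}.
\item \emph{Invariant~\labelcref{inv:fewbadcol}.} If $\Phi(t)<1$, then every summand is below $1$. By \cref{lem:potential-implies-concentration} this gives $Y_{eC}<\lambda$ for all $e$ and $C$. The deterministic part of their proof then converts this into $Z_{eC}$ being small, and hence into at most $2\eps^5\Delta$ colors above $A$. Here I will reuse their argument that contributions from $\delta(e)_{\text{bad}}$ and $\delta(e)_{\text{rest}}$ only decrease $P_{ec}$ and can be dropped.
\item \emph{Initial value.} Each summand starts at $\exp(-2\lambda^2/(S^2N))$. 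Multiplying by $n\Delta\binom{\Delta}{\eps^5\Delta}\le n\,2^{O(\Delta)}$ must give at most $2^{-50\Delta}n^{-50}$. This is where $\eps\ge c_\eps(\sqrt{\log n}/\Delta)^{1/16}$ enters: it gives $\lambda^2/(S^2N)\gtrsim \poly(\eps)\Delta\cdot\Delta/\eps^{O(1)}\ge \Omega(\log n+\Delta)$ with large constants.
\item \emph{Locality.} $Y_{eC}$ for an edge $e\ni w$ depends on $P_{ec}$ at each arrival of an adjacent edge $f$, which is stored at $f$. It also depends on whether $f$ was good on arrival, i.e.\ on the badness of $f$'s far endpoint, and that is determined by the line/mark information stored at edges incident to that endpoint. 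All of these edges lie within the $2$-hop neighborhood of $w$. Future edges enter only through their existence, which fixes $N$.
\end{itemize}

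\textbf{Main obstacle.} The hardest step is matching exactly the martingale used in \cite{BlikstadSVW25}. In particular, $Y$ has a well-defined step-size bound only under the invariants at the current time. I would handle this by freezing the martingale (making all further steps trivial) as soon as an invariant is violated, which cannot happen while $\Phi<1$. I would also check that the potential's parameters do not depend on the arrival order, which determines $N$ via the number of neighbors of $e$ and is available locally.
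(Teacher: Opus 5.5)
There is a genuine gap in the initial-value step.

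\textbf{Why the per-edge martingale cannot work.} Your process $Y_{eC}$ moves only when an edge adjacent to $e$ arrives, so it has $N\le 2\Delta$ nontrivial steps. Whenever $\lambda\le NS$, the Azuma exponent satisfies $2\lambda^2/(S^2N)\le 2N\le 4\Delta$. This case applies here: the deviation you need, about $c_A\eps^3$, is far below $NS=\Theta(1/\eps^2)$ even for the sharpest step size $S=\Theta(A)$.

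Concretely, $S=\Theta(A)$ gives an exponent of $\Theta(\eps^{10}\Delta)$. Your stated $S=O(|C|A)$ gives only $O(1/\Delta)$, because then $\lambda$ and $S$ are of the same order. So every summand starts at $e^{-O(\Delta)}$. Summing over all edges gives $\Phi(0)\ge |E|\,e^{-O(\Delta)}\gg 1$, e.g.\ for a $\Delta$-regular graph with $\Delta=\Theta(\sqrt{\log n})$. You therefore do not even get $\Phi(0)<1$, let alone $2^{-50\Delta}n^{-50}$.

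The claimed bound $\exp(-\Omega(\poly(\eps)\Delta^2))$ per pair $(e,C)$ cannot come from any per-edge process. This is exactly the $2^{-\Delta}$ barrier that limits the simplified algorithm to $\Delta=\Omega(\log n)$.

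\textbf{The missing idea: aggregate at a vertex.} The paper sets $\Phi_{w,\text{few bad colors}}=\sum_{|C|=\eps^5\Delta}\Phi_{Q_{U_wC}}$, with $Q_{U_wC}$ from Definition~\ref{def:introducing_Q}. This quantity sums, over all edges $f$ at $w$ and colors $c\in C$, the values $P_{fc}$ discounted by $\prod_g(1-P_{gc})$ over the arrived edges $g$ at $w$.

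By \cite[Lemma~3.24]{BlikstadSVW25}, $Q_{U_wC}$ is a supermartingale with step size $24A$ and up to $\Delta^2$ nontrivial steps, one per arrival in the $2$-hop neighborhood of $w$. The deviation to control is $\lambda=\eps^6\Delta/2$, linear in $\Delta$. The exponent becomes $\eps^{16}\Delta^2/(2\cdot 24^2c_A^2)$. By $\eps\ge c_\eps(\sqrt{\log n}/\Delta)^{1/16}$ this is $\Omega(c_\eps^{16}\sqrt{\log n}\,\Delta)$. That absorbs the $n\cdot 2^{\Delta}$ summands, still leaves $2^{-50\Delta}n^{-50}$, and explains the exponent $16$ in the choice of $\eps$.

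The link to the Few Bad Colors invariant is deterministic and is imported from \cite[Lemma~3.19]{BlikstadSVW25}: whether a color $c$ can be bad on an edge at $w$ is governed by where $Q_{U_w\{c\}}$ sits relative to the threshold $1-\eps/2$. So an edge violating the invariant yields, at one of its endpoints $w$, a set $C$ of $\eps^5\Delta$ colors for which $Q_{U_wC}$ has deviated by $\eps^6\Delta/2$ from its initial value $\eps^5\Delta-\eps^6\Delta$.

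With this potential, your non-increase and locality arguments go through essentially as planned. The initial-value bound is what forces the switch from $Y_{eC}$ to $Q_{U_wC}$.
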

\begin{lemma}[Few Bad Neighbors]
    \label{lem:fewbadnei}
The potential $\Phi_{w, \text{few bad neighbors}}$ satisfies:
\begin{itemize}
    \item as long as $\Phi(t) < 1$ and invariant~\labelcref{inv:fewbadcol} holds, invariant~\labelcref{inv:fewbadnei} must also hold.
        \item  $\sum_{w}\Phi_{w, \text{few bad neighbors}}(0) \le 2^{-50\Delta}n^{-50}$.
        \item  $\Phi_{w, \text{few bad neighbors}}$ is non-increasing in expectation.
        \item  $\Phi_{w, \text{few bad neighbors}}(t)$ can be computed by inspecting the $3$-hop neighborhood of $w$.
\end{itemize}
\end{lemma}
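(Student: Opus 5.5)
The potential $\Phi_{w,\text{few bad neighbors}}$ will be built by following the randomized proof of the ``Few Bad Neighbors'' lemma in \cite{BlikstadSVW25} and replacing each concentration bound there by a martingale potential from \cref{lem:potential-of-martingale}. In that analysis, a vertex $w$ becomes dangerous only if many neighbours $u\in N(w)$ turn bad, i.e.\ accumulate $\udeg(u)\ge 2c_K\eps\Delta$. An incident edge $(u,x)$ increments $\udeg(u)$ only when it is marked in Line~\ref{line:obliv:mark_z_ge_1} or Line~\ref{line:obliv:mark_bot}, which happens (up to a small error) with probability $1-Z_{(u,x)}$. The key step is to bound these marking events jointly over all $u\in N(w)$. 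Because they are not independent, I would use the matching partition $\mathcal{M}_{w,U}$ of $G[U]$ for $U\subseteq N(w)$, together with the quantities $K_{MC}$, $Y_{eC}$ and $Q_{U_w c}$ of \cref{def:lb_on_Z_extended,def:introducing_Q}. These quantities are exactly the (super)martingales used in \cite{BlikstadSVW25} to show that $Z_e$ stays close to $1-\eps$ on average over a matching. Thresholding at $A$ and invariant~\labelcref{inv:fewbadcol} (which we may assume by hypothesis) keep the step sizes at $O(A)=O(1/(\eps^2\Delta))$.

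\textbf{Construction.} I would proceed in the following order.
\begin{enumerate}[(1)]
\item For every $w$, every subset $U\subseteq N(w)$ of size $\alpha\Delta$, every matching $M\in\mathcal{M}_{w,U}$, and every relevant colour set $C$, define a martingale potential $\phi$ that controls the deviation of $K_{MC}$ (respectively $Y_{eC}$ or $Q_{U_w c}$). For each one, take $N$ to be the number of non-trivial steps (at most $O(\Delta^2)$) and $\lambda=\Theta(\eps^{c}\Delta)$ for a suitable small power $c$, as in \cite{BlikstadSVW25}.
\item Define a Chernoff-type potential on the indicator sum $\sum_{u\in U}\sum_{x}1[(u,x)\text{ marked}]$. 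Its steps are the marking indicators minus their conditional probabilities $1-Z$, so it is a supermartingale with step size $1$.
\item Combine: if $U$ consists entirely of bad vertices, then the marked count is at least $2c_K\eps\Delta\cdot\alpha\Delta$. Its expectation, however, is at most about $\eps\cdot\alpha\Delta^2$ plus the deviation allowed by the potentials in (1). A sufficiently large $c_K$ then gives a deviation of order $\eps\alpha\Delta^2$, which forces some potential to be at least $1$.
\end{enumerate}
Setting $\Phi_{w,\text{few bad neighbors}}$ to be the sum of all of these potentials gives the first bullet of the lemma. The third bullet (non-increasing in expectation) follows from \cref{lem:potential-of-martingale} and linearity of expectation.

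\textbf{Initial value.} Each potential starts at $\exp(-\Omega(\lambda^2/(S^2N)))$, which is $\exp(-\Omega(\eps^{c'}\alpha^2\Delta^2))$. There are at most $n\cdot\binom{\Delta}{\alpha\Delta}\cdot(\Delta+1)\cdot 2^{\Delta}$ terms. Since $\Delta\ge c\sqrt{\log n}$ and $\eps=\Theta((\sqrt{\log n}/\Delta)^{1/16})$, the quantity $\eps^{O(1)}\Delta^2$ dominates $\log n+\Delta$ by a large constant factor. The parameter choices in \cref{def:parameters-obliv} are designed for exactly this, so the total is at most $2^{-50\Delta}n^{-50}$. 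This exponent bookkeeping, in particular tracking the powers of $\eps$ through $\alpha=\eps^3/100$, is where I expect the main obstacle to lie. I would verify it against the corresponding union bound in \cite{BlikstadSVW25}, since any loss there would break $\Phi(0)<1$.

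\textbf{Locality.} To compute $\Phi_w$ we need, for each $u\in N(w)$, its incident edges (distance $1$ from $w$), their $P$-values, and the $\udeg$ values of their other endpoints. The last of these requires the edges at distance $2$ from $w$, i.e.\ the $3$-hop neighbourhood of $w$. All of this is determined by the stored items (1)--(4) of \cref{lem:potential-gives-slocal}. Edges that have not yet arrived contribute only through their existence, which fixes $N$ and the matchings.
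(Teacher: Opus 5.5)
Your outline has the right ingredients (matching potentials, a Chernoff-type potential on marking events, and a union bound over $U\subseteq N(w)$ with $|U|=\alpha\Delta$). However, step (3) has a genuine gap.

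You bound the total conditional marking probability of the edges incident to $U$ by roughly $\eps\alpha\Delta^2$ ``plus the deviation allowed by the potentials in (1)''. That is, you rely on concentration of $K_{M,[\Delta]}=\sum_{e\in M}Z_e$ for the whole matchings $M\in\mathcal{M}_{w,U}$. This controls only the sum of the $Z_e$, not the individual values, and the marking probability is not linear in $Z_e$:
an edge with $Z_e>1$ is marked with certainty in Line~\ref{line:obliv:mark_z_ge_1}, so your compensator $1-Z_e$ is also wrong there and the indicator sum is not a supermartingale on such steps;
and an edge with $Z_e=0$ is marked with certainty in Line~\ref{line:obliv:mark_bot}.

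For instance, let half the edges of $M$ have $Z_e=2(1-\eps)$ and half have $Z_e=0$. Then $K_{M,[\Delta]}=|M|(1-\eps)$ satisfies both bounds of \cref{lem:matchings}, yet every edge of $M$ is marked. Per-edge potentials do not fix this: the potential controlling a single $Z_e$ starts at $\exp(-\Theta(\eps^{6}\Delta))\gg 1/n$ when $\Delta\approx\sqrt{\log n}$, so it cannot be summed over all edges.

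The missing idea is to turn sum-concentration into a \emph{count of outliers} by using submatchings.

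\textbf{Submatchings.} The paper applies \cref{lem:matchings} with $C=[\Delta]$ to \emph{every} submatching $M\subseteq M_i$ of size exactly $\eps\alpha\Delta$. This size also meets the hypothesis $|M|\cdot|C|^2\ge\eps^{10}\Delta^3$. The $2^{\Delta}$ factor in the union bound counts these submatchings; colour sets play no role in this lemma.

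\textbf{Counting outliers.} No such submatching with good endpoints can have all $Z_e>1$, by the upper bound $|M|(1-\eps/2)$. Nor can it have all $Z_e<1-c_K\eps$, by the lower bound $|M|(1-c_K\eps)$; this is where invariant~\labelcref{inv:fewbadcol} enters. Hence each $M_i$ contains at most $2\eps\alpha\Delta$ good-endpoint edges with $Z_e\notin[1-c_K\eps,1]$. These are counted deterministically and contribute at most $4\eps\alpha\Delta(\Delta+1)$ to $\sum_{u\in U}\udeg(u)$.

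\textbf{Remaining edges.} The Chernoff-type supermartingale $H_U$ is restricted to good-endpoint edges with $Z_e\in[1-c_K\eps,1]$, using the fixed compensator $c_K\eps$. It has step size $1$, at most $\alpha\Delta^2$ steps, and $\lambda=\eps\alpha\Delta^2$, so at most $(1+c_K)\eps\alpha\Delta^2$ of these edges are marked.

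Summing gives $\sum_{u\in U}\udeg(u)<2c_K\eps\alpha\Delta^2$, contradicting that all of $U$ is bad.
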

\begin{lemma}[Bad Vertex Property]
    \label{lem:badvertexprop}
The potential $\Phi_{w, \text{bad vertex prop}}$ satisfies:
\begin{itemize}
    \item as long as $\Phi(t) < 1$ and invariant~\labelcref{inv:fewbadnei} holds, invariant~\labelcref{inv:badvertexprop} must also hold.
        \item  $\sum_{w}\Phi_{w, \text{bad vertex prop}}(0) \le 2^{-50\Delta}n^{-50}$.
        \item  $\Phi_{w, \text{bad vertex prop}}(0)$ is non-increasing in expectation.
        \item  $\Phi_{w, \text{bad vertex prop}}(t)$ can be computed by inspecting the $3$-hop neighborhood of $w$.
\end{itemize}
\end{lemma}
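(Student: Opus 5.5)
Each $\Phi_{w,\text{bad vertex prop}}$ will be built as a sum of martingale potentials from \cref{lem:potential-of-martingale}, one for each potential "turning-bad time'' of $w$. The randomized proof of the Bad Vertex Property in \cite{BlikstadSVW25} fixes a vertex $v$ (here $w$) that turns bad at time $t_0$. It then considers the subsequent arrivals of incident edges $e_t=(u,w)$ and the indicator $X_t=1[Z\idx{t_{e_t}-1}_{e_t}=0]$. For each such edge, the probability that all colours are burned is tiny, at most $\eps^{c}$. This holds because, conditioned on invariant~\labelcref{inv:fewbadnei}, the colours still available at $e_t$ come from two good-ish endpoints whose $Q_{U_w c}$-type quantities stay concentrated. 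So we define the supermartingale
\[
Z\idxt=\sum_{t'\le t}\bigl(X_{t'}-\E[X_{t'}\mid\text{past}]\bigr),
\]
with step size $S=1$, $N\le\Delta$ nontrivial steps, and $\lambda=\eps\Delta/2$. We then set $\Phi_{w,\text{bad vertex prop}}=\sum_{t_0}\Phi_{Z_{w,t_0}}$, restricting the sum to the at most $\Delta$ incident-arrival times $t_0$. If needed, we also add the auxiliary martingales (the $Q_{U_w c}$ and $K_{MC}$ sums over the canonical matchings $\mathcal M_{w,U}$ for $U\subseteq N(w)$) that the original proof uses to bound these conditional probabilities.

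First I check the first bullet. If $\Phi<1$ then each summand is $<1$, and \cref{lem:potential-implies-concentration} gives $\sum X_{t'}<\sum\E[X_{t'}]+\lambda$. Invariant~\labelcref{inv:fewbadnei} (and \labelcref{inv:fewbadcol}, via the earlier lemmas) bounds each conditional expectation, so the total is at most $\eps\Delta$. This is exactly \labelcref{inv:badvertexprop}.

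Second, the initial value. Each term is $\exp(-2\lambda^2/N)\le\exp(-\eps^2\Delta/2)$, and we sum over at most $n\Delta$ (or $n\poly(\Delta)$, counting auxiliary terms) pairs. With $\eps=c_\eps(\sqrt{\ln n}/\Delta)^{1/16}$ this exponent alone is not $\Omega(\Delta^2)$. So, as in \cite{BlikstadSVW25}, the martingale must sum over the $\Theta(\Delta^2)$ edges of $G[N(w)]$ rather than the $\Delta$ edges at $w$, with step size $O(1/\Delta)$ coming from the individual $P$'s. Azuma then yields $\exp(-\Omega(\poly(\eps)\Delta^2))\le 2^{-50\Delta}n^{-50}$, using $\Delta^2\ge c\log n/\eps^{32}$. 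I will take exactly the martingale of their key lemma and only verify the resulting constants.

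Third, monotonicity in expectation is immediate from \cref{lem:potential-of-martingale} and linearity, since only Line~\ref{line:obliv:sample} is random. Fourth, locality: every quantity involved ($X_t$, the $P_{e_tc}$ for edges at $w$, the $Q_{U_wc}$ built from edges incident to $N(w)$, and the matchings $\mathcal M_{w,U}$) is determined by edges within the 3-hop neighborhood of $w$ together with their stored data (1)--(4).

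The main obstacle is the initial-value bound: identifying a martingale whose Azuma bound is $2^{-\Omega(\Delta^2)}$, while the conditional probability bound on $X_t$ is still justified using only invariant~\labelcref{inv:fewbadnei}. I expect the "turning bad'' time $t_0$ to be random, which requires care. My plan is to define the martingale for every potential $t_0$, making its steps trivial until $w$ actually turns bad, so that $N$ remains deterministic-bounded by the number of edges in $G[N(w)]$.
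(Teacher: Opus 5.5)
Your plan has a genuine gap at its core. The martingale you write down does not work. $X_t=1[Z^{(t_{e_t}-1)}_{e_t}=0]$ is already determined by the history before $e_t$ arrives, so it is not a random step of the process. Hence $X_{t'}-\E[X_{t'}\mid\text{past}]\equiv 0$, and bounding $\sum\E[X_{t'}\mid\text{past}]$ is the same as bounding $\sum X_{t'}$, which makes the first-bullet argument circular. You also never justify why the probability that all colors at $e_t$ are burned should be small. Even if it were, $N\le\Delta$ steps only give $2^{-O(\poly(\eps)\Delta)}$, which (as you note yourself) cannot pay for the union bound over $n$ vertices when $\Delta\approx\sqrt{\log n}$. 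Your fallback, ``take exactly the martingale of their key lemma'' with step size $O(1/\Delta)$ from the $P$'s, leaves unspecified precisely what this lemma must supply. So the first two bullets are not established; only monotonicity and locality are, generically.

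The missing idea is a combinatorial reduction. It turns the event ``$\eps\Delta$ problematic edges at $w$'' into a large deviation of a Chernoff-type count over $\Theta(\eps\Delta^2)$ genuinely random coloring steps, union-bounded over pairs $(C,U)$.

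Suppose \labelcref{inv:badvertexprop} fails, and let $U\subseteq N(w)$ be $\eps\Delta$ problematic neighbors. Since $w$ is bad, at least $2c_K\eps\Delta$ of its edges were marked and sent to greedy. So there is a set $C$ of $\Theta(\eps\Delta)$ main-palette colors never used at $w$. Because $Z_{(u,w)}=0$, every $u\in U$ must already use all of $C$ on its other edges. Hence $X'_{C,U}$ (the number of edges at vertices of $U$, other than those to $w$, colored from $C$) is at least $|U|\cdot|C|$.

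The paper refutes this by splitting the edges incident to $U$ into three groups:
\begin{enumerate}
\item Edges with a bad endpoint contribute at most $2(\alpha+\eps\alpha)\Delta^2$. This is where \labelcref{inv:fewbadnei} is used.
\item Edges with $Z_{eC}>(1-\frac{\eps}{2})\frac{|C|}{\Delta}$ number at most $\eps^3\Delta$ per canonical matching, via the potentials $\Phi_{M,C}$ of \cref{lem:matchings}.
\item The remaining edges are controlled by the supermartingale $X_{C,U}$. Its step at $e_t$ is $s_{e_t}\bigl(1[e_t\text{ colored from }C]-(1-\frac{\eps}{2})\frac{|C|}{\Delta}\bigr)$, which is random at step $t$, with step size $2$ and at most $\eps\Delta^2$ non-trivial steps.
\end{enumerate}
Azuma with $\lambda=\Theta(\eps^3\Delta^2)$ gives $\exp(-\Omega(\eps^5\Delta^2))$. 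This absorbs the $2^{O(\Delta)}$ choices of $(C,U)$ per vertex and the $n$ vertices.

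This route also removes your concern about the random turning-bad time $t_0$. No conditioning on $t_0$ is needed, since the argument only uses that the colors in $C$ are free at $w$ throughout.
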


If all these three lemmas hold, then \cref{lem:potential-gives-slocal} follows, and thus also \cref{thm:slocal-det}.

\paragraph{Bit-complexity.} For our \CONGEST algorithms it is important that the messages are not too large. In particular, if we naively stored the $P_{fc}$'s exactly, they might require up to $\approx \Delta^\Delta \gg 2^{\sqrt{\log n}}$ bits to represent. This is way too much to send on an edge in the \CONGEST model. We note below that it is sufficient to store them to $O(\log \Delta)$ bits of precision.
 This is extra useful since if $\Delta \approx \log n$ a \CONGEST algorithm can send $O(\log n / \log \log n) = \tilde{O}(\Delta^2)$ many of these variables over a single edge in a single round.

\begin{lemma}[Rounding the $P_{fc}$'s]
    The $P_{fc}$'s can be rounded to the nearest multiple of $1/\Delta^{10}$ after they are computed, and the algorithm will still work.
\end{lemma}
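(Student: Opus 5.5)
The approach is to view rounding as a small deterministic perturbation of the quantities that the analysis of \cite{BlikstadSVW25} (and our potential-based re-analysis) tracks, and to show that every bound tolerates it. Concretely, whenever an edge $e_t$ is processed and the values $P^{(t)}_{fc}$ of neighboring unarrived edges $f$ are recomputed, we replace each computed value by its rounding $\tilde P^{(t)}_{fc}$ to the nearest multiple of $1/\Delta^{10}$. We then feed the rounded values into all subsequent steps: sampling, scaling, and the potentials. Since any fixed edge $f$ is updated at most $2\Delta$ times (once per arriving neighbor), the accumulated additive error in a single entry is at most $2\Delta\cdot\Delta^{-10}$ per update step, up to amplification by the scaling factors $1/(1-P_{ec})$. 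Those factors are at most $1+O(A)$ and are applied at most $2\Delta$ times, so the amplification is $\exp(O(A\Delta))=\exp(O(1/\eps^2))=O(1)$ for constant $\eps$. Hence each entry satisfies $|\tilde P_{fc}-P_{fc}|\le O(\Delta^{-9})$, and each row sum satisfies $|\tilde Z_f - Z_f|\le O(\Delta^{-8})$. Here $P_{fc}$ denotes the value the unrounded update rule would produce from the same history.

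The key step is to redo the martingale/pessimistic-estimator argument with rounded values, treating rounding as a deterministic drift. In the sampling step, the expectation of each tracked quantity (e.g.\ $Z_{eC}$, $Y_{eC}$, $Q_{U_wc}$, $K_{MC}$, badness counters) changes by at most its exact-algorithm expected change plus a drift term bounded by $O(\Delta^{-8})$ per step. Summing over at most $N\le\poly(\Delta)$ nontrivial steps, or more carefully $O(\Delta^2)$ steps for the local martingales (which only involve edges within distance $2$ of $w$), the total drift is $O(\Delta^{-6})$. This is negligible against the deviation thresholds $\lambda$, which are $\Omega(\eps^{c})$ or $\Omega(\eps\Delta)$ in the respective lemmas. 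We therefore redefine each supermartingale $Z$ as $Z^{(t)}-D\cdot s_Z(t)$ with $D=O(\Delta^{-8})$. This is a genuine supermartingale with essentially the same step size, so \cref{lem:potential-of-martingale} applies verbatim with $\lambda$ replaced by $\lambda - O(\Delta^{-6})$. That change alters the initial potentials $\Phi(0)$ by only a $1+o(1)$ factor in the exponent, so the bounds $2^{-50\Delta}n^{-50}$ in \Cref{lem:fewbadcol,lem:fewbadnei,lem:badvertexprop} survive with slightly worse constants, which the slack in $c_\eps$ absorbs.

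Two validity points must also be checked. First, the sampling distribution must still be well defined. Rounding can push $\sum_c \tilde P_{ec}$ above $1$ only if the true sum is within $O(\Delta^{-8})$ of $1$, and the algorithm already handles $Z_e>1$ by marking (Line~\ref{line:obliv:mark_z_ge_1}). Second, zeros must be preserved: values set to $0$ in Lines~\ref{line:obliv:burn} and~\ref{line:obliv:zero} stay exactly $0$ under rounding, so properness of the coloring is unaffected. Locality is unchanged, since rounding is a local operation on stored values. This matches item (3) of \cref{lem:potential-gives-slocal}, which only requires the values up to precision $\poly(\Delta)$.

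I expect the main obstacle to be the thresholding by $A$ and the $\overline{P}$ / $R_{fc}$ product quantities. Rounding can move a value across the threshold $A$, which changes whether it is scaled up. I would handle this by noting that crossing only occurs within $O(\Delta^{-9})$ of $A$, so its effect on $\overline{Z}$ and $Y$ is again an $O(\Delta^{-9})$ drift. Similarly, the products $\prod(1-P_{gc})$ in $Q_{U_wc}$ have $O(\Delta)$ factors, each perturbed by $O(\Delta^{-9})$, giving a relative error of $O(\Delta^{-8})$. This fits the same drift-correction scheme, so I would verify it lemma by lemma.
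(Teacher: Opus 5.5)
Your route is sound at the same level of detail as the paper's sketch, but it differs from the paper's.

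\textbf{The paper's route.} The paper rounds each $P_{ec}$ \emph{randomly} up or down so as to preserve its expectation. Each rounding is then itself a mean-zero martingale step of size about $\Delta^{-10}$, and every tracked quantity keeps its exact (super)martingale structure. The potentials merely see a few extra tiny steps; each is touched by at most $\Delta^3$ of the $P_{fc}$'s, so $\Phi(0)$ barely changes. The deterministic algorithm then resolves each up/down choice by the same potential minimization it uses for $K_t$.

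\textbf{Your route.} You round deterministically to the nearest multiple and accept a per-arrival bias of at most (number of entries the arrival touches)$\cdot\Delta^{-10}=O(\Delta^{-8})$. You then absorb the total bias over the $O(\Delta^2)$ non-trivial steps of each local martingale via the shift $Z^{(t)}-D\,s_Z(t)$ and a negligible decrease of $\lambda$.

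\textbf{What each buys.} Your version makes the algorithm simpler. Rounding is a fixed canonical local function, so no additional potential-guided decisions are needed, and every node recomputes the rounded values identically. The paper's version makes the analysis black-box. Nothing about drift has to be re-derived martingale by martingale; only step sizes and step counts grow slightly.

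\textbf{Two small points on your write-up.} First, you analyze the rounded process directly, with the $A$-threshold test applied to the rounded values. So the comparison with the unrounded process in your first paragraph is unnecessary, and so is the threshold-crossing worry. That comparison's $\exp(O(A\Delta))$ amplification is in any case $O(1)$ only for constant $\eps$. The per-step bias is bounded directly, with no amplification. Second, besides ``zeros stay zero'' you need the converse: a positive entry never rounds to $0$. This holds because updates only scale up or zero out, so positive entries are always at least $(1-\eps)/\Delta-O(\Delta^{-9})$. The converse matters because the Bad Vertex Property argument uses that $P_{ec}>0$ whenever $c$ is free at both endpoints of $e$.
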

\begin{proof}[Proof Sketch.]
    In the randomized algorithm, say the values $P_{ec}$, after they are computed when edge $e$ is processed, are randomly rounded up or down to a multiple of $1/\Delta^{10}$ (preserving the expectation). For the deterministic algorithm, this randomized rounding up or down can be derandomized similarly to the rest of the algorithm using the potentials. Since the rounding up or down preserves the expectation if done randomly, all our mortingales that are affected by this rounding can now be seen as additionally taking a small (of the order $1/\Delta^5$) martingale-step. The martingale-controlling potentials could thus simply see this as some additional tiny steps, that are dominated by the (larger) steps in the analysis, so they do not change much. The deterministic algorithm then chooses to either round up or down to minimize the new potential. As each martingale-controlling potential (see the analysis) is affected by at most $\Delta^3$ of the $P_{fc}$'s these extra steps barely change their initial values.
\end{proof}

\subsubsection{Few Bad Colors.}

For a (super-)martingale $Z$ with step size $S$ and number of non-trivial steps $N$, we define the potential $\Phi_{Z}$ as in \cref{lem:potential-of-martingale}.
For a (sub-)martingale $Z$, we note that $-Z$ is a super martingale, and can define $\Phi_{-Z}$ analogous.

The goal is to show
\begin{equation}
    Q_{U_{w}C} \ge \eps^5 \Delta - \frac{\eps^{6}\Delta}{2},
    \label{eq:QUC}
\end{equation}
for all vertices $w$ with edge-neighborhood $U_{w}$, and subsets of colors $C$ of size $|C| = \eps^5 \Delta$. Indeed, if we can show the above it implies the ``Few Bad Colors'' invariant:
From \cite[Lemma~3.19]{BlikstadSVW25}, we know that a color $c$ is not bad if $Q_{U_{w}\{c\}} \ge 1 - \eps/2$. If \labelcref{inv:fewbadcol} was false, it would thus imply the existence of an edge $e$ for which a set of $\eps^5\Delta$ colors where bad, but choosing those to be $C$ we get a contradiction with \cref{eq:QUC}. Note that initially,
$Q_{U_wC}(0) = \eps^5\Delta -\eps^6 \Delta$, so we can afford a deviation of $\eps^6\Delta/2$.

For the randomized algorithm, \cite{BlikstadSVW25} argues that \cref{eq:QUC} holds with high probability by the use of Azuma's inequality and union bounds.
\begin{claim}[{\cite[Lemma 3.24]{BlikstadSVW25}}]
    $Q_{U_{w}C}\idxt$ is a super-martingale with step size $24A$, and at most $\Delta^2$ non-trivial steps.
\end{claim}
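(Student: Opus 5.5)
The plan is to verify the three properties directly from the update rules of \cref{alg:edge-coloring}, following the structure of \cite[Lemma 3.24]{BlikstadSVW25}. Fix $w$, the set $U_w$, and a color set $C$. By linearity it suffices to handle a single color $c$: write $Q\idxt_{U_wC}=\sum_{c\in C}Q\idxt_{U_wc}$ and show that each $Q_{U_wc}$ is a super-martingale whose step has range $O(A)$. Summing then gives step range $|C|\cdot O(A)$. This is still only $O(\eps^3)$, but the claim asks for the sharper bound $24A$. So the step-size bound has to be argued directly for the sum: in any single step only colors $c$ with $P_{e_tc}$ nonzero on the arriving edge matter, and the sampled $K_t$ is one color. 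We therefore bound the deterministic drift and the random jump separately.

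\textbf{Trivial versus non-trivial steps.} $Q_{U_wC}$ changes at time $t$ only if $e_t$ is incident to an edge of $U_w$ that has not yet arrived, or $e_t\in U_w$ itself. This covers $e_t\in U_w$, or $e_t$ sharing an endpoint $x\neq w$ with some $f=(w,x)\in U_w$. There are at most $\Delta$ edges at $w$, each with at most $\Delta$ incident edges, so there are at most $\Delta^2$ non-trivial steps, counted as $(\Delta-1)\Delta+\Delta\le\Delta^2$.

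\textbf{Expectation.} Condition on the history and take a non-trivial step with $e_t=(x,y)$ processed in the sampling branch.

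For $f=(w,x)\in U_w$ not yet arrived, the factor $\prod_{g}(1-P_{gc})$ in $R_{fc}$ is unchanged unless $e_t$ is incident to $w$. Meanwhile $P_{fc}$ becomes $0$ with probability $P_{e_tc}$ and becomes $P_{fc}/(1-P_{e_tc})$ otherwise. The expectation is therefore exactly preserved when scaling happens. When scaling is suppressed because $P_{fc}>A$, the expectation strictly decreases. This is the source of the super-martingale property.

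For $e_t\in U_w$, the term $R_{e_tc}$ freezes at time $t_{e_t}-1$, while every other future $R_{fc}$ with $f\in U_w$ picks up the factor $(1-P_{e_tc})$. This is deterministic and non-increasing. Steps in the bad branch (Line~\ref{line:obliv:color_bad}) and marked edges only zero out or freeze values, so they are also non-increasing.

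\textbf{Step size.} Here $P_{fc}\le A/(1-A)\le 2A$ by the thresholding, since values above $A$ are never scaled. Consider a step where $e_t$ meets $f$ at $x$. The sampled color $K_t$ removes at most $R_{fK_t}\le 2A$. The scale-up adds $\sum_{c\in C}R_{fc}\frac{P_{e_tc}}{1-P_{e_tc}}$, bounded by $\le 2A\cdot\sum_c 2P_{e_tc}\le 4A\cdot Z_{e_t}\le 4A$, using $Z_{e_t}\le 1$ in this branch. The range is thus at most $6A$, comfortably below $24A$.

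The case $e_t\in U_w$ is the one I expect to be the main obstacle. There all $\le\Delta$ terms $R_{fc}$ receive the factor $(1-P_{e_tc})$, which is deterministic. It therefore contributes no randomness, only a shift, and we absorb it into the bounds $A_t,B_t$. What remains to check is that this shift does not enlarge the range $B_t-A_t$. I would verify that the randomness in this case comes only from $P_{e_tc}$ being zeroed or scaled on the other future edges at $w$. Each such edge contributes $O(A)$, and summing over colors again gives $O(A)$ via $\sum_c P_{e_tc}\le1$. Careful constant bookkeeping over these few contributions, bounding each by $2A$ or $4A$, should land within $24A$.
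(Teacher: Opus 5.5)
\textbf{The gap is in the case $e_t\in U_w$.} You flag this case as the main obstacle, and it is where the proposal fails. The cause is an omission in your expectation paragraph.

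When $e_t=\{w,x'\}$ arrives, every not-yet-arrived $f\in U_w$ shares the endpoint $w$ with $e_t$. So Lines~\ref{line:obliv:zero}--\ref{line:obliv:scaleup} update $P_{fc}$ itself, and not only the product inside $R_{fc}$. The two effects are:
\begin{itemize}
\item For an uncapped color $c\neq K_t$, the scale-up by $1/(1-P^{(t-1)}_{e_tc})$ exactly cancels the new factor $(1-P^{(t-1)}_{e_tc})$. So $R_{fc}$ is unchanged; it is not a deterministic decrease.
\item For $c=K_t$, every future $R_{fK_t}$ at $w$ is set to $0$ at once.
\end{itemize}
Hence the step is random. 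If $K_t\in C$, $Q_{U_wC}$ drops by $\sum_f R^{(t-1)}_{fK_t}$, summed over up to $\Delta-1$ future edges at $w$.

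A concrete instance: let $e_t$ be the first edge of the whole sequence and $\deg(w)=\Delta$. Then $Q_{U_wC}$ either stays put (if $K_t\notin C$) or drops by $(\Delta-1)(1-\varepsilon)/\Delta=\Theta(1)\gg 24A$. Your estimate ``each such edge contributes $O(A)$, and summing over colors gives $O(A)$'' misses that, for the single color $K_t$, the contributions of all these edges add up. No constant bookkeeping brings the two-sided range of this step below $24A$. So the plan of absorbing this step into $A_t,B_t$ cannot work.

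\textbf{The fix is a one-sided observation.} This is all the Few Bad Colors application needs, since only an upper tail of $Q_{U_wC}$ is used. At every arrival of an edge incident to $w$, the increment of $Q_{U_wC}$ is $\le 0$ for every outcome:
\begin{itemize}
\item the arrived terms and the term of $e_t$ are frozen;
\item uncapped colors $c\neq K_t$ are unchanged;
\item the color $K_t$ is zeroed;
\item capped colors, the bad branch, and the marking branches only multiply terms by $(1-P^{(t-1)}_{e_tc})\le 1$.
\end{itemize}

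So discard these steps. Take the process obtained from $Q_{U_wC}$ by adding back its decrements at arrivals incident to $w$. It starts at the same value, dominates $Q_{U_wC}$, is a supermartingale, and moves only at the at most $\Delta(\Delta-1)$ arrivals at neighbours $x\neq w$. At those steps your computation does apply: the range is $\le 6A$ per affected $f$, and $e_t$ touches at most two edges of $U_w$, giving $\le 12A\le 24A$. This dominating process is the form in which the cited statement can be proved, and it is what should feed the potential $\Phi_{Q_{U_wC}}$.
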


We derandomize this with a pessimistic estimator, using parameter $\lambda = \eps^6\Delta/2$ and \cref{lem:potential-of-martingale}:
\begin{align*}
    \Phi_{w, \text{few bad colors}} = \sum_{C\subseteq[\Delta], |C| = \eps^5 \Delta} \Phi_{Q_{U_{w}C}}
\end{align*}
\cref{lem:potential-of-martingale} thus gives us, via a simple calculation and our choice of parameters:
\begin{align*}
    \Phi_{Q_{U_wC}}(0) = \exp\left(-\frac{2\lambda^2}{S^2 N}\right)
    = \exp\left(-\frac{2(\eps^6 \Delta)^2}{(24A)^2 \Delta^2}\right)
    = \exp\left(-\frac{2\eps^16 \Delta^2}{c_{A}^2 24^2}\right)
    \le \frac{1}{2^{100\Delta}n^{100}}
\end{align*}
This exactly corresponds to the probability calculation at the end of \cite[Proof of Lemma~4.10]{BlikstadSVW25} (see that paper for more details).

The contribution to $\Phi(0)$ of this term is thus (since there are $2^\Delta$ subsets of $[\Delta]$):
\begin{equation}
    \sum_{w}\Phi_{w, \text{few bad colors}} = \sum_{w}\sum_{C\subseteq[\Delta], |C| = \eps^5 \Delta} \Phi_{Q_{U_{w}C}} \le 2^{-50 \Delta} n^{-50}
\end{equation}

To conclude, the discussion above has shown \cref{lem:fewbadcol}; that is:
\begin{itemize}
    \item
 as long as $\Phi(t) < 1$, we know that the invariant~\labelcref{inv:fewbadcol} must hold.
        \item  $\sum_{w}\Phi_{w, \text{few bad colors}}(0) \le 2^{-50\Delta}n^{-50}$.
        \item  $\Phi_{w, \text{few bad colors}}$ is non-increasing in expectation.
        \item  $\Phi_{w, \text{few bad colors}}(t)$ can be computed by inspecting the $2$-hop neighborhood of $w$. Indeed, this follows from inspecting the definition of $Q_{U_{w}C}$, whichonly is updated when an edge in $w$'s $2$-hop neighborhood arrives.
\end{itemize}

We now proceed to prove analogous properties for the two other invariants.

\subsubsection{Few Bad Neighbors.}
We define:
\begin{align*}
    \Phi_{w, \text{few bad neighbors}} &= \sum_{U\subseteq N(w), |U| = \alpha \Delta} \Phi_{w,U}\\
    \Phi_{w, U} &= \sum_{M_{i}\subseteq \mathcal{M}(w,U)} \Phi_{w,U,M_{i}} + \Phi_{H_{U}} \\
    \Phi_{w, U,M_{i}} &= \sum_{M\subseteq M_{i}, |M| = |\eps\alpha\Delta|} \Phi_{M,[\Delta]}.
\end{align*}
(we will define $\Phi_{H_{U}}$ and $\Phi_{M,[\Delta]}$ below soon).

We will set things up so that $\Phi_{w,U} < 1$ will imply that there must be some vertex in $U$ that is not bad. Thus, if $\Phi(t) < 1$ then we know that $w$ has at most $\alpha \Delta$ bad neighbors, which is what we want to prove for invariant~\labelcref{inv:fewbadnei}.

For this, the proof in \cite{BlikstadSVW25} proceeds as follows. If we can show
that $\sum_{u\in U} \baddeg(u) < 2c_{K}\eps\alpha\Delta^2$, then we are done, as not all vertices in $u$ could be bad (i.e., have $\baddeg(u) \ge 2c_{K}\eps\Delta$).
We now look at all edges incident to some vertex in $U$, and count how many of them contributed to badness. Say the edge $e$ arrives. If any endpoint of $e$ is bad, it cannot contribute to the badness, so suppose both endpoints are good. We then split up into two cases:
\begin{itemize}
    \item $Z_{e}\in [1-c_{K}\eps, 1]$
    \item $Z_{e}\not\in [1-c_{K}\eps, 1]$
\end{itemize}
We argue that the second case rarely happens, and that the first case gives rise to a well-concentrated Chernoff-style term. In particular, define
$H_{U}\idxz = 0$, and whenever an edge $e_{t}$ in the former case arrives ($Z_{e_t}\in [1-c_{K}\eps, 1]$, and no endpoint of $e$ is \emph{bad}),
we update $H_{U}\idxtp = H_{U}\idxt - c_{K}\eps + 1[e_{t}\text{ was marked}]$.
By definition $E[H_{U}\idxtp] \le H_{U}\idxt$, since the edge is marked with probability $1-Z_{e} \le c_{K}\eps$. Thus $H_{U}$ is a super-martingale\footnote{What follows is really just a Chernoff bound with coupling, there are no complicated correlations here.} with at most $\Delta|U| = \alpha\Delta^2$ (non-trivial) steps, and it has step size $1$. Using $\lambda = \eps\alpha\Delta^2$:
we can define the potential $\Phi_{H_{U}}$ using \cref{lem:potential-of-martingale} with the following properties:
\begin{itemize}
    \item Initial value $\Phi_{H_{U}}(0) = \exp(-\frac{2(\eps\alpha\Delta^2)^2}{\alpha\Delta^2}) = \exp(-2\eps^2\alpha\Delta^2) \le 2^{-100\Delta}n^{-100}$.
    \item If $\Phi_{H_{U}} < 1$: then $H_{U} \le \lambda$, or equivalently
        at most $c_{K}\alpha\eps \Delta^2+\lambda = (c_{K}+1)\alpha\eps\Delta^2$ many edges of the first type were marked and increased badness.
\end{itemize}

For edges of the second type, $Z_{e}\not\in [1-c_{K}\eps, 1]$, we instead argue that there are at most $2\eps\alpha \Delta$ such edges in each matching $M_{i}\in \mathcal{M}(w,U)$ (assuming $\Phi(t) < 1$). We prove this in more generality to reuse it later:

\begin{lemma}
    \label{lem:matchings}
    Suppose $K_{MC} = \sum_{e\in M, c\in C} P_{ec}$ where $M$ is a matching and $C$ a subset of colors satisfying $|M|\cdot |C|^2 \ge \eps^{10} \Delta^3$.
    Then there is a potential function $\Phi_{M,C}$ such that:
    \begin{itemize}
        \item Initially $\Phi_{M,C}(0) < 2^{-90\Delta}n^{-90}$.
        \item In expectation, $\Phi_{M,C}$ is non-increasing.
        \item $\Phi_{M,C}(t)$ can be computed by inspecting the variables stored only at edges $f$ incident to some edge $e\in M$.
        \item If $\Phi_{M,C}(t) < 1$ then:
            \begin{itemize}
                \item $K_{MC}\idxt \le \frac{|M|\cdot|C|}{\Delta}(1-\frac{\eps}{2})$.
                \item Either (1) $K_{MC}\idxt \ge \frac{|M|\cdot|C|}{\Delta} - c_{K}\eps|M|$, (2) some endpoint of $M$ is \emph{bad}, or (3) invariant~\labelcref{inv:fewbadcol} does not hold.
            \end{itemize}
    \end{itemize}
\end{lemma}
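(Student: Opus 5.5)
The plan is to obtain $\Phi_{M,C}$ as the sum of two martingale-controlling potentials from \cref{lem:potential-of-martingale}: one for the upper tail and one for the lower tail of $K_{MC}$. Both are built from the corresponding martingale in the randomized analysis of \cite{BlikstadSVW25}, which handles matchings $M$ in the ``Few Bad Neighbors'' part.

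\textbf{Upper bound.} Consider the process $K_{MC}\idxt$, where each $P_{ec}$ for $e\in M$ is frozen once $e$ arrives. Its value changes only when an edge $g$ adjacent to some $e\in M$ arrives, since only then does Line~\ref{line:obliv:zero} or Line~\ref{line:obliv:scaleup} modify $P_{ec}$.
\begin{itemize}
\item Because $M$ is a matching, one arriving edge $g$ touches at most two edges of $M$.
\item For a fixed touched edge $e$, a step where $P_{ec}\le A$ replaces $P_{ec}$ by $P_{ec}/(1-P_{gc})$ with probability $1-P_{gc}$ and by $0$ otherwise. This preserves the expectation.
\item Thresholded colors ($P_{ec}>A$, not scaled) and steps through bad vertices (Line~\ref{line:obliv:burn}, deterministic zeroing) only decrease $P_{ec}$.
\end{itemize}
So $K_{MC}$ is a supermartingale. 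For the step size, $\sum_{c\in C}P_{ec}$ changes by at most $O(A)$ on the zeroed color plus $\sum_c P_{ec}P_{gc}/(1-P_{gc})=O(A)$ on the rescaled colors. This uses $P\le A$ on the scaled colors and $Z_g\le1$ on the sampling branch. Hence the step size is $S=O(A)$ and the number of non-trivial steps is $N\le 2\Delta|M|$.

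Apply \cref{lem:potential-of-martingale} with $\lambda=\eps|M||C|/(2\Delta)$, using that initially $K_{MC}\idxz=|M||C|(1-\eps)/\Delta$. This gives
\[
\exp\!\left(-\Omega\!\left(\frac{\eps^2|M|^2|C|^2/\Delta^2}{A^2\Delta|M|}\right)\right)=\exp\!\left(-\Omega\!\left(\frac{\eps^6|M||C|^2}{\Delta}\right)\right).
\]
With $A=c_A/(\eps^2\Delta)$ and $|M||C|^2\ge\eps^{10}\Delta^3$, the exponent is $\Omega(\eps^{16}\Delta^2)$. By the choice $\eps\ge c_\eps(\sqrt{\log n}/\Delta)^{1/16}$, this is at least $100(\Delta+\log n)$ for $c_\varepsilon=10$, giving the bound $2^{-90\Delta}n^{-90}$.

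\textbf{Lower bound.} Mirror the definition of $Y$ in \cref{def:lb_on_Z_extended}. Define $\overline{K}_{MC}$ by accumulating only the increments $\overline{P\idxt_{ec}}-P\idx{t-1}_{ec}$ from steps where $g\in\delta(e)_{\text{good}}$ and no endpoint of $M$ is bad; this pretends thresholded colors are still scaled. Then $-\overline{K}_{MC}$ is a supermartingale with the same step size and step count.

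Apply the same potential to $-\overline{K}$ with deviation $\lambda=c_K\eps|M|/2$. The initial-value estimate is the same or better, since $c_K\eps|M|\ge \eps|M||C|/\Delta$.

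It remains to convert a lower bound on $\overline K$ into one on $K$. The difference $\overline K-K$ consists of the thresholded mass (colors with $P_{ec}>A$) plus zeroings at bad vertices. If invariant~\labelcref{inv:fewbadcol} holds, at most $2\eps^5\Delta$ colors per edge exceed $A$. Following \cite[Lemma~3.19]{BlikstadSVW25}, each such color contributes $O(A\cdot\text{const})$ lost mass, which sums to $O(\eps^3)$ per edge, well below $c_K\eps/2$. Bad-vertex contributions are exactly the excluded case (2). This yields alternatives (1)--(3).

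\textbf{Locality.} Every step touches only $P_{ec}$ with $e\in M$, and is determined by arriving edges $g$ adjacent to $M$ together with their stored values $P_{gc}$ and their line of coloring. The step counter $s(t)$ and $N$ are determined by the (known) set of such edges. So $\Phi_{M,C}$ is computable from edges incident to $M$.

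\textbf{Main obstacle.} I expect the main difficulty to be this lower-tail bookkeeping: showing that the thresholding by $A$ and the bad-vertex branch cost only $O(\eps|M|)$ once invariant~\labelcref{inv:fewbadcol} is assumed, and defining $\overline K$ so that it is both a genuine martingale and locally computable. I would first try to reuse the per-edge bound from \cite[Lemma~3.19]{BlikstadSVW25} summed over $e\in M$. Since that bound is deterministic given the invariant, it needs no further potential.
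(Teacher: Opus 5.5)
\textbf{Verdict: same route as the paper, but the lower-tail drift estimate is wrong and, with your choice of $\lambda$, the budget does not close.}

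Your decomposition is the paper's. You set $\Phi_{M,C}=\Phi_{K_{MC}}+\Phi_{-L_{MC}}$. The upper tail uses the supermartingale $K_{MC}$ with $\lambda=\eps|M||C|/(2\Delta)$. The lower tail uses the uncapped martingale $L_{MC}=\sum_{e\in M}\sum_{c\in C}Y_{ec}$, with the capping drift paid for via invariant~\labelcref{inv:fewbadcol}. The upper tail and the locality argument are fine.

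The problem is the drift estimate in the lower tail, exactly the step you flagged. A capped color $c$ of $e$ does not lose $O(A)$ once in total. At every arrival $g$ adjacent to $e$ it loses $P^{(t-1)}_{ec}P^{(t-1)}_{gc}/(1-P^{(t-1)}_{gc})\le 5A^2$, and there are up to $2\Delta$ such arrivals. Invariant~\labelcref{inv:fewbadcol} only bounds the number of capped colors at each moment; the set can change over time. Nothing bounds $\sum_g P_{gc}$ by $O(1)$. So the sum must be taken per arrival: at most $2\eps^5\Delta\cdot 5A^2$ per arrival, hence $20\eps^5A^2\Delta^2=20c_A^2\eps$ per edge of $M$. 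This is order $\eps$, not $\eps^3$, and nothing in the invariant rules out drift of this order.

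This breaks your budget. With $\lambda=c_K\eps|M|/2$ for $-\overline K$, what remains for the drift is $c_K\eps|M|/2-\eps|M||C|/\Delta$. For $|C|=\Delta$ this is $(17.5c_A^2-1)\eps|M|$, which is less than $20c_A^2\eps|M|$. The fix is the paper's smaller deviation $\lambda=\eps|M||C|/(2\Delta)$ for the lower tail as well, so the initial-potential calculation is identical to the upper tail. Then $K_{MC}\idxt\ge K_{MC}\idxz-\lambda-20c_A^2\eps|M|\ge\frac{|M||C|}{\Delta}-(\tfrac32+20c_A^2)\eps|M|\ge\frac{|M||C|}{\Delta}-c_K\eps|M|$, which is what $c_K=35c_A^2$ is calibrated for.

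Separately, your assertion that bad-vertex zeroings are exactly case~(2) is inaccurate; the paper's proof glosses over the same point. Take an arrival $g=\{u,x\}\in\delta(e)_{\text{bad}}$ with $x$ bad and $u$ a good endpoint of $e\in M$. It zeroes a color of $e$ in Line~\ref{line:obliv:burn}, is excluded from your $\overline K$, and is not covered by~(2). This is cheap to repair. By the marking rule in Line~\ref{line:obliv:mark_bad}, each endpoint of $e$ suffers at most $\alpha\Delta$ such burns, each removing $O(A)$. The total is $O(\alpha\Delta A)=O(c_A\eps/100)$ per edge of $M$, well within the remaining slack.
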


Before proving \cref{lem:matchings}, we show how \cref{lem:fewbadnei} follows from it.

$\Phi_{H_{U}}$ only depend on the $2$-hop neighborhood of $w$, while $\Phi_{M,[\Delta]}$ depend on the $3$-hop neighborhood (for all the $M$s and $U$s we consider).
Moreover,  the total contribution of $\Phi_{w, \text{few bad neighbors}}$ can be bounded by $2^{-50\Delta}n^{-50}$, as there are at most $2^{\Delta}$ submatchings of $M_{i}$, and at most $2^{\Delta}$ subsets $U$ of $N(w)$. What remains is to show that \labelcref{inv:fewbadnei} holds as long as both $\Phi(t) < 1$ and invariant~\labelcref{inv:fewbadcol} holds.

Suppose for the sake of a contradiction that $U$ is a set of $\alpha \Delta$ bad neighbors. Since $\Phi(t) < 1$ we have both $\Phi_{H_{U}} < 1$ and $\Phi_{w,U,M_{i}} < 1$ for all $M_{i}$ in the partition of $G[U]$ into matchings $\mathcal{M}(w,U)$. The former implies that there are at most $(c_{K}+1)\alpha\eps\Delta^2$ many edges $e$ where $Z_{e}\in [1-c_{K}\eps, 1]$ that could have been left unmarked and incremented $\sum_{u\in U}\udeg(u)$. On the other hand, $\Phi_{w,U,M_{i}} < 1$ will imply that there are at most $2\eps\alpha\Delta$ edges in $M_{i}$, with both good endpoints (else it cannot increment $\udeg$), where $Z_{e}\not\in[1-c_{K}\eps,1]$.

Indeed, for each submatching $M\subseteq M_{i}$ with all good endpoints, of size $|M| = \eps\alpha \Delta$, and for $C = [\Delta]$, from the above lemma we know that
\begin{align*}
    \sum_{e\in M} Z_{e} &= K_{MC} \ge \frac{|M|\cdot|C|}{\Delta}-c_{K}\eps |M| = |M|(1-c_{K}\eps).\\
    \sum_{e\in M} Z_{e} &= K_{MC} \le \frac{|M|\cdot|C|}{\Delta} - \frac{\eps}{2} |M| = |M|(1-\frac{\eps}{2}).
\end{align*}
This proves that not all edges in each such submatching $M$ can have $Z_{e} < 1-c_{K}\eps$, and not all edges can have $Z_{e}> 1$. Indeed, this means that there are at most $2\eps\alpha\Delta$ edges (with good endpoints), per matching $M$ that can violate $Z_{e}\in [1-c_{K}\eps, 1]$.

Each of those can be incident to at most two vertices in $U$, and hence increase $\sum_{u\in U}\udeg(u)$ by at most $2$ each, for a total of $4\eps\alpha\Delta(\Delta+1)$ for all the matchings $M_{i}$. We thus bound:
\begin{align*}
\sum_{u\in U}\udeg(u)
\le
    (1+c_{K})\eps\alpha\Delta^2 + 4\eps\alpha\Delta(\Delta+1) < 2c_{K}\eps\alpha\Delta^2,
\end{align*}
contradicting that all vertices in $U$ where bad.
This proves \cref{lem:fewbadnei}.

\paragraph{Arguing about matchings.} We now show that \cref{lem:matchings} holds. It is important that $M$ is a matching, and not just any set of edges, since it allows the martingale analysis in \cite{BlikstadSVW25} to argue that the step size is bounded. Indeed, if an edge arrives, it can only affect the values $P_{fc}$ for up to two edges $f$ in a matching $M$. If $M$ was not a matching, the step size could potentially blow up by a factor $\Delta$, which is too much.

\begin{proof}[Proof of \cref{lem:matchings}]
    We note that initially $K_{MC}\idxz = |M|\cdot|C|\cdot \frac{1-\eps}{\Delta}$. Indeed, $K_{MC} = \sum_{e\in M,c\in C} P_{ec}$ is a sum of super-martingales, so itself is a supermartingale, and since the edges $e$ forms a matching it has just twice the step size. There are at most $2|M|$ endpoints of $M$, and each can have $\Delta$ incident edges making for a non-trivial step of the martingale. We refer to \cite{BlikstadSVW25} for a proof.
\begin{claim}[{\cite[See proof of Lemma 4.17]{BlikstadSVW25}}]
    $K_{MC}\idxt$ is a super-martingale with step size $24A$, and at most $2|M|\Delta$ non-trivial steps.
\end{claim}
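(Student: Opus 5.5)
The plan is to verify the two properties directly from the update rules of \cref{alg:edge-coloring}, using linearity over the pairs $(e,c)\in M\times C$, in the same way \cite{BlikstadSVW25} treats $Q_{U_wC}$. Fix the history up to time $t$, so all $P\idxt_{fc}$ are determined, and consider the arrival of $e_{t+1}$. The only randomness in this step is the sample $K_{t+1}$ in Line~\ref{line:obliv:sample}. A variable $P_{fc}$ with $f\in M$ changes only if $e_{t+1}$ shares an endpoint with $f$ (and $f$ has not yet arrived). Hence steps in which no edge incident to an endpoint of $M$ arrives are trivial. Since $M$ has $2|M|$ endpoints, each of degree at most $\Delta$, there are at most $2|M|\Delta$ non-trivial steps.

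For the supermartingale property I would show that each individual $P_{fc}$ is a supermartingale and then sum over $(f,c)$. Consider an edge $f\in M$ adjacent to $e_{t+1}$, and split by the branch of the algorithm taken for $e_{t+1}$.
\begin{itemize}
\item If $e_{t+1}$ is marked in Line~\ref{line:obliv:mark_bad} or Line~\ref{line:obliv:mark_z_ge_1}, then nothing changes.
\item If $e_{t+1}$ is colored by a bad endpoint in Line~\ref{line:obliv:color_bad}, then one $P_{fc}$ is set to $0$. This is a deterministic decrease.
\item In the sampling branch, $P_{fc}$ becomes $0$ with probability $P_{e_{t+1}c}$. Otherwise it becomes $P_{fc}/(1-P_{e_{t+1}c})$ if $P_{fc}\le A$, and stays $P_{fc}$ if $P_{fc}>A$. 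Its conditional expectation is therefore exactly $P_{fc}$ in the first case and $P_{fc}(1-P_{e_{t+1}c})\le P_{fc}$ in the second.
\end{itemize}
Summing over all pairs $(f,c)$ gives $\E[K\idxtp_{MC}]\le K\idxt_{MC}$.

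For the step size, note that because $M$ is a matching, $e_{t+1}$ touches at most two edges $f\in M$. It therefore suffices to show that for a single $f$ the change in $\sum_{c\in C}P_{fc}$ lies in an interval of length $12A$, where the interval is determined by the history. The downward move is the zeroing of one color, of size at most $\max_c P_{fc}$. The upward move is at most $\sum_c P_{fc}P_{e_{t+1}c}/(1-P_{e_{t+1}c})$, where only colors with $P_{fc}\le A$ are scaled; this is at most $A\sum_c P_{e_{t+1}c}/(1-P_{e_{t+1}c})\le 2A\,Z_{e_{t+1}}\le 2A$, provided the $P_{e_{t+1}c}$ are at most $1/2$.

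The main obstacle is bounding $\max_c P_{fc}$ by $O(A)$ and keeping the $P_{e_{t+1}c}$ small, since colors above the threshold are never scaled and so are a priori unbounded. I would first try the invariant that any $P_{fc}$ exceeding $A$ did so in a single scaling step from a value at most $A$. Hence $P_{fc}\le A/(1-P_{e'c})$ for the responsible edge $e'$, and inductively this gives $P\le 2A$ as long as all $P$'s are at most $1/2$, which holds since $2A\ll 1/2$. If this self-referential bound is delicate, I would instead cite the corresponding bound from \cite[Lemma~3.24]{BlikstadSVW25}, whose step-size argument for $Q_{U_wC}$ is identical. The resulting interval length per edge is then a constant multiple of $A$, and with two affected edges this yields the claimed $24A$.
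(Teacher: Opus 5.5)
Your proposal is correct and takes the same approach as the paper's sketch. In both, $K_{MC}$ is a sum of per-pair supermartingales $P_{fc}$; because $M$ is a matching, each arrival touches at most two edges of $M$, which doubles a per-edge step bound; and only the at most $2|M|\Delta$ arrivals incident to endpoints of $M$ can be non-trivial steps. Your self-referential bound is sound rather than delicate: induction on time gives $P_{fc}\le A/(1-2A)\le 2A$ whenever $A\le 1/4$, so the per-edge interval has length at most $4A$ and the total at most $8A$, well within the claimed $24A$.
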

    Thus we can define a potential $\Phi_{K_{MC}}$ using \cref{lem:potential-of-martingale} with parameter $\lambda = \frac{\eps|M|\cdot|C|}{2\Delta}$ and get:
    \begin{itemize}
        \item If $\Phi_{K_{MC}}(t) < 1$, then $K_{MC}\idxt \le \frac{|M|\cdot|C|}{\Delta}(1-\frac{\eps}{2})$.
        \item Initially $\Phi_{K_{MC}}(0) < 2^{-95\Delta}n^{-95}$ (see calculation in \cite[Lemma~4.17]{BlikstadSVW25}).
        \item $\Phi_{K_{MC}}$ is non-increasing in expectation.
        \item $\Phi_{K_{MC}}$ is updated only when some edge $f$ arrives which shares an endpoint with some edge $e\in M$. Indeed,for $e\in M$, the values $P_{ec}$ is updated only when an edge $f$ incident to $e$ arrives.
    \end{itemize}

    We now need to bound $K_{MC}$ from below too, with another potential function. Let us assume that no endpoint of $M$ is ever bad, and that \labelcref{inv:fewbadcol} holds. Indeed, if either of these is not the case, we do not need to prove any concentration for $K_{MC}$ from below.

    Since $K_{MC}$ is a super-martingale, we can write $K_{MC}\idxt = L_{MC}\idxt + K_{MC}\idxz + \text{drift}\idxt$ where $L_{MC}$ is a martingale centered at $0$, and $\text{drift}\idxt$ is some downward drift incured up to time $t$. In particular, we can set 
    $L_{MC}\idxt= \sum_{e\in M}\sum_{c\in C} Y_{ec}$. Since there are no bad endpoints of $M$, the only downwards drift can come from skipping scaling up some $P_{ec}$ due to $P_{ec} > A$ in Line~\ref{line:obliv:scaleup} (this is exactly what is formalized with the martingalse $Y$ that depends on per time-wise non-capped version of the $P_{ec}$'s). By invariant~\labelcref{inv:fewbadcol}, for each edge $e$ and any timestep $t$, there can be at most $2\eps^5\Delta$ many such $c$ leading to the downward drift. For each edge $f$, there are up to $2\Delta$ times the values $P_{fc}$ are updated (for each incident edge $e_{t}$), and each time the downward drift is thus bounded by
    \begin{align*}
    \sum_{c\in C, P\idx{t-1}_{fc}>A}(\overline{P}_{fc}\idxt - P_{fc}\idxt)
        &=
        \sum_{c\in C, P\idx{t-1}_{fc}>A}(\frac{P_{fc}\idxt}{1-P_{e_{t}c}\idxt} - P_{fc}\idxt)
    \\
        &=
        \sum_{c\in C, P\idx{t-1}_{fc}>A}(\frac{P_{fc}\idxt P_{e_{t}c}\idxt}{1-P_{e_{t}c}\idxt})
        \\
        &\le
        \sum_{c\in C, P\idx{t-1}_{fc}>A}\frac{2A\cdot 2A}{1-2A}
        \\
        &\le
        |\{c\in C : P\idx{t-1}_{fc}>A\}| \cdot 5 A^2
        \le 10 A^2 \eps^5 \Delta
    \end{align*}
        (See also \cite[Lemma~4.15 and Section~3.3]{BlikstadSVW25} for more detailed calculations). In total, over the $2\Delta$ arrivals, this means that $P_{eC}$ is a martingale minus a downward drift of $20\eps^5A^2\Delta^2 = 20c_{A}^2 \eps$ (which, intuitively, is almost negliable). For $K_{MC}$, the drift down can be summed over the $|M|$ edges for a total of $20c_{A}^2 \cdot |M|$. Hence, if no endpoints of $M$ are bad, we have $K_{MC}\idxt \ge L_{MC}\idxt + K_{MC}\idxz - 20c_{A}^2\eps \cdot |M|$. Since we are aiming to prove that $K_{MC} \ge |M|\cdot|C|/\Delta - 35c_{A}^2\eps |M|$, and initially $K_{MC}\idxz = (1-\eps)|M|\cdot |C|/\Delta$, it suffices to show that the martingale
$L_{MC}\idxt$ falls down at most $-\frac{|M|\cdot|C|}{\Delta}(1-\frac{\eps}{2})$.

        Thus we define another potential $\Phi_{-L_{MC}}$ using \cref{lem:potential-of-martingale} on the (super-martingale) $-L_{MC}$ (which has the same step size and number of non-trivial steps as $K_{MC}$) with parameter $\lambda = \frac{\eps|M|\cdot|C|}{2\Delta}$ and get:
    \begin{itemize}
        \item If $\Phi_{-L_{MC}}(t) < 1$, then $L_{MC}\idxt \ge -\frac{|M|\cdot|C|}{\Delta}(1-\frac{\eps}{2})$.
        \item Initially $\Phi_{-L_{MC}}(0) < 2^{-95\Delta}n^{-95}$.
        \item $\Phi_{-L_{MC}}$ is non-increasing in expectation.
        \item $\Phi_{-L_{MC}}$ is updated only when some edge $f$ arrives which shares an endpoint with some edge $e\in M$.
    \end{itemize}

        We then simply set $\Phi_{M,C} = \Phi_{K_{MC}} + \Phi_{L_{MC}}$, and this potential function will guarantee concentration for $K_{MC}$ both from above and below, proving \cref{lem:matchings}.
\end{proof}

\subsubsection{Bad Vertex Property.}
        Finally we show that invariant~\labelcref{inv:badvertexprop} holds by showing \cref{lem:badvertexprop}. The proof is very similar to the ``few bad vertices'' invariant above.

\begin{align*}
    \Phi_{w, \text{bad vertex prop}} &= \sum_{C\subseteq [\Delta], |\Delta| = 2c_{K}\eps \Delta} \ \sum_{U\subseteq N(w), |U| = \eps \Delta} \Phi_{w, C,U}\\
    \Phi_{w, C,U} &= \Phi_{X_{C,U}} + \sum_{M_{i}\in \mathcal{M}(w,U)} \Phi_{w, C, M_{i}}\\
    \Phi_{w, C, M_{i}} &= \sum_{M \subseteq M_{i}, |M| = \eps^3 \Delta} \Phi_{M,C}
\end{align*}
We will define $\Phi_{X_{C,U}}$ below, and recal the definition of $\Phi_{M,C}$ from \cref{lem:matchings}.

Suppose for the sake of a contradiction that invariant~\labelcref{inv:badvertexprop} does not hold. That is, that vertex $w$ turned bad and that (at least) $\eps \Delta$ of the arriving edges (call them \emph{problematic}) incident to $w$ had no availible color in $[\Delta]$ (i.e., $Z_{e} = 0$ for those edges). Say $U\subseteq N(v)$ are the $|U| = \eps\Delta$ many neighbors of $w$ these problematic edges connected $w$ to. Since $w$ is \emph{bad}, we know that at least $c_{K}\eps\Delta$ edges incident to $w$ was forwarded to greedy, and since $w$ has degree at most $\Delta$, there must exist some subset $C\subseteq [\Delta]$ of $|C| = c_{K}\eps \Delta$ colors that will be free at $w$ at the end of the algorithm. In particular, none of these colors $C$ could be free for $U$ at the end of the algorithm (since then when edge $(w,u)$ for $u\in U$ arrives, that color was free for both $w$ and $u$, and would contradict that the edge was problematic).

This motivates the following definition for each $C$ and $U$ of the sizes above:
\begin{align*}
X'_{C,U} = \sum_{u\in U}(\text{\# incident edges at $u$ that are colored from $C$})
\end{align*}
(note that some edges might be counted twice in the above sum if they are incident to two vertices in $U$). We do not count the edges $e$ connecting $u$ to $w$ in the definiton of $X'_{C,U}$.

Note that if $X'_{C,U} < |U|\cdot |C|$, then there must be at least one vertex $u$ in $U$ that has some free color $c\in C$. Thus if we can show that $X'_{C,U} < |U|\cdot \Delta$ for all $C\subseteq[\Delta]$ and $U\subseteq N(v)$ of sizes $|C| = c_{K}\eps \Delta$ and $|U| = \eps \Delta$, we would have shown that invariant~\labelcref{inv:badvertexprop} holds for vertex $w$. This is thus out goal, and motivates our potential.

    From the potential $\Phi_{M,C}$ in the definition of $\Phi_{w,\text{bad vertex prop}}$ coming from \cref{lem:matchings}, we have the following for each matching $M_{i}\in \mathcal{M}(w,U)$:
\begin{itemize}
    \item $\Phi_{w,C,M_{i}}(t) < 1$ implies that at most $\eps^3\Delta$ edges $e\in M_{i}$ have $Z_{eC}\idxt > (1-\frac{\eps}{2})\cdot\frac{|C|}{\Delta}$ (since otherwise a submatching $M$ of size $\eps^3\Delta$ would violate the upper bound in \cref{lem:matchings}).
    \item Initially $\Phi_{w,C,M_{i}}(0) = 2^{-80\Delta}n^{-80}$ (as there are at most $2^{\Delta}$ submatchings $M\subseteq M_{i}$).
    \item $\Phi_{w,C,M_{i}}$ is non-increasing in expectation.
    \item $\Phi_{w,C,M_{i}}$ can be computed from the $3$-hop neighborhood of $w$ (since each edge $M\in M_{i}$ is in the $2$-hop neighborhood of $w$).
\end{itemize}
Thus, the total contribution of $\sum_{C}\sum_{M_{i}} \Phi_{w,C,M_{i}}$ to $\Phi_{w,\text{bad vertex prop}}$ is at most $2^{-70\Delta}n^{-70}$, as there are $\le 2\Delta$ matchings $M_{i}\in \mathcal{M}(w,U)$, and at most $2^\Delta$ subsets of colors $C\subseteq [\Delta]$.

In total over all the $\le 2\Delta$ matchings $M_{i}$, there can be at most
$\eps^3\Delta \cdot 2\Delta= 2\eps^3\Delta^2$ edges $e$ with $Z_{eC}\idxt > (1-\frac{\eps}{2})\frac{|C|}{\Delta}$ at the time of arrivial of $e$.
Worst case, each of these steal a color from $C$, and thus contributes a total of
$4\eps^3\Delta^2$ to $X'_{C,U}$ (remember that each edge can contribute twice if it connects two vertices in $U$).

Morover, there are at most $(\alpha+\eps\alpha)\Delta^2$ edges in $G[U]$ that are incident to a bad vertex: at most $\alpha$ vertices in $U$ are bad (since they all neighbor $w$), and at most $|U|\cdot \alpha\Delta = \eps\alpha \Delta^2$ neighbors of $U$ are bad. This is because we assume ``few bad neighbors'' invariant~\labelcref{inv:fewbadnei} holds. Each of these edges (with some bad endpoint) contribute at most $2$ to $X'_{C,U}$ (in the worst case when they are connected to two vertices in $U$, and are colored from $C$). In total this is at most $2(\alpha+\eps\alpha)\Delta^2$.

Finally, what remains is to consider the edges $e_{t}$ incident to $U$ that have no bad endpoints, and for which $Z_{e_{t}C}\le (1-\frac{\eps}{2})\frac{|C|}{\Delta}$. This is the ``common case'', most edges will be like this. A simple Chernoff bound (with coupling) in \cite{BlikstadSVW25} shows that they are well-concentrated around their mean, and thus we define a potential to dereandomize this Chernoff bound.

We define $\Phi_{X_{C,U}}$ similarly as $\Phi_{H_{U}}$ was for the ``few bad neighbors'' potential. In particular, define $X_{C,U}\idxz = 0$, and whenever an edge $e_{t}$ incident to $U$ arrives, such that both endpoints of $e_{t}$ are good and $Z_{e_{t}C}\le (1-\frac{\eps}{2})\frac{|C|}{\Delta}$, then we update $X_{C,U}$ as follows\footnote{if some other edge arrives, we leave it unchanged: $X_{C,U}\idxtp \gets X_{C,U}\idxt$.}:

\begin{align*}
  X_{C,U}\idxtp \gets X_{C,U}\idxt - s_{e_{t}}(1-\tfrac{\eps}{2})\tfrac{|C|}{|\Delta|} + s_{e_{t}}\cdot 1[\text{$e_{t}$ is colored from $C$}],
\end{align*}
where $s_{e_{t}}\in \{1,2\}$ is the number of vertices in $U$ that $e_{t}$ touches.
We see that $X_{C,U}$ is a supermartingale, since the probability of $e_{t}$ being colored $C$ is $Z_{e_{t}C}$, and it has step size $2$. It has at most $|U|\cdot\Delta = \eps\Delta^2$ (non-trivial) steps.

Thus $\Phi_{X_{C,U}}$ is defined as a martingale-controlling potential using \cref{lem:potential-of-martingale} with $\lambda = 2\eps^3\Delta^2(1-\frac{\eps}{2})$, and we obtain:
\begin{itemize}
  \item $\Phi_{X_{C,U}}(t) < 1$ implies that $X_{C,U}\idxt\le \lambda$.
  \item $\Phi_{X_{C,U}}(0) \le 2^{-110\Delta}n^{-110}$ (see calculation in \cite[Proof of Lemma~4.19]{BlikstadSVW25}).
  \item $\Phi_{X_{C,U}}(t)$ can be determined by inspecting only arrived edges incident to $U$ (and hence in the $2$-hop neighborhood of $w$).
  \item $\Phi_{X_{C,U}}$ is non-increasing in expectation.
\end{itemize}

Since $X_{C,U}\le \lambda$ at all times, we get that this contributes at most
the ``expectation'' $|U|\cdot \Delta \cdot (1-\frac{\eps}{2})\frac{|C|}{\Delta}$ plus deviation $\lambda$ to $X'_{C,U}$.
Indeed, this is at most $(2c_{K}+2\eps)\cdot \eps^2\Delta^2\cdot(1-\frac{\eps}{2})$ contribution to $X'_{C,U}$.

If we finally sum up the contribution to $X'_{C,U}$ from all cases above, we see that it is strictly smaller than $2c_{K}\eps^2\Delta^2 = |U|\cdot |C|$ (see calculation in \cite[Proof of Lemma~4.19]{BlikstadSVW25}). Thus we have showed that $\Phi_{w, \text{bad vertex prob}} < 1$ implies that invariant~\labelcref{inv:badvertexprop} holds (we also assumed in our proof that invariant~\labelcref{inv:fewbadnei} held). We also see that $\Phi_{w,\text{bad vertex prop}}(0) \le 2^{-50\Delta}n^{-50}$, and that it is the sum of martingale-controlling potentials (so non-increasing expectation). This proves \cref{lem:badvertexprop}.

\newpage
\section{Lower Bounds in \SLOCALedge{} and \LOCAL}\label{sec:LB}


In this section, we show some simple lower bounds for edge coloring in the \SLOCALedge{} model.
Our first lower bound shows that locality $3$ is necessary for our $(1+\eps)\Delta$ deterministic edge coloring algorithm (\cref{thm:slocal-det}).
Our second lower bound shows that our randomized algorithm with locality $1$ (\cref{thm:slocal-rand}) does indeed need the assumption that $\Delta = \Omega(\sqrt{\log n})$.

\ESLOCALLB*

The deterministic construction of \Cref{thm:LB_det} also works for smaller $\Delta$.

\begin{remark}
    Our lower bounds assume that there are no unique identifiers on the vertices or edges.\footnote{In \SLOCAL{} or \SLOCALedge{}, this is a reasonable assumption -- even for deterministic algorithms -- as the arrival order can be used for symmetry breaking instead. Our lower bound also hold if there are unique identifiers, but that these can only be compared for equality with each other (that is, the algorithm is not allowed to collect some identifiers, sort them, and make decisions based on their ordering).} If there are unique identifiers, the same lower bound holds for localities $\le 1$ instead.
\end{remark}

\paragraph{The lower bound construction.} The construction is similar to that of the lower bounds for online algorithms by Bar-Noy, Motwani, and Naor~\cite{BMN92}. The construction consists of first letting $\Delta$ many stars of degree $\Delta-1$ arrive, and then connecting the root of all those stars with a new node. See \cref{fig:lowerbound}. If the locality of the algorithm is at most $2$, decisions taken to color the stars must be independent (since they are far enough away in the final graph).

\begin{figure}[!ht]
    \includegraphics[page=2]{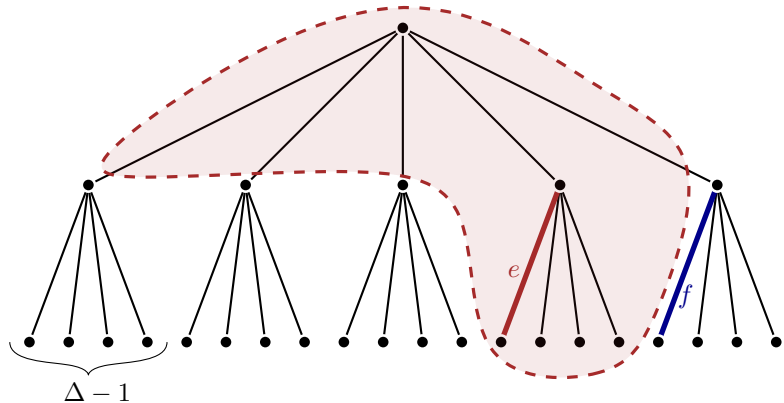}
    \centering
    \caption{The lower bound construction. Edges in the bottom layer arrive first. The $2$-hop of an edge $e$ in the bottom layer will not see any of the other stars. Thus, the viewpoints of edges $e$ and $f$ at the bottom looks identical, and decisions for coloring, say edge $e$, only depends on colors used on edges already arrived at the star containing $e$.}
    \label{fig:lowerbound}
\end{figure}

\paragraph{Deterministic algorithms.}
A deterministic algorithm must thus color all the $\Delta$ stars identically (as, from the perspective of the \SLOCALedge{} algorithm they look identical). Thus there are $\Delta-1$ colors used to color all the stars, but then another $\Delta$ colors are needed to color the edges connecting the stars. This means the algorithm needs a total of $2\Delta-1$ colors. Here $n \approx \Delta^2$, so $\Delta \approx \sqrt{n}$.

\paragraph{Randomized algorithms.}
A randomized algorithm might use randomization to color the stars in different ways.
Still, since the stars look identical to each other for any \SLOCALedge{} algorithm of locality at most $2$, the colorings of the stars must be drawn identically and independently from the same distribution $D$ (dependent on what randomized algorithm we use). Suppose the algorithm uses at most $2\Delta-2$ colors, for the sake of a contradiction. This means that there are $\binom{2\Delta-2}{\Delta-1} \le 2^{2\Delta}$ possible ways to color the stars. One of these ways must thus occur with probability $\ge 2^{-2\Delta}$ in the distribution $D$. Now, the probability that \emph{all} the $\Delta$ stars (independently) get that same coloring is at least $(2^{-2\Delta})^{\Delta} = 2^{-2\Delta^2}$.
If that happens, we are in a similar case as the deterministic algorithms: all the stars are colored with the same $\Delta-1$ colors, so we need another $\Delta$ fresh colors for the edges connecting the stars; for a total of $2\Delta-1$ colors (which is our desired contradiction). Thus, with probability $2^{-2\Delta^2}$, the randomized algorithm must use $2\Delta-1$ colors on the construction in \cref{fig:lowerbound}. If we repeat this construction $2^{2\Delta^2}$ times (in total we are using $n = \Delta^2 \cdot 2^{\Delta^2}$ nodes, so $\Delta \approx \sqrt{\log n}$), the randomized algorithm has constant probability of using $\ge 2\Delta-1$ colors.

\begin{theorem}
\label{thm:LOCALLBDelta}
For any constant integer $k>0$ and any constant $c>0$, there is no $o(\Delta^{k}/\log \Delta)+O(\log^* n)$-round \LOCAL algorithm for edge coloring graphs with maximum degree $\Delta\geq c\log^{1/k} n$ with fewer than $2\Delta-1$ colors. 
\end{theorem}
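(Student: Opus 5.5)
The plan is to derive \Cref{thm:LOCALLBDelta} directly from the deterministic lower bound of \cite{CHLPU18}: any deterministic \LOCAL algorithm that edge colors $n$-node graphs of maximum degree $\Delta$ with at most $2\Delta-2$ colors needs $\Omega(\log_\Delta n)$ rounds. The only real work is to instantiate that bound at the right degree and to translate $\log_\Delta n$ into a function of $\Delta$ alone.

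\textbf{Step 1: choose the degree and translate the bound.} Fix the constants $k$ and $c$. For each large $n$, set $\Delta := \lceil c\log^{1/k} n\rceil$, so the degree condition $\Delta\geq c\log^{1/k}n$ of the theorem holds. Then $\log n = \Theta((\Delta/c)^k) = \Theta(\Delta^k)$, since $k$ and $c$ are constants. Hence
\[
\log_\Delta n=\frac{\log n}{\log\Delta}=\Theta\!\left(\frac{\Delta^k}{\log \Delta}\right).
\]
Moreover $\log^* n = O(\log^*\Delta) = o(\Delta^k/\log\Delta)$.

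\textbf{Step 2: derive the contradiction.} Suppose some algorithm ran in $o(\Delta^k/\log\Delta)+O(\log^* n)$ rounds on this family. By Step 1 its running time would be $o(\log_\Delta n)$. This contradicts \cite{CHLPU18}, provided their hard instances are available at this particular pair $(n,\Delta)$.

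\textbf{Step 3 (main obstacle): the hard instances exist at this $(n,\Delta)$.} The lower bound of \cite{CHLPU18} is proved via a round-elimination or indistinguishability argument. It works on $\Delta$-regular graphs of girth $\Omega(\log_\Delta n)$, or on the corresponding tree-like neighborhoods. Such graphs exist for every $\Delta$ with $\Delta\le n^{o(1)}$, and $\Delta=\Theta(\log^{1/k} n)$ is in this range. I would therefore check that their theorem is stated, or holds, for all such $\Delta$ as a function of $n$, and not only for constant $\Delta$. If their statement is only for a fixed $\Delta$, I would re-derive it in two steps:
\begin{itemize}
\item Any $r$-round algorithm with $r < \text{girth}/2$ sees only a $\Delta$-regular tree. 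A round-elimination argument then shows that coloring such trees with $2\Delta-2$ colors requires $\Omega(\log_\Delta n)$ rounds.
\item Take a high-girth graph on $n$ nodes, padding with disjoint copies if needed so that the node count is exactly $n$. This realizes the required girth $\Omega(\log n/\log\Delta)$.
\end{itemize}

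\textbf{Remarks.} The randomized lower bound does not scale in the same way, so the statement should be read as concerning deterministic algorithms, consistent with the rest of the paper. Finally, since the bound holds for all large $n$, the $o(\cdot)$ statement follows.
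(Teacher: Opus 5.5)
Your proposal is correct and matches the paper's proof, which is just the one-line observation that such an algorithm would contradict the deterministic $\Omega(\log_\Delta n)$ lower bound of \cite{CHLPU18}. Your Step~1 computation ($\log_\Delta n=\Theta(\Delta^k/\log\Delta)$ at $\Delta=\Theta(\log^{1/k}n)$, with the $O(\log^* n)$ term absorbed) spells out what the paper leaves implicit; the paper simply assumes the \cite{CHLPU18} bound holds in this range of $\Delta$ (it is stated even for trees, so no high-girth construction is needed), which makes your Step~3 extra diligence, and you are right that the statement must be read for deterministic algorithms.
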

\begin{proof}
\cite{CHLPU18} showed that any deterministic $2\Delta-2$ edge coloring algorithm in the \LOCAL model requires $\Omega(\log_{\Delta}n)$ rounds. The non-existing algorithm of the theorem would immediately contradict that lower bound. 
\end{proof}

\newpage
\section{\LOCAL Edge Coloring}\label{sec:local}

In this section we prove \Cref{thm:LOCALmainNoND}, showing that we can efficiently compute a $(1+\eps)\Delta+O(\sqrt{\log n})$ coloring in the \LOCAL model. 

As outlined in \Cref{sec:technical_overview}, we obtain our results in two steps: we show that the online coloring algorithm is an \SLOCALedge algorithm with low locality, and we show that any \SLOCALedge algorithm can be turned into a \LOCAL algorithm with the use of an appropriate schedule, yielding our final distributed edge coloring algorithms. 
We have show the former in \Cref{sec:online_locality}, and this section focuses on the latter.


\paragraph{Scheduling.}
Ghaffari, Kuhn, and Maus~\cite{GhaffariKM17} showed that any \SLOCAL algorithm is a \LOCAL algorithm, with $\poly\log n$ multiplicative overhead. They do this via network decompositions. 
We give an alternative reduction from \SLOCALedge to \LOCAL using coloring, which gives faster results in our setting. 

In an \SLOCALedge($\ell$) algorithm, one edge is processed at a time, using the neighborhood within distance $\ell$. We observe that edges that are at least distance $\ell$ apart can be processed simultaneously if the order for the \SLOCALedge algorithm is chosen appropriately. This means that if we find a schedule in the form of an edge partition $E_1\cup E_2 \cup \cdot \cup E_T$,  such that each two edges in $E_i$ are at each at least at distance $\ell$, we can find a suitable order to simulate the \SLOCALedge algorithm with a \LOCAL algorithm that runs in time $O(\ell \cdot T)$, by treating each edge as an arrival in the online algorithm.  

\paragraph{Randomization.}
If the online algorithm is randomized then the \SLOCALedge algorithm is also randomized and succeeds with the same probability. Hereto, it is sufficient that the online algorithm works against an oblivious adversary, since the online algorithm is run in the \SLOCALedge model according to the precomputed schedule. This means that both the graph and the arrival order are fixed beforehand, thus can be seen as an oblivious adversary.

\subsection{Running \SLOCALedge in \LOCAL}\label{sec:ND_def}

 We create a schedule where edges can be processed simultaneously if they are at least $\ell$ hops apart, see \Cref{fig:path_dist_3_col} for an example where $\ell=3$. In other words, we need a distance-$\ell$ edge coloring: an edge coloring of $G$ such that edges within distance $\ell$ have different colors. This can also be seen as a vertex coloring of the $\ell$-th power of the line graph $L(g)$, i.e., of $L(G)^\ell$. 

 \begin{figure}[!ht]
    \includegraphics[page=1]{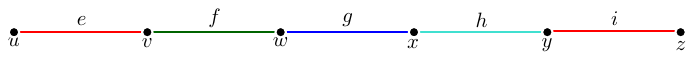}
    \centering
    \caption{A distance-$3$ coloring. Note that $e$ and $i$ can have the same color, while each other edge has a different color. In a distance-$1$ coloring (also \emph{edge coloring}): two colors would suffice: $g$ could also be red and $f$ and $h$ could both be green.}
    \label{fig:path_dist_3_col}
\end{figure}

\begin{lemma}\label{lm:SLOCAL_to_LOCAL_coloring}
    Given an \SLOCALedge algorithm with locality $\ell$, there exists a \LOCAL algorithm for the same problem that takes 
    $O(\ell\cdot\Delta^\ell+\log^*n)$ rounds.
\end{lemma}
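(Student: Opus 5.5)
We compute a distance-$\ell$ edge coloring of $G$ to use as a schedule, and then simulate the \SLOCALedge algorithm color class by color class. Concretely, consider the graph $H = L(G)^{\ell}$. Its vertices are the edges of $G$, and two of them are adjacent if their line-graph distance is at most $\ell$. An edge $e=\{u,v\}$ has at most $2(\Delta-1)$ neighbors in $L(G)$, so $H$ has maximum degree $\Delta_H \le \sum_{i=1}^{\ell}(2\Delta)^i = O(\Delta^{\ell})$ for constant $\ell$; more generally the bound is $(2\Delta)^{O(\ell)}$, and we absorb the constant $2^{\ell}$ into the $O(\cdot)$ as the paper does. A proper $(\Delta_H+1)$-vertex coloring of $H$ is exactly the schedule $E_1\cup\dots\cup E_T$ with $T = O(\Delta^\ell)$ in which any two edges in the same class are at distance more than $\ell$.

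To compute this coloring we simulate $H$ in \LOCAL. Each edge is simulated by one of its endpoints, say the one with the larger ID. One round of communication in $H$ costs $O(\ell)$ rounds in $G$, since $H$-neighbors are within $\ell+1$ hops in $G$ and there are no bandwidth restrictions. We then run a standard deterministic $(\Delta_H+1)$-coloring algorithm that takes $O(\Delta_H + \log^* n)$ rounds on $H$; for example, Linial's algorithm reduces to $O(\Delta_H^2)$ colors in $O(\log^* n)$ rounds, and then the algorithm of Barenboim–Elkin–Goldenberg (or a Kuhn-style reduction) finishes in $O(\Delta_H)$ rounds. Edge identifiers are pairs of vertex IDs, so they have $O(\log n)$ bits. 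This gives a coloring cost of $O(\ell\cdot(\Delta^\ell+\log^* n))$ rounds. This step is the main one needing care: with the overhead factor $\ell$ multiplying both terms we get $O(\ell\Delta^\ell+\ell\log^*n)$, which matches the claimed bound when $\ell$ is constant. To avoid the factor $\ell$ on $\log^* n$, we would run the $O(\log^* n)$ Linial phase only after gathering the $\ell$-hop neighborhood once. In \LOCAL, one collection of radius-$O(\ell)$ information lets every node simulate all subsequent rounds on $H$ locally, as long as they depend only on bounded-radius views. Alternatively, we accept the factor $\ell$, since $\ell$ is constant in every application here.

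Next we simulate the \SLOCALedge algorithm $\mathcal{A}$ with the order $\pi$ that lists $E_1$, then $E_2$, and so on, breaking ties within a class arbitrarily. In phase $i$, every edge $e\in E_i$ gathers its $\ell$-hop neighborhood in $O(\ell)$ rounds. This neighborhood contains the topology, including not-yet-processed edges, and the stored states of already-processed edges. Each $e\in E_i$ then computes its output and state as $\mathcal{A}$ would. For correctness we argue by induction over $\pi$ that this equals the execution of $\mathcal{A}_\pi$. Two edges $e,f\in E_i$ are at distance more than $\ell$, so $f$ is not in the view of $e$. Hence the state of $e$ in $\mathcal{A}_\pi$ is the same whether $f$ precedes $e$ or not within the class, and all edges of earlier classes have already been fixed in the simulation. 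Since $\mathcal{A}$ is correct for every arrival order, and in particular for $\pi$, the output is valid; if $\mathcal{A}$ is randomized, the order $\pi$ is fixed independently of its random bits, so the oblivious-adversary guarantee applies. This phase takes $T\cdot O(\ell) = O(\ell\Delta^\ell)$ rounds, and adding the coloring cost gives $O(\ell\Delta^\ell+\log^* n)$.
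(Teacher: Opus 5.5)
Your proposal is correct and is essentially the paper's proof: you compute a distance-$\ell$ edge coloring by an $O(\Delta'+\log^* n)$-round vertex coloring of $L(G)^\ell$, then process the color classes one at a time in $O(\ell)$ rounds each. The paper likewise treats $\ell$ as a constant so that simulating the line and power graphs costs only a constant factor, and your induction over $\pi$ spells out the paper's one-line correctness claim.

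The one misstep is your aside on removing the factor $\ell$ from the $\log^* n$ term. Gathering a radius-$O(\ell)$ neighborhood once lets a node simulate only $O(1)$ rounds on $H$, not all $O(\log^* n)$ of them, since those depend on a view of radius about $\ell\log^* n$ in $G$. You should therefore rely on your fallback of accepting the factor for constant $\ell$, which is exactly what the paper does.
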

\begin{proof}
    First, we compute a distance-$\ell$ edge coloring with $O(\Delta^\ell)$ colors on $G$. 

    \smallskip
    
    \textbf{Computing a distance-$\ell$ edge coloring.}
    We do this by computing a vertex coloring of $L(G)^\ell$, the $\ell$-th power of the line graph.  We know that we can compute an $O(\Delta')$ vertex coloring in $O(\Delta'+\log^* n)$ rounds~\cite{BarenboimEK14} on a graph of maximum degree $\Delta'$.
    Filling in that the maximum degree of $L(G)^\ell$ is $\Delta'=\Delta^\ell$, gives $O(\Delta^\ell+\log^*n)$ rounds of computation \emph{on $L(G)^\ell$}. However, computing on the $\ell$-th power graph (for constant $\ell$) and on the line graph only incur constant factors overhead in the \LOCAL model. 
    So we find a distance-$\ell$ edge coloring with $O(\Delta^\ell)$ colors in $O(\Delta^\ell+\log^*n)$ rounds. 

    \smallskip
    
    \textbf{Using the distance-$\ell$ edge coloring.}
    Next, we use this distance-$\ell$ edge coloring with $O(\Delta^\ell)$ colors.
    If edges $e$ and $e'$ have different colors, they have distance at least $\ell$, hence can be processed simultaneously by the \SLOCALedge algorithm. This can be performed in $\ell$ rounds by either of the endpoints, without loss of generality, by the node with largest ID.
    
    We execute the schedule by processing all edges with colors $1$, then $2$, etc., until $O(\Delta^\ell)$. This takes $(\ell\cdot \Delta^\ell)$ rounds in total. 
\end{proof}

We get the following corollary for complexity classes. Such a result was already known for the \SLOCAL model~\cite{GhaffariKM17}. Blackboxing these results would give the \SLOCALedge version with $\ell\leftarrow \ell+2$. We note that $\ell$ appears as a power in the second result, which is why we have to be more careful here for better results. 

\CorESLOCAL*
\begin{proof}
    %
    This follows immediately from \Cref{lm:SLOCAL_to_LOCAL_coloring}. 
\end{proof}






\subsection{Proof of Theorem~\ref{thm:LOCALmainNoND}}
Next, we provide a proof of \Cref{thm:LOCALmainNoND}. We do this in two steps. First, we show \Cref{thm:LOCALmainNoNDNoDS}, which has a polynomial dependency on $\Delta$. Then we use degree splitting to reduce to the case of small $\Delta$. 


Naively, our deterministic \SLOCALedge{}(5) algorithm can be scheduled using a distance-$5$ edge coloring. Indeed, if two (or more) edges $e$ and $f$ are of distance more than $5$, then they can be scheduled at the same time as the \SLOCALedge{} algorithm will not make them read eachother (although, they migh have read-only access to the same edges in between, but that is okay). This would require a schedule of $\Theta(\Delta^5)$ rounds, but below we argue that for this particular algorithm we can do it in $O(\Delta^4)$ rounds by computing a shorter schedule.

\LOCALNoNDNoDS*
\begin{proof}

    To obtain the runtime stated in the theorem, we first compute a edge coloring $C'$ of $G$ with $2\Delta-1\leq 2\Delta$ colors in $O(\Delta+\log^* n)$ rounds with the algorithm of \cite{BarenboimEK14}. We run the same \SLOCALedge{} algorithm as in \Cref{sec:online_locality}, with a minor change: all the canonical matchings $\mathcal{M}(w,U)$ will be taken from the $2\Delta$-coloring $C'$. This will increase the number of such matchings from $\Delta+1$ to $2\Delta$, but everywhere in the proof we rounded the $(\Delta+1)$ term to $2\Delta$ anyway. 
    Then, we define the conflict graph of edges that cannot be scheduled simultaneously by this algorithm.

    \paragraph{Conflict graph.} Define the conflict graph $G_{\text{conflict}}$ where the vertices are the edges of the original graph, and two vertices (read: edges) are connected if and only if they cannot be scheduled simultaneously. Equivalently, for our potential-function based algorithm, this means that two edges are connected if they affect the same part of the potential function.

Any independent set in the conflict graph can be scheduled simultaneously.  Naively, the conflict graph of our \SLOCALedge{} algorithm has degree $\Delta' = \Theta(\Delta^5)$, as it has locality $5$. Hence any $O(\Delta')$ (vertex)-coloring of $G_{\text{conflict}}$ can be used as a schedule of length $\Delta^5$. We show that we can decrease the maximum degree $\Delta'$ of the conflict graph to $O(\Delta^4)$ instead, so that we can get a shorter schedule.

    \begin{claim}\label{claim:conflict_graph_size}
    $G_{\text{conflict}}$ has maximum degree $O(\Delta^4)$.
    \end{claim}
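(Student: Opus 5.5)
The plan is to replace the naive ``distance $\le 5$'' criterion by an explicit description of conflicts coming from the potential $\Phi=\sum_w\Phi_w$, and then count, for a fixed edge $e$, how many edges $f$ share a potential term with it. Two edges $e,f$ conflict only if there is a single martingale-controlling term (one of $\Phi_{Q_{U_wC}}$, $\Phi_{H_U}$, $\Phi_{X_{C,U}}$, or $\Phi_{M,C}$ for some matching $M$) whose value changes when $e$ is processed and also changes when $f$ is processed. Equivalently, the change in that term depends on the choices of both edges. If two edges touch disjoint sets of terms, their potential changes add independently. Hence each can pick a non-increasing choice simultaneously, and by linearity the total potential does not increase. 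So I would first argue that $e$ and $f$ may be scheduled together unless they share a term.

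Next I would enumerate, for each edge $e=\{u,v\}$, which terms it updates. These are the terms indexed by vertices $w$ within distance $2$ of $e$.

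For $\Phi_{Q_{U_wC}}$, $\Phi_{H_U}$ and $\Phi_{X_{C,U}}$, the term is updated by $e$ only if $e$ is incident to $w$ or to a neighbor of $w$ in $U$. Such a term is therefore touched by edges in the $2$-hop star around $w$. Those are $O(\Delta^2)$ edges, and $w$ ranges over $O(\Delta^2)$ vertices near $e$. This gives $O(\Delta^4)$ conflicts.

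The delicate case is $\Phi_{M,C}$. Here $M\subseteq M_i\in\mathcal{M}(w,U)$, and the term is updated exactly when the arriving edge shares an endpoint with some edge of $M$. Naively, $e$ touches an edge $g\in M_i$ near $w$, and the other edges $f$ updating the same term touch some other $g'\in M$. That would range over a $3$-hop neighborhood of $w$ with $w$ at distance $2$, i.e.\ $\Delta^5$. The key saving is that the matchings are now taken from the fixed global $2\Delta$-edge coloring $C'$. So $M_i$ is simply a color class of $C'$ restricted to $G[U]$, and $e$'s conflict set through these terms is controlled by the pair (endpoint of $e$, color class).

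I would then count differently. The edge $e$ updates $P_{gc}$ only for the $\le 2\Delta$ edges $g$ adjacent to it. Each such $g$ has a single $C'$-color $i$ and lies in $G[U]$ only for $w$ within distance $2$ of $g$'s endpoints. A conflicting $f$ must touch some $g'$ of the same color $i$ in $G[U]$ for such a $w$. For fixed $g$, the vertices $w$ contribute $O(\Delta^2)$ choices, with $g$ adjacent to a neighbor of $w$. For each $w$, the color-$i$ edges in $G[U]$ number at most $|N(w)|\le\Delta$, since the matching covers each vertex of $U$ at most once. Each such $g'$ has $O(\Delta)$ incident edges $f$. This totals $O(\Delta)\cdot O(\Delta^2)\cdot\Delta\cdot\Delta$, which is too many, so I must exploit that $w$ and $g$ are linked.

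Instead I would count pairs $(w,f)$ via the vertex $x\in U$ that $g$ is incident to. $e$ fixes $O(\Delta)$ such $x$ (neighbors of $e$'s endpoints), and each $x$ has $\Delta$ choices of $w$. For each $w$, the $\le\Delta$ same-color matching edges each have $O(\Delta)$ incident $f$. This gives $O(\Delta)\cdot\Delta\cdot\Delta\cdot\Delta=O(\Delta^4)$.

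The main obstacle is exactly this step: verifying that every term of $\Phi_{M,C}$ and $\Phi_{-L_{MC}}$ is truly indexed only through one fixed color class, so that the matching saves a factor $\Delta$. I also need to confirm that grouping over subsets $M$, $C$, $U$ does not introduce additional dependence. Those subsets multiply the number of terms but not the set of edges touching any term, and since the conflict relation only concerns edges, the count is unaffected.
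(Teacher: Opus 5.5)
Your proposal is correct and is essentially the paper's argument: conflicts are taken at the level of individual martingale-controlling terms, the terms supported on $2$-hop stars give few conflicts, and for $\Phi_{M,C}$ a factor $\Delta$ is saved because the matchings are color classes of the fixed edge coloring $C'$, so the second matching edge $g'$ (the paper's $d$ on the path $e,a,b,c,d,f$) is determined by its endpoint in $U$ and the color of $g$ (the paper's $a$). One small bookkeeping point: your first factor should count pairs $(g,x)$ rather than vertices $x$, since when $x$ is an endpoint of $e$ there are up to $\Delta$ choices of $g$ and hence of color class; there are still only $O(\Delta)$ such pairs, so the $O(\Delta^4)$ bound stands.
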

    \begin{proof}
    Consider some edge $e = (u,v)$. We look at what edges are in conflict with $e$. Most of our potential functions will be centered at some vertex $w$ and be affected by edges in the $2$-hop of $w$. For those potentials, $e$ will be in conflict with all other edges of distance at most $3$ away from $e$. The only part of the potentials $\Phi_{w}$ that actually require the $3$-hop from $w$ (and would naively require all edges within distance $5$ from $e$ to be conflicts) are the potentials $\Phi_{M,C}$ from \cref{lem:matchings}, where $M$ are some matchings of edges within the $2$-hop of some vertex $w$. Edge $e$ affects $\Phi_{M,C}$ if $e$ shares an endpoint with some edge in $M$.

    This means that $e$ is only in conflict with an edge $f$, if there exists a path of edges $e,a,b,c,d,f$ where $a$ and $d$ belong to the same matching (i.e., are colored the same color in $C'$). In particular, there are $2\Delta$ choices for $a$, $\Delta$ choices for $b$, $\Delta$ for $c$, \textbf{only at most one choice for $d$} (given the choice of $a$), and $\Delta$ choices for $f$. In total this contributes $2\Delta^4$ to the conflict-degree of $e$, and not $\Theta(\Delta^5)$: we save a factor of $\Delta$ from the construction of the potentials all being based on a consistent partition into matchings $C'$.
\renewcommand{\qed}{\ensuremath{\hfill\blacksquare}}
\end{proof}
\renewcommand{\qed}{\hfill \ensuremath{\Box}}

     $G_{\text{conflict}}$ has maximum degree $O(\Delta^4)$ and any two non-adjacent vertices of $G_{\text{conflict}}$ (aka edges of $G$) can be  be processed simultaneously in the \SLOCALedge algorithm.  We can compute the necessary schedule aka vertex coloring of $G_{\text{conflict}}$ in $O(\Delta^4+\log^* n)$ rounds with the algorithm by \cite{BarenboimEK14}. Iterating through the schedule and deciding on the color of each edge takes $O(\Delta^4+\log^* n)$ rounds and computes the desired $(1+\eps)\Delta$-edge coloring. 
\end{proof}

\paragraph{Degree splitting.}
\Cref{thm:LOCALmainNoNDNoDS} uses $\poly(\Delta)$ rounds. To make this more efficient, we combine it with the standard technique of \emph{degree splitting}. 
The degree splitting problem seeks a partitioning of the graph edges $E$ into two parts so that the partition looks almost balanced around each node. Concretely, we should color each edge red or blue such that for each node, the difference between its number of red and blue edges is at most some small discrepancy value $\kappa$. In other words, we want an assignment $q \colon E \to \{+1, -1\}$ such that for each node $v\in V$, we have
\begin{equation*}
    \left|\sum_{e\in E(v)}q(e)\right| \le \kappa,
\end{equation*}
where $E(v)$ denotes the edges incident on $v$. We want $\kappa$ to be as small as possible.

\begin{lemma}[\cite{GhaffariHKMSU20}]\label{lm:degree_splitting_LOCAL}
    For every $\eta> 0$, there is a deterministic 
    $\tilde O(\eta^{-1}\cdot \log n)$
round \LOCAL algorithm for computing degree splitting with discrepancy at each node at most $\eta \cdot d(v) +O(1)$. 
\end{lemma}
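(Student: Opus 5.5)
The plan is the classical Euler-tour reduction to a path problem. The main work goes into choosing where the resulting cycles are cut, so that every vertex is the endpoint of only few paths.

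\emph{Step 1: reduction to paths.} Each vertex $v$ pairs up its incident edges arbitrarily, leaving at most one edge unpaired. Split $v$ into $\lceil d(v)/2\rceil$ virtual copies of degree at most $2$. The edge set then decomposes into vertex-disjoint paths and cycles in the virtual graph.

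\emph{Step 2: balance is easy without endpoints.} Suppose each path or cycle is $2$-colored so that consecutive edges alternate red/blue. Then every interior virtual copy sees one red and one blue edge and contributes $0$ to the discrepancy at $v$. Odd cycles force one clash. So the discrepancy at $v$ is at most the number of path endpoints, cut points or odd-cycle clashes located at copies of $v$. Short components (length $\le L$, with $L := \Theta(\eta^{-1})$) are handled by direct collection in $O(L)$ rounds. Each of them contributes at most one unit to the discrepancy of the vertex hosting its end or clash. Those units are charged in Step 4 in the same way as cuts.

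\emph{Step 3: cut long components into segments of length $\Theta(L)$.} On the long cycles/paths, compute a ruling set of the virtual graph with spacing in $[L,2L]$ in $O(L\log^* n)$ rounds. This splits each long component into \emph{windows} of $\Theta(L)$ consecutive virtual nodes. Each window must designate exactly one of its virtual nodes as a cut point. The resulting paths then have length $\Theta(L)$, are alternately colored in $O(L)$ rounds, and each window creates discrepancy $O(1)$ at the vertex owning its chosen cut point.

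\emph{Step 4 (main obstacle): assigning cuts so no vertex is overloaded.} The cut points must be chosen so that each original vertex $v$ receives at most $\eta d(v)+O(1)$ of them. A naive choice, such as always cutting at a ruling-set node, could concentrate many cuts at one high-degree vertex. I would phrase this as an assignment problem in the bipartite graph between windows and original vertices: window $W$ is adjacent to $v$ for every copy of $v$ in $W$. Spreading each window uniformly gives a fractional assignment in which every virtual node gets load $O(1/L)$. Hence $v$ gets load $O(d(v)/L)\le \tfrac{\eta}{2} d(v)$ for suitable $L$.

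It remains to round this fractional solution deterministically, so that every window still chooses exactly one cut and every vertex exceeds its fractional load by at most $\tfrac{\eta}{2}d(v)+O(1)$. The tool I would try first is deterministic distributed rounding of fractional (b-)matchings in the style of \cite{GhaffariHKMSU20}. That approach halves the denominators of the fractional values $O(\log L)$ times. Each halving step is itself an instance of the same splitting problem, now on a graph whose components have length $O(L)$, and is solved with $O(1)$ additive error per level, with the errors summing geometrically. Alternatively, a network-decomposition-based rounding can be used. The key point is that all communication stays within windows of length $O(L)$. The cost is then $O(L)$ dilation times $O(\log n)$ for symmetry breaking, per level, over $\mathrm{poly}\log$ levels. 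This yields the claimed $\tilde O(\eta^{-1}\log n)$ rounds and discrepancy $\eta d(v)+O(1)$.

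Controlling the accumulated additive error across the rounding levels is the delicate part. I expect that choosing $L$ with an extra $\log$ factor absorbs it into the $\tilde O$.
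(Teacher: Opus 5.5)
There is a genuine gap, and it sits in Step~4. That step is where the whole difficulty of degree splitting lives, and the proposal does not resolve it; it also differs from the cited proof, which is built on sinkless orientation rather than an Euler-type pairing with cut assignment.

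\textbf{Step~4 reduces the problem to itself.} To round the window-to-vertex assignment by halving, each of the $\Theta(\log L)$ levels must split the incidence multigraph $H$ between windows and original vertices. Every window must keep at least one candidate, and every $v$ must keep roughly half of its candidates. Since the multiplicative errors compound over $\Theta(\log L)$ levels, this is degree splitting of $H$ with precision about $1/\log L$. This $H$ is not a union of components of length $O(L)$: a vertex of degree $d(v)$ lies in up to $d(v)/2$ windows, so $H$ is as connected as $G$. Hence it cannot be split with $O(1)$ additive error by local collection, contrary to what you assert.

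You could phrase this as an explicit bootstrapping recursion, obtaining precision $\eta$ from precision about $1/\log(1/\eta)$ on a virtual graph of dilation $O(1/\eta)$. But the subproblem precision is coarser than $\eta$ only while $\eta$ is small, so the recursion bottoms out at a constant precision. You then still need a base algorithm that achieves a constant-precision split deterministically in $O(\log n)$ rounds.

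Your ``$O(\log n)$ for symmetry breaking'' has no source. Ruling sets and colorings are fast but cannot enforce per-vertex load bounds. Known deterministic network decompositions cost about $\log^2 n$ rounds, which gives $\tilde O(\eta^{-1}\log^2 n)$ and overshoots the claimed bound.

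\textbf{The missing ingredient is sinkless orientation.} This is what the cited proof of \cite{GhaffariHKMSU20}, sketched in \Cref{sec:Congest_DS}, relies on:
\begin{itemize}
\item Each $v$ is split into degree-$3$ virtual nodes, and a sinkless orientation is computed in $O(\log n)$ rounds.
\item $v$ then owns at least about $d(v)/3$ outgoing edges. It can pair these into length-$2$ virtual edges without interference from other vertices.
\item This shrinks the number of path ends at $v$ by a factor $2/3$ per recursion level.
\item Bootstrapping keeps the path lengths, and hence the dilation, at $\tilde O(\eta^{-1})$.
\end{itemize}

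\textbf{A smaller error in the load claim.} Your claim that every virtual node receives load $O(1/L)$ is false for the short odd cycles that you charge like cuts. A triangle spreads its unit over only three copies, so a vertex lying on $d(v)/2$ triangles receives load $d(v)/6$. Take $G$ an edge-disjoint union of triangles, with the pairing following the triangles. Counting clashes as unsigned units gives $\sum_v(\text{clashes at }v)=m/3$. For $\eta<1/6$ and large degrees, this exceeds $\sum_v(\eta d(v)+O(1))$.

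The repair is easy. Flipping a short odd cycle changes only the sign of its single clash, so the host vertex can alternate signs and absorb all its clashes with discrepancy at most $2$. The same trick fails on long components, because flipping a segment changes the signs at both of its cut ends. That coupling is exactly why cut placement there is a genuine splitting problem.
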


Using degree splitting for coloring has been done before, see, e.g.,~\cite{GhaffariHKMSU20,BMNSU25}. We include a proof here for completeness.

\thmLOCALmainNoND*
\begin{proof}
    For $\Delta\le \Theta(\sqrt{\log n})$, this follows from, e.g., \cite{BEG17}, which provides $2\Delta-1$ coloring in $O(\Delta+\log^*n)$ rounds. 

    The remainder of the proof is dedicated to $\Delta\ge \Theta(\sqrt{\log n})$.
    Our goal is to edge-partition $G$ into  subgraphs $G_1, G_2, \dots$, such that each subgraph $G_i$ has low maximum degree $\Delta(G_i)$ and the sum of their maximum degrees $\sum_i \Delta(G_i)$ is bounded by $(1+\eps)\Delta$. If this is the cases, then we can color each subgraph $G_i$ with $(1+\eps)\Delta(G_i)$ colors, using $\sum_i (1+\eps)\Delta(G_i)\le (1+\eps)^2 \Delta$ colors in total. Setting $\eps \leftarrow\eps/3$ then gives the result. 

    Be obtain this partition by iteratively applying the degree splitting lemma \Cref{lm:degree_splitting_LOCAL} with $\eta := \tfrac{\eps}{2\log(\Delta/\Delta')}$ until we reach parts that have maximum degree $\Delta' :=\Theta(\sqrt{\log n})$. We will show that $k=\log(\Delta/\Delta')$ recursive iteration suffice for this.     
    If the maximum degree of each part after $i$ iterations is bounded by $\Delta_i$, then by \Cref{lm:degree_splitting_LOCAL} $\Delta_{i+1}\le \tfrac{1}{2}(\Delta_i+\eta\Delta_i+\gamma)$, for some constant $\gamma>0$.
    
    Now we claim by induction that 
    \begin{equation}\label{eq:deg_splitting_induction}
        \Delta_i \le \left( \frac{1+\eta}{2}\right)^i\Delta  +\frac{\gamma}{2}\sum_{j=0}^{i-1}\left( \frac{1+\eta}{2}\right)^j.
    \end{equation}
    The base case is immediate: $\Delta_1 \le (1+\eta)\Delta+\tfrac{\gamma}{2}$ by \Cref{lm:degree_splitting_LOCAL}. Now in the induction step, we see 
    \begin{align*}
        \Delta_{i+1} \le \frac{1}{2}(\Delta_i+\eta\Delta_i+\gamma) &\le \frac{1}{2}(1+\eta)\left(
        \left( \frac{1+\eta}{2}\right)^i\Delta +\frac{\gamma}{2}\sum_{j=0}^{i-1}\left( \frac{1+\eta}{2}\right)^j\right)+\frac{\gamma}{2} \\ 
        &= \left( \frac{1+\eta}{2}\right)^{i+1}\Delta +\frac{\gamma}{2}\sum_{j=0}^{i}\left( \frac{1+\eta}{2}\right)^j,
    \end{align*}
    which proves \eqref{eq:deg_splitting_induction}. Note that using the geometric sum we have that for $\eta \le 1-\gamma/5$
    \begin{equation*}
        \frac{\gamma}{2}\sum_{j=0}^{i-1}\left( \frac{1+\eta}{2}\right)^j= \frac{\gamma}{2}\frac{1-\left( \frac{1+\eta}{2}\right)^i}{1- \frac{1+\eta}{2}} \le \frac{\gamma}{1-\eta} \le 5. 
    \end{equation*}
    Using this, we see that 
    \eqref{eq:deg_splitting_induction} in particular implies that
    \begin{equation}\label{eq:deg_splitting_bound_on_Delta_k}
        \Delta_k \le \left( \frac{1+\eta}{2}\right)^k\Delta +5 \le 2^{-k}\exp(\eta\cdot k)\Delta + 5 \le (1+\eps)\Delta'+5 \le \Theta(\Delta').
    \end{equation}
    So indeed $k=\log(\Delta/\Delta')$ iterations suffice to make all parts have maximum degree $\Theta(\Delta')$. 

    We color each subgraph now using \Cref{thm:LOCALmainNoNDNoDS} with $(1+\eps)\Delta_k$ colors. 
    Using \eqref{eq:deg_splitting_bound_on_Delta_k}, we have in total
    \begin{equation*}
        (1+\eps)2^k \Delta_k \le (1+\eps) (1+\eps)\Delta +5\cdot \Delta/\Delta'. 
    \end{equation*}
    We note that $\Delta' = \Theta(\sqrt{\log n}\ge 1/\eps)$, so $5\cdot \Delta/\Delta'\le 5\eps\Delta$. 
    Now setting $\eps \leftarrow \eps/10$ gives that we used at most $(1+\eps)\Delta$ colors. 

    \smallskip
    
   \textbf{Round complexity.}
   We run at most $k=\log(\Delta/\Delta')$ iterations of degree splitting with $\eta= \tfrac{\eps}{2\log(\Delta/\Delta')}$. Each such iteration takes 
   $\tilde O(\eta^{-1}\log n)$ rounds, by \Cref{lm:degree_splitting_LOCAL}. So in total we use $\tilde O(k\cdot \eta^{-1}\log n)=\tilde O(\log^2\Delta\log n)$ rounds for degree splitting. Then on each resulting graph, we have maximum degree $\Delta' =\Theta(\sqrt{\log n})$, which we color in $O(\Delta'^3)=O(\log^{3/2} n)$ rounds using \Cref{thm:LOCALmainNoNDNoDS}. 
\end{proof}

\newpage
\section{\CONGEST Edge Coloring}
\label{sec:congest}
In this section, we prove \Cref{thm:CONGESTmain}, showing that we can efficiently compute a $(1+\eps)\Delta+O(\sqrt{\log n})$ coloring in the \CONGEST model. The structure of our proofs will be analogous to \Cref{sec:local}. However, we cannot rely on \Cref{cor:complexity} and we need to whitebox several procedures in order to  show that we can implement them under the additional bandwidth constraints. 


\subsection{An Efficient Schedule}\label{sec:CONGEST_schedule}

The following lemma is a simplification of \cite[Thereom 1.1]{M21} with $m=\Delta^k$, $d=0$ and $k=1$. 
\begin{lemma}\label{lm:easy_coloring}
    There exists a $\Delta$-round \CONGEST algorithm that, given a constant $k\ge 1$ and an undirected graph $G=(V,E)$, together with an $O(\Delta^k)$ vertex coloring, outputs an $O(\Delta)$ vertex coloring. 

    The algorithm consists of internally computing some order of $O(\Delta)$ colors. Each round consists of trying one color. If there are no conflicts, it is adopted, otherwise it is rejected. 
\end{lemma}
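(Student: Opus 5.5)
The plan is a one-color-per-round color reduction in the style of Linial's algebraic constructions, simplified from \cite[Theorem 1.1]{M21}. Let $q$ be a prime with $q = \Theta(\Delta)$, say $q > c_0 k\Delta$ for a suitable constant $c_0$. Identify each input color $x \in [O(\Delta^k)]$ with a distinct polynomial $p_x$ of degree at most $k$ over $\mathbb{F}_q$; there are $q^{k+1} \ge \Delta^{k}\cdot\Theta(\Delta)$ such polynomials, which is enough. A node with input color $x$ owns the candidate list $L_x = \{(a, p_x(a)) : a \in \mathbb{F}_q\}$. Each element of $L_x$ is an output color from the palette $\mathbb{F}_q\times\mathbb{F}_q$. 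This palette has size $q^2 = O(\Delta^2)$, which is too many, so I refine the encoding below. The internally computed order is the order $a = 0,1,2,\dots$, and it is the same for every node, so no communication is needed to agree on it.

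Round $a$ works as follows. Every node that is still uncolored proposes the color $(a, p_x(a))$ and sends it to its neighbors in $O(\log \Delta)$ bits. It adopts the color if no neighbor proposes the same value and no neighbor has already adopted it; otherwise it rejects. A conflict in round $a$ between neighbors $u,v$ requires $p_{x_u}(a) = p_{x_v}(a)$. Two distinct polynomials of degree at most $k$ agree on at most $k$ points, so each neighbor blocks $u$ in at most $k$ rounds. Colors adopted in round $a'$ carry the first coordinate $a'$, so they can never clash with a proposal in round $a \neq a'$. Hence, over $q > k\Delta$ rounds, some round has no conflict with any of the at most $\Delta$ neighbors, and $u$ adopts a color. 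Adjacent nodes never both adopt the same value, because in the same round equal proposals are rejected and in different rounds the first coordinates differ. This yields a proper coloring in $q = O(\Delta)$ rounds, one candidate tried per round.

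To get $O(\Delta)$ colors instead of $O(q^2)$, I drop the first coordinate and use only the values $p_x(a) \in \mathbb{F}_q$ together with the rule "adopt if no neighbor currently proposes it and no neighbor has previously adopted it." This is the step I expect to be the main obstacle. A previously adopted value $p_{x_v}(a')$ may equal $p_{x_u}(a)$ for many $a$, and the algebraic bound no longer controls such cross-round collisions directly. I would handle it by counting. Each adopted neighbor permanently blocks a single value $y$, and $p_{x_u}(a) = y$ holds for at most $k$ values of $a$. Each proposing neighbor blocks at most $k$ rounds, as before. So each neighbor blocks at most $2k$ rounds in total, and with $q > 2k\Delta$ rounds there is a free round. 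Since $k$ is a constant, both the palette size $q$ and the round count are $O(\Delta)$.

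This gives the structural claim: an internal order of $O(\Delta)$ colors, one trial per round, with adoption exactly when there is no conflict. The only caveat is matching the precise $\Delta$-round bound of the statement, where I would follow \cite{M21}'s tighter constants rather than the loose $O(\Delta)$.
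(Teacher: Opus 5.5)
Your final scheme is exactly the one-color-per-round (batch size $1$) case of the color-sequence algorithm of \cite{M21}, which the paper invokes without further proof, so your proposal is correct and takes the same route: in round $a$ you try $p_x(a)\in\mathbb{F}_q$, adopt it unless a neighbor currently proposes or already holds that value, and each neighbor blocks at most $2k$ of the $q=\Theta(\Delta)$ rounds. The one slip is your claim that $p_{x_u}(a)=y$ holds for at most $k$ values of $a$, which fails when $p_{x_u}$ is constant; fix it either by assigning only non-constant polynomials (there are $q^{k+1}-q$ of them, still enough) or by noting that no neighbor can adopt $u$'s constant value while $u$ keeps proposing it every round.
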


We use this lemma to compute a schedule.

\begin{lemma}\label{lm:dist_l_coloring_conflict_graph}
    There exists a deterministic \CONGEST algorithm that computes an edge coloring  of the conflict graph with $O(\Delta^4)$ colors in $O\left(\Delta^5\log^*n+\Delta^{4}\cdot \ceil{\tfrac{\Delta^2\log \Delta}{\log n}}\right)$ rounds. 
\end{lemma}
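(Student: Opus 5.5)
We obtain the schedule by coloring the vertices of $G_{\text{conflict}}$ (the edges of $G$) in two phases, mirroring the \LOCAL proof of \Cref{thm:LOCALmainNoNDNoDS}. The first phase computes a proper coloring of $G_{\text{conflict}}$ with a polynomial number of colors, $O(\Delta^{k})$ for some constant $k$. The second phase reduces this to $O(\Delta^4)$ colors with \Cref{lm:easy_coloring}, applied to $G_{\text{conflict}}$, whose maximum degree is $O(\Delta^4)$ by \Claim~\ref{claim:conflict_graph_size}.

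\emph{Phase 1.} Start with the $2\Delta$-edge coloring $C'$ computed in $O(\Delta+\log^* n)$ rounds, as in the proof of \Cref{thm:LOCALmainNoNDNoDS}. Then compute a distance-$5$ edge coloring of $G$ with $O(\Delta^{10})$ colors. Any such coloring is also proper on $G_{\text{conflict}}$, because conflicting edges lie within distance $5$. To get it, simulate Linial-style color reduction on $L(G)^5$. In each of the $O(\log^* n)$ iterations, every edge learns the current colors of all edges within distance $5$, which is $O(\Delta^5)$ colors. Flooding them through the $5$-hop neighborhood sends $O(\Delta^5)$ messages across each edge per hop, which costs $O(\Delta^5)$ rounds per iteration. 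This gives the $O(\Delta^5\log^* n)$ term.

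\emph{Phase 2.} Run \Cref{lm:easy_coloring} on $G_{\text{conflict}}$ with input coloring $O(\Delta^{10})$, so $k=10$. This takes $O(\Delta^4)$ trial rounds. In each trial, every edge $e$ proposes a color, and we must detect whether some conflicting $f$ proposes the same color. By the structure in \Claim~\ref{claim:conflict_graph_size}, $f$ conflicts with $e$ via a path $e,a,b,c,d,f$ where $a$ and $d$ share a $C'$-color, together with edges within distance $3$. So each edge must learn only the relevant proposals along such paths.

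I propose a push scheme. Each edge broadcasts its proposed color along conflict paths. At every hop, a vertex forwards a compressed summary: for each $C'$-color $a$, it forwards the set of colors proposed through paths starting with color $a$, and it uses the fact that $d$ is determined by $a$ to avoid blowing up by a factor $\Delta$. I expect the per-hop load to be $O(\Delta^2)$ items, each of $O(\log\Delta)$ bits, if colors are encoded relative to the $O(\Delta^4)$ palette. Candidate color indices take $O(\log \Delta)$ bits because the trial order in \Cref{lm:easy_coloring} is computed internally from the known $O(\Delta^{k})$ coloring. Sending $O(\Delta^2\log\Delta)$ bits over an edge takes $\lceil \Delta^2\log\Delta/\log n\rceil$ rounds. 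Multiplying by the $O(\Delta^4)$ trials gives the second term.

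The main obstacle is bounding this congestion in Phase 2 by $O(\Delta^2)$ items per edge per trial, rather than the naive $O(\Delta^4)$. The key is aggregating at intermediate vertices. Conflict-freeness requires only that each edge learn which of its candidate colors appear among its conflict-neighbors. So a vertex can, for each incident edge and matching color class, forward just the set of colors seen, and this set is bounded by the in-flow from the at most $\Delta$ previous-hop edges. I would carefully check that at most $O(\Delta^2)$ distinct (color, matching-class) pairs pass any edge, using that the middle edge $b$ or $c$ carries paths indexed by only $O(\Delta)$ choices of $a$ times $O(\Delta)$ source edges.
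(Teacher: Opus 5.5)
\textbf{Gap: the Phase~2 congestion bound is false for the scheme you describe.} Your overall structure matches the paper's: a $\mathrm{poly}(\Delta)$ initial coloring, then \Cref{lm:easy_coloring} on $G_{\text{conflict}}$. Your Phase~1 is a fine substitute for the paper's version, which colors $G^6$ with $O(\Delta^6)$ colors via Barenboim--Goldberg and turns this into an $O(\Delta^7)$-coloring by local indexing. The problem is Phase~2, at exactly the step you flagged. A one-directional push of aggregated (color, class) summaries along the whole conflict path has worst-case load $\Theta(\Delta^3)$ per edge, not $O(\Delta^2)$. Write the path as $e=v_1v_2$, $a=v_2v_3$, $b=v_3v_4$, $c=v_4v_5$, $d=v_5v_6$, $f=v_6v_7$. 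On the fourth hop (over $d$) only the class $C'(d)$ survives. The colors in that class are still indexed by free choices of $c$ at $v_5$, $b$ at $v_4$ and $e$ at $v_2$, with $a$ fixed by the class. That is up to $\Delta^3$ distinct colors, and nothing forces them to coincide. Likewise the third hop carries up to $\Delta^2$ colors for each of up to $\Delta$ classes queried at $v_5$. Your count ``$O(\Delta)$ choices of $a$ times $O(\Delta)$ source edges'' holds only at the second hop from the source; your own in-flow bound gives $\Delta\cdot\Delta^2$ one hop later. Aggregation cannot help, because the queries $\chi_f$ sit on the far side and the sender does not know them. You would get $\Delta^4\lceil \Delta^3\log\Delta/\log n\rceil$ as the second term instead of the stated one.

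\textbf{Missing idea: meet in the middle.} This is what the paper does. Every node first learns its $5$-hop neighborhood, including $C'$, in $O(\Delta^5)$ rounds. In each trial, an edge sends its tried color only two hops from its endpoints. Conflicting edges are at line-graph distance at most $5$, so some vertex ($v_4$ above) is within two hops of an endpoint of each. That vertex receives both colors, decides locally whether the pair is adjacent in $G_{\text{conflict}}$, and on a clash sends a rejection back along the same two hops. A directed edge $(x,y)$ then carries only messages from the $\le\Delta$ edges at $x$ and the $\le\Delta^2$ edges at neighbors of $x$, so the load is $O(\Delta^2)$ per trial.

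\textbf{Second problem: messages must identify their source edge.} With ID-free color sets, an edge can be blocked by its own tried color returning around a short cycle. One example is a $5$-cycle $v_1v_2v_3v_4v_5$ with $e=v_1v_2$ and $C'(v_2v_3)=C'(v_5v_1)$; a triangle does the same for the short-range conflicts. Such an edge would reject every color it tries. Identifiers unique within distance $5$ fit in $O(\log\Delta)$ bits: your Phase-1 colors work, and the paper uses its $O(\Delta^6)$-coloring of $G^6$. This gives $O(\Delta^2\log\Delta)$ bits per edge per trial, hence $\lceil \Delta^2\log\Delta/\log n\rceil$ rounds per trial and the stated bound.
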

\begin{proof}
The main idea is to apply \Cref{lm:easy_coloring} to the conflict graph. We know that $G_{\text{conflict}}\subseteq G^5$ and $\Delta(G_{\text{conflict}})=O(\Delta^4)$, see \Cref{claim:conflict_graph_size}. Hence an $O(\Delta(G_{\text{conflict}}))$ coloring of the edges gives the required schedule. 

To apply \Cref{lm:easy_coloring}, we first need a $\poly \Delta$ coloring of the conflict graph. Since our conflict graph is a subgraph of $L(G^5)$, we just color $L(G^5)$. This means that all edges within distance $5$ need to get distinct colors. We assign each edge to the endpoint with the highest ID. We now first color the vertices of the graph, such that vertices that are distance $\le 6$ apart get distinct colors. In other words, we color the vertices of $G^6$.  
Barenboim and Goldberg~\cite{BarenboimG24} give a deterministic \CONGEST algorithm for an $O(\Delta^\ell)$ vertex coloring of $G^{\ell}$ in $\tilde O(\ell\cdot \Delta^{\ell-1}\log^* n)$ rounds. In particular, in $O(\Delta^5\log^*n)$ rounds, this gives a vertex coloring of $G^6$ with $O(\Delta^6)$ colors. We turn this into an edge coloring as follows: each vertex $v$ with color $c_v$ considers the edges $e_1, \dots e_\Delta$ assigned to it. It now colors $e_i$ with colors $(c_v,i)$. This means that we obtain an intermediate coloring that colors $G_{\text{conflict}}$ with $\Delta\cdot O(\Delta^6)=O(\Delta^7)$ colors. In particular, we have a $\poly(\Delta_{\text{conflict}})$ coloring of the conflict graph $G_{\text{conflict}}$.

Now we are ready to apply \Cref{lm:easy_coloring}, which gives the required coloring. It takes $O(\Delta(G_{\text{conflict}}))=O(\Delta^4)$ rounds if each edge could directly communicate with each edge that is neighboring in the conflict graph. This is not the case. They are clearly a constant distance away, so what remains for us is to analyze the congestion of one color try.
\begin{itemize}
    \item Every node learns its $5$-hop neighborhood in $\Delta^5$ rounds. Then it internally constructs the part of the conflict graph it participates in, see \Cref{claim:conflict_graph_size}. 
    \item Each edge tries one color (since the algorithm is deterministic, both endpoints know this)
    \item It broadcasts this color for $2$ hops. If any such message meets a message from an edge that is neighboring in the conflict graph:
        \begin{itemize}
            \item If the colors are different: continue.
            \item If the colors are the same: abort and report. 
        \end{itemize}
\end{itemize}

See \Cref{fig:path_easy_coloring} for a picture of the situation. The edges $e$ and $f$ are within distance $5$: $d(e,f)=5$. During a color try, they need to check wether they have the same color. The nodes $v_2$ and $v_6$ will generate the color for the edges $e$ and $f$ respectively, they will broadcast it for $2$ hops. The node $v_4$ will receive both messages, and if $e$ and $f$ are neighboring in the conflict graph, it will compare them.

The congestion that appears due to this is $O(\Delta^2)$, since $v_2$ needs to send colors for $O(\Delta)$ edges, and $v_3$ has $O(\Delta)$ such neighbors.  Naively, this now takes $O(\Delta^2)$ rounds for each color try. However, we can leverage that we do not have big messages: each color try consists of some vertex IDs and a color. The former is written in $O(\log n)$ bits, the latter in $O(\log \Delta)$ bits. In fact, we do not need unique vertex IDs over the whole graph: they do just need to be unique within our $5$-hop neighborhood. Hence we use the vertex coloring with $O(\Delta^6)$ colors as IDs, now only needing $O(\log \Delta)$ bits for each ID. That means that the total congestion becomes $O(\Delta^2\log \Delta)$ bits. Since we have bandwith $\Theta(\log n)$ bits, we can transmit this in $O\left(\ceil{\tfrac{\Delta^2\log \Delta}{\log n}}\right)$ rounds. 


\begin{figure}[!ht]
        \includegraphics[page=1]{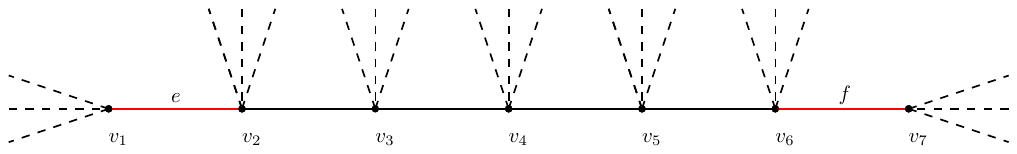}
        \centering
        \caption{The edges $e$ and $f$ are within distance $5$: $d(e,f)=5$. During a color try, they need to check wether they have the same color.}
        \label{fig:path_easy_coloring}
    \end{figure}

This gives a total of $O\left(\Delta^5\log^*n+\Delta^{4}\cdot \ceil{\tfrac{\Delta^2\log \Delta}{\log n}}\right)$ rounds. 
%
%
%
\end{proof}

\subsection{Main Result}\label{sec:CONGEST_main}

For our main result, we use degree splitting to make the maximum degree low. 
This is essentially known from prior work~ \cite{GhaffariHKMSU20,Nolin25}. We include more details in \Cref{sec:Congest_DS}. In particular, combining \Cref{lm:degree_spliting_CONGEST,lm:SO_CONGEST} gives the following result. 

\begin{lemma}\label{lm:degree_splitting_congest_logn}
    For every $\eta> 0$, there is a deterministic $\tilde O(\eta^{-1} \log n)$
round \CONGEST algorithm for computing degree splitting with discrepancy at each node at most $\eta \cdot d(v) +\gamma$, where $\gamma$ is some small, fixed constant. 
\end{lemma}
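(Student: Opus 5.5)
The plan is to follow the standard route of \cite{GhaffariHKMSU20} and port it to \CONGEST, as the text indicates: combine \Cref{lm:degree_spliting_CONGEST} and \Cref{lm:SO_CONGEST} from \Cref{sec:Congest_DS}. Degree splitting is reduced to orienting the edges of an auxiliary graph so that every node is almost balanced. First, each node $v$ pairs up its incident edges arbitrarily, leaving at most one edge unpaired. Gluing the pairs gives a decomposition of $E$ into edge-disjoint paths and cycles, each internal node of which is traversed by exactly one pair. If we 2-color the edges alternately along every path/cycle, each internal visit contributes $+1$ and $-1$, so the discrepancy at $v$ comes only from path endpoints and from odd cycles. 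Walks can be long, however, so we cannot do this exactly in $o(n)$ rounds. Instead we aim for discrepancy $\eta d(v)+\gamma$.

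To get there, I would cut every long path/cycle into segments of length $\Theta(1/\eta)$. This needs a ruling set or coloring on the path structure, which can be computed in $O(\log^* n)$ rounds per segment-length step, i.e.\ $\tilde O(1/\eta)$ rounds in \CONGEST, since each node only handles $O(1)$ messages per pair. Each segment is then alternately colored internally, so inside a segment the only imbalance is at cut points. The imbalances at cut points are handled by a second-level problem, a sinkless-orientation / low-out-degree-style orientation of a virtual multigraph: nodes are vertices, and each segment becomes a virtual edge joining its endpoints. Orienting a virtual edge corresponds to choosing which of the two alternating colorings the segment uses, and this flips the contribution at both endpoints. A node $v$ is an endpoint of at most $O(\eta d(v))+O(1)$ segments, since cuts are sparse. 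A balanced orientation of this virtual graph then yields discrepancy $\eta d(v)+\gamma$. Here I would invoke \Cref{lm:SO_CONGEST}, which I expect gives a deterministic $\tilde O(\log n)$-round algorithm for the needed orientation (via a balanced or sinkless orientation) on a graph. Simulating one virtual round costs $O(1/\eta)$ real rounds, because virtual edges are paths of length $O(1/\eta)$ and each real edge lies on $O(1)$ segments, so the total cost is $\tilde O(\eta^{-1}\log n)$.

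The main obstacle is the \CONGEST simulation of the orientation algorithm on the virtual graph. We must verify that every real edge carries only $O(1)$ virtual edges, so there is no extra congestion, and that the orientation routine of \Cref{lm:SO_CONGEST} uses only $O(\log n)$-bit messages per virtual edge per round. If the orientation subroutine needs gathering, that is where the $\tilde O$ overhead enters. I would first check that the edge-disjointness of the segments gives dilation $O(1/\eta)$ and congestion $1$; then the standard simulation bound applies directly.
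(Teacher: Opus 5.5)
Citing the two lemmas, as your first sentence does, is the paper's entire proof: \Cref{lm:degree_spliting_CONGEST} gives $\tilde O(\eta^{-1})\cdot T_{SO}$ rounds, and \Cref{lm:SO_CONGEST}, applied to the graph of degree-$3$ virtual nodes, gives $T_{SO}=O(\log n)$. The mechanism you then describe in place of \Cref{lm:degree_spliting_CONGEST} has a genuine gap, at the sentence ``a node $v$ is an endpoint of at most $O(\eta d(v))+O(1)$ segments, since cuts are sparse.'' Sparse cuts along each path say nothing about where the cuts land. A ruling set computed on the path structure can select the same high-degree node $v$ on all $d(v)/2$ paths passing through it, which makes $v$ an endpoint of $\Theta(d(v))$ segments.

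In fact no cutting rule running in $\tilde O(\eta^{-1}\log^* n)$ rounds can guarantee your bound. If it did, orienting every segment consistently would already give a directed splitting with discrepancy $O(\eta d(v))+O(1)$. For a small constant $\eta$ on $\Delta$-regular graphs with large constant $\Delta$, that is a sinkless orientation computed in $O(\log^* n)$ rounds, contradicting the known $\Omega(\log n)$ deterministic lower bound for sinkless orientation. The same observation shows that, were your bound true, your second level would be superfluous.

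The second level also cannot repair the case where the bound fails. \Cref{lm:SO_CONGEST} only guarantees out-degree at least $1$ at nodes of degree at least $3$; it gives no balance. A balanced orientation of your segment multigraph, in which $v$ may still have degree $\Theta(d(v))$, is just the original splitting problem again.

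The missing idea is to let the sinkless orientation decide \emph{which} pairs each node merges. Split every $v$ into about $d(v)/3$ virtual nodes of degree $3$ and compute a sinkless orientation. Then $v$ owns at least $d(v)/3$ outgoing edges and pairs only those into virtual length-$2$ edges. Merges done by other nodes never increase $v$'s degree, so in the resulting multigraph $v$ has degree at most about $2d(v)/3$. This guaranteed per-node shrinkage, iterated over $O(\log(1/\eta))$ levels, is what bounds the number of path endpoints at each node by $\eta d(v)+O(1)$. The recursion and bootstrapping balance of \cite{GhaffariHKMSU20} keeps the total cost at $\tilde O(\eta^{-1})\cdot T_{SO}$. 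Your observation that virtual edges are edge-disjoint paths (congestion $1$, dilation equal to the path length) is correct, and it is exactly what the paper uses to run this in \CONGEST. With that in place, plugging in $T_{SO}=O(\log n)$ from \Cref{lm:SO_CONGEST} gives the lemma.
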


\thmCONGESTmain*
\begin{proof}
    For $\Delta\le \Theta(\sqrt{\log n})$, this follows from, e.g., \cite{BEG17}, which provides $2\Delta-1$ coloring in $O(\Delta+\log^*n)$ rounds. 

    The remainder of the proof is dedicated to $\Delta\ge \Theta(\sqrt{\log n})$.
    In the proof of \Cref{thm:LOCALmainNoND}, we showed that deterministic degree splitting reduces the problem to disjoint graphs of maximum degree $\Delta'=\Theta(\sqrt{\log n})$. In \CONGEST, we can compute deterministic degree splitting in $\tilde O(\log^2\Delta \log n)$ rounds by \Cref{lm:degree_splitting_congest_logn}. 
    Since the resulting graphs are disjoint, we can run \CONGEST algorithms on each of them without worrying about additional congestion. Formally, each message also contains a header indicating the subgraph that the message belongs to. The schedules executed in the disjoint subgraphs in parallel are of length $O(\Delta'^4)$ and can be computed with \Cref{lm:dist_l_coloring_conflict_graph} in $O\left(\Delta'^5\log^*n+\Delta'^{4}\cdot \ceil{\tfrac{\Delta'^2\log \Delta'}{\log n}}\right)= \tilde O(\log^{2.5}n)$ rounds.    
    Each schedule has $O(\Delta'^4)$ steps, where in each step an edge needs information from at most $5$ hops away (\Cref{thm:online_is_ESLOCAL}). The amount of information is at most $\poly(\Delta')$, so takes $O(\poly(\Delta'))$ rounds in total \CONGEST. Next, we investigate how much information this is exactly, and what the total congestion is. 

     The \SLOCALedge model, considers edge-arrivals. In the \LOCAL model, we can always run an algorithm `on an edge' with constant overhead: the necessary information is located at both endpoints. It can be transferred to one endpoint in $1$ round, where the computation can be performed, and the result can again be broadcasted in $1$ round. In the \CONGEST model, any information that needs to be known `on an edge' needs to be transmitted \emph{over} the edge, such that the relevant endpoint can make the decision. So our goal remains to show what needs to be known `on an edge'. 

    Firstly, every unprocessed edge needs to know its sampling probabilities. This can simply be updated whenever a neighboring edge is processed. This is at most $O(\Delta')$ messages, since there are $O(\Delta')$ possible colors in the distribution. However, the information does not need to propagate further. The bottleneck will be in the derandomization part of the algorithm. 

    By \Cref{lem:potential-gives-slocal}, when processing an edge $e=uv$ we need information from the $5$-hop neighborhood for derandomization. 
    
     From each unprocessed edge, we only need to know existence. That means that every node needs to know its $5$-hop neighborhood. Every edge has to forward $O(\Delta'^5)$ messages for this. Since this only needs to happen once, it is already dominated by the time needed to compute the schedule. 

    From each processed (arrived) edge $f$, we need to know
    \begin{enumerate}[(1)]
        \item the time of arrival $t$; \label{item:info_needed_time}
        \item the color assigned to it; \label{item:info_needed_color}
        \item the values $P_{fc}^{(t)}$ up to precision $\poly(\Delta')$; and \label{item:info_needed_P}
        \item the line in the algorithm $f$ was assigned a color. \label{item:info_needed_line}
    \end{enumerate}
    
     \ref{item:info_needed_time}, \ref{item:info_needed_color} and \ref{item:info_needed_line} clearly fit in a message of size $O(\log \Delta')$. The bottleneck is the $O(\Delta')$ values $P_{fc}^{(t)}$ (one for each color $c$) from \ref{item:info_needed_P}. Each such value needs to be given with precision $O(\poly \Delta')$, so it fits in $O(\log(\Delta')$ bits. 

     We note that we do not need this information for \emph{all} nodes within 5 hops, but only for those connected with the conflict graph: $O(\Delta'^4)$ many by \Cref{claim:conflict_graph_size}. 

    In total this means that we need to collect $O(\Delta'^5)$ messages of $O(\log n)$ bits `on' $e$. Since $e$ is processed within $u$ or $v$, this means we need to send $O(\Delta'^5)$ messages over $e$. Hence requesting this information upon the `arrival' of $e$ takes $O(\Delta'^5)$ time.     
    
    To get a faster algorithm, we do not `pull' this information, but rather `push' it. This means that whenever an edge $e'$ fixes its color, it informs its 5-hop neighborhood. Then the edge $e$ already has all the information it needs when it is being processed.

    Now let us consider how many rounds we need for such a push operation. Since we only push information $5$ hops, we only need to count congestion. Consider an edge $e=uv$ that needs to receive information from $f$ and $g$. That means that $f$ and $g$ are both neighboring $e$ in the conflict graph. In particular, it means that $e$ is within $5$ hops of $f$ and $g$, i.e., $d(e,f)\le 5$ and $d(e,g)\le 5$.
    
    Moreover, assume the shortest paths to $f$ and $g$ both go through $u$.
    We claim that $f$ and $g$ share at most one edge $a$ on their path to reach $u$.
    If not, their paths merge at some edge $b$, of distance at most $3$. See \Cref{fig:CONGEST_disjoint_paths} for an example.  That means that there are at most two edges between $f$ and $b$, and at most two edges between $g$ and $b$. This means that there is a path of $4$ edges between $f$ and $g$, so $d(f,g)\le 5$. Now since $e$ and $f$ are neighboring in the conflict graph, $a$ and $d$ need to be in the same matching $M$, see \Cref{claim:conflict_graph_size} for details. Similarly, $a$ and $d'$ need to be in the matching $M$. This means that $d(f,g)\le 5$ and $d$ and $d'$ are both in $M$, hence $f$ and $g$ are also neighboring in the conflict graph, and therefor cannot be processed simultaneously.

   \begin{figure}[!ht]
        \includegraphics[page=1]{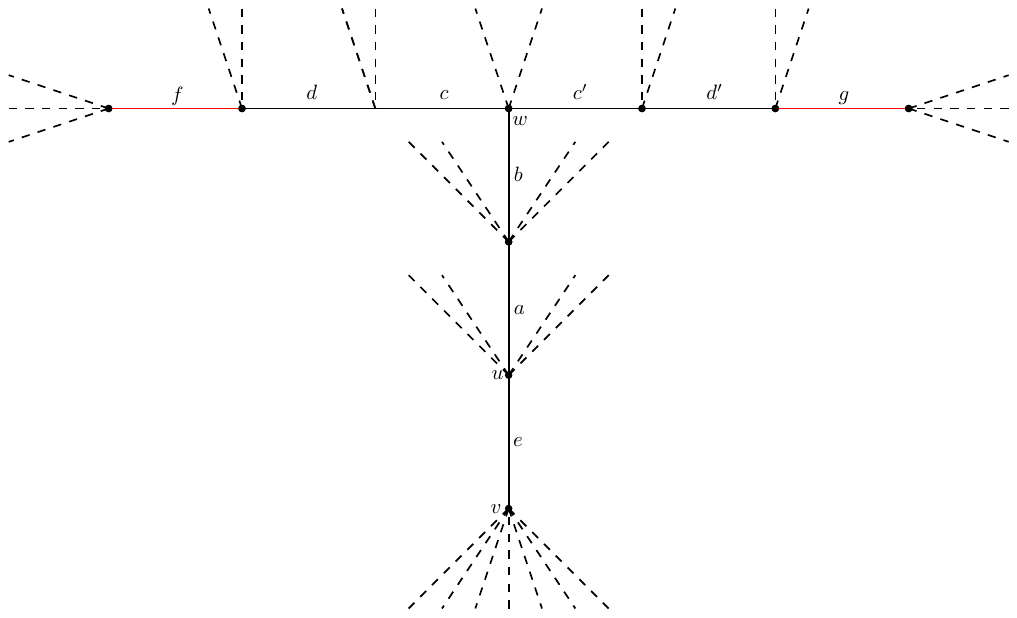}
        \centering
        \caption{The edges $f$ and $g$ both have distance $5$ to $e$ and are neighboring in the conflict graph. They share the edges $a$ and $b$ on their shortest path to $e$. We can conclude that $f$ and $g$ are also neighboring in the conflict graph.}
        \label{fig:CONGEST_disjoint_paths}
    \end{figure}
   
    This means that there can be at most $\Delta'^2$ active edges for any edge $e$. Since each edge needs to send $O(\Delta')$ messages, this is $O(\Delta'^3)$ messages in total. Each such message consists of $O(\log \Delta' )$ bits, as described above. The bottleneck in a message size is now the vertex IDs to label the message. However, we do not need unique vertex IDs beyond 5 hops. Hence we can use the $O(\Delta^6)$ vertex coloring as new vertex IDs of size $O(\log \Delta)$. This means that each message is $O(\log \Delta' )$ bits \emph{in total}. 

    Our total congestion now becomes $O(\Delta'^3 \log \Delta' )= O(\log^{1.5} n\log \log n)$ bits per edge, which can be transmitted in $O(\log^{0.5} n \log \log n)$ rounds, since we have a bandwidth of $\Theta(\log n)$ bits. Here we use that $\Delta'=\Theta(\sqrt{\log n})$. 

    So every instance of the coloring schedule takes $O(\log^{0.5} n \log \log n)$ rounds, hence we obtain total round complexity of $O(\Delta'^4)\cdot O(\log^{0.5} n \log \log n)= \tilde O(\log^{2.5} n)$ rounds for executing the schedule.  
\end{proof}

\subsection{ \texorpdfstring{$(2\Delta-1)$}{2Delta-1}-Edge Coloring}
\label{sec:congest_cor}

We show that the following corollary follows immediately from \Cref{thm:CONGESTmain}. 

\CorTwoDelta*
\begin{proof}
    \Cref{thm:CONGESTmain} gives the result for graphs with $\Delta\ge \Theta(\sqrt{\log n})$. When $\Delta= O(\sqrt{\log n})$, we can use the deterministic algorithm from Barenboim, Elkin, and Goldenberg~\cite{BEG17} that edge colors a graph with $2\Delta-1$ colors in $O(\Delta+\log^*n)$ rounds in \CONGEST. For $\Delta=\Theta(\sqrt{\log n})$, this reduces to $O(\sqrt{\log n}) = \tilde O(\log^{2.5}n+\log^2 \Delta \log n)$. 
\end{proof}

\appendix
\newpage
\section{Further Related Work}\label{sec:related work}

\paragraph{Edge coloring.}
There has been extensive research on various aspects of edge coloring in distributed models of computation. We also refer the reader to the detailed related work section in \cite{JMS25} and to the excellent book by Barenboim and Elkin on early results on distributed graph coloring \cite{barenboimelkin_book}.

In \Cref{sec:intro}, we have already discussed the relevant deterministic work on edge coloring with fewer than $2\Delta-1$ colors. There are two results that we have not mentioned. First, there is an $(3/2+\eps)\Delta$-edge coloring algorithm in $\tilde O(\log^2\Delta\log n)$ rounds \cite{BMNSU25}. Second, \cite{JMS25} give a faster algorithm (for general $\Delta)$ that runs  in $\tilde O(\log^{5/3}n)$ rounds, but it uses $2\Delta-2\gg (1+\eps)\Delta$ colors and hence only saves a single color compared to the greedy version of the problem.

For coloring with $2\Delta-1$ colors a series of work \cite{K20,BKO20,BBKO22} culminated with a $\poly\log \Delta+O(\log^*n)$ algorithm in the \LOCAL model and with a similar complexity but $(8+\eps)\Delta$ colors in the \CONGEST model. As reasoned,  algorithms with such an efficient runtime cannot exist for coloring with fewer colors even when restricted to graphs with  degrees being at least $\Omega(\sqrt{\log n})$.  When parameterized solely as a function of $n$ the first $\poly\log n$-round deterministic algorithm for greedy edge coloring was published by Fischer, Ghaffari, and Kuhn in 2017, even before one could efficiently compute network decompositions~\cite{FGK17}. The work was later improved to $\tilde O(\log^2\Delta\log n)$ rounds by Harris \cite{H20}. 

As mentioned in \Cref{sec:intro}, there are several randomized  algorithms usually targeting $(1+\eps)\Delta$ colors, starting with the relatively slow R\"odl Nibble method-based algorithm from Panconesi, Dubashi, and Grable \cite{NibbleMethod} and a  $\Delta+O(\log n/\log\log n)$-edge coloring algorithm  by Su and Vu in $\poly(\Delta,\log n)$ rounds~\cite{SuVu19},   over a $\poly\log\log n$-round algorithm in \LOCAL \cite{Davies23} and \CONGEST \cite{HMN22}, even to a $O(\log^*n)$-round algorithm for large degree graphs \cite{CHLPU18,HN21}.  Randomized, the problem has an $\Omega(\log_{\Delta}\log n)$ lower bound and it was attained on trees in the same paper \cite{CHLPU18}.

There are only a few works that previously explicitly aimed at designing \emph{deterministic} LOCAL algorithms. 
\cite{GKMU18} provides a $(1+\eps)\Delta+O(\log n)$-edge coloring in $O(\log^7 n)$ rounds, which remains inherently at least $\Omega(\log^3 n)$, even if one  plugs in significantly improved algorithms for subroutines such as computing hypergraph maximal matchings \cite{GG24}. 

Another technically involved result  by Bernshteyn and Dhawan \cite{Bernshteyn, VizingBoundedDegree}  gave a deterministic $(\Delta + 1)$-edge coloring algorithm in $\widetilde{\mathcal{O}}(\Delta^{84} \cdot \log^5 n)$ rounds. While it uses fewer colors, it is significantly slower and relies on deep and intricate techniques from descriptive combinatorics. 
Alternatively, one can obtain $\poly\log n$-round deterministic algorithms for the problem, by taking the state-of-the-art randomized edge coloring algorithms, e.g.,  \cite{Davies23,HMN22,HN21},  and leveraging them by the extremely powerful derandomization framework developed in a series of papers \cite{GhaffariKM17,GHK18,RG20,GG24}. 

\paragraph{Connections between online, sequential local and distributed models.}
A prior effort to bridge between online, sequential local and distributed models has been done in \cite{akbari-eslami-etal-2023-locality-in-online-dynamic}. They define the so-called Online-LOCAL model. Like classic online algorithms, it has unbounded centralized memory, but it also allows access to the $T$-hop neighborhood (in the to-be-revealed graph) of any vertex or edge that has already arrived. This can be viewed as an extension of the \SLOCAL model revelant to our work (see \Cref{sec:technical_overview}) with centralized memory.   See  \cite{akbari-coiteux-roy-etal-2025-online-locality-meets} for the definition of a randomized version of the Online-LOCAL model. Their main result shows that the unbounded centralized memory in comparison to the \SLOCAL model does not help significantly for  solving locally checkable graph problems on paths, cycles, and rooted trees.

\paragraph{Online coloring.}
Sleater and Tarjan~\cite{SleatorT85} initiated the research in online algorithms. Since then, many graph problems have been studied, such as matching~\cite{KarpVV90}, independent sets~\cite{HalldorssonIMT02}, vertex cover~\cite{DemangeP05}, dominating set~\cite{BoyarEFKL19}, and, of course, online edge coloring. 
As mentioned before, a simple greedy algorithm that for every arriving edge picks an available color achieves a $2\Delta-1$ coloring. It was already shown over thirty years ago~\cite{BMN92} that using a deterministic algorithm, one cannot improve on this -- even by a single color -- for graphs with maximum degree $\Delta=O(\log n)$. Similarly, one cannot do better with a randomized algorithm for graphs with maximum degree $\Delta=O(\sqrt{\log n})$. For arbitrary $\Delta$, there is also a lower bound saying that $\Delta + \Omega(\sqrt{\Delta})$ colors are needed \cite{CohenPW19}.

Since then, many have obtained results in restricted settings, such as random order edge-arrivals~\cite{AggarwalMSZ03,BahmaniMM12,BhattacharyaGW21,KulkarniLSST24,DudejaG025}, bipartite graphs with one sided vertex-arrivals~\cite{CohenPW19,BlikstadSVW24bipartite}, vertex arrivals~\cite{SaberiW21}, and initial results for the general case of edge-arrivals~\cite{KulkarniLSST24,BlikstadSVW24edge}.

Very recently, \cite{BlikstadSVW25} gave the algorithm that we build on in this work. It provides near-matching upper bounds: they give a deterministic algorithm that computes a $(1+o(1))\Delta$ edge coloring for graphs with $\Delta \ge \omega(\log n)$, and they give a randomized algorithm that computes a $(1+o(1))\Delta$ edge coloring for graphs with $\Delta \ge \omega(\sqrt{\log n})$, against an oblivious adversary, matching the thresholds in the lower bounds from \cite{BMN92}.


 \newpage
\section{Degree Splitting in the \CONGEST Model}\label{sec:Congest_DS}
As seen in \Cref{thm:LOCALmainNoND}, we  use degree splitting to reduce the general case to a smaller maximum degree. 
We already used a degree splitting lemma in the \LOCAL model, \Cref{lm:degree_splitting_LOCAL}. In the \CONGEST model, no such result can be found in the literature. However, with minor adaptations, we can use the \LOCAL algorithm. 

Basically, \cite{GhaffariHKMSU20} works in the \CONGEST model. The only subroutine in there that is \LOCAL-specific is sinkless orientation. Recent work~\cite{Nolin25} shows that sinkless orientation can also be done in $ O(\log n)$ rounds in the \CONGEST model, see \Cref{lm:SO_CONGEST}.

\paragraph{Sinkless orientation.}
As a subroutine in \Cref{lm:degree_spliting_CONGEST}, we need to solve the sinkless orientation problem. Let us first define the problem. 

\begin{definition}
    Given a graph where each edge is given an orientation (i.e., for an edge $uv$, either the edge is oriented from $u$ to $v$ or from $v$ to $u$), a \emph{sink} is a node such that all of its incident edges are oriented towards it. 
    
    In the \emph{Sinkless Orientation} problem, the goal is to orient all edges such that every node of degree $3$ or higher is not a sink, i.e., has at least one outgoing edge.
\end{definition}

Nolin~\cite{Nolin25} gave a deterministic $O(\log n)$ \CONGEST algorithm for the problem. 

\begin{lemma}[Sinkless Orientation~\cite{Nolin25}]\label{lm:SO_CONGEST}
    There is a deterministic \CONGEST algorithm that, given a graph with minimum degree $\delta$, computes a Sinkless Orientation in $O(\log_\delta n)$ rounds. 
\end{lemma}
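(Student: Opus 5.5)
The approach is the standard reduction of sinkless orientation to short cycles, in a form that keeps only $O(1)$ tokens per edge. Since the minimum degree is $\delta\ge 3$, a ball of radius $r$ that induces a tree has at least $(\delta-1)^{r}$ vertices. Hence every ball of radius $r_0=c\log_{\delta-1} n=O(\log_\delta n)$ contains a cycle. Once every vertex is within distance $O(r_0)$ of some "anchor" cycle, the orientation is easy. Orient each anchor cycle cyclically, which gives every vertex on it an out-edge. Orient every other vertex along a shortest path towards the anchor set, i.e., each vertex orients its BFS-parent edge outward. Orient all remaining edges arbitrarily. Then every vertex has an out-edge, so no vertex is a sink.

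The steps, in order, are as follows.

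First, run a multi-source ``first-token'' BFS for $O(r_0)$ rounds. Every vertex starts with its own ID as a token. In each round a vertex forwards only the token with the smallest ID it has seen, together with its hop count, and remembers the neighbour it came from. This costs $O(\log n)$ bits per edge per round. It produces a forest of Voronoi-type cells, each a BFS tree of depth $O(r_0)$.

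Second, detect cycles. An edge $\{x,y\}$ that is not a tree edge closes a cycle of length $O(r_0)$, either inside a single cell or, together with a second such edge, through adjacent cells. The tree property argument above guarantees that every cell, or every cell together with its $O(1)$-hop surroundings, contains such an edge. A vertex incident to a non-tree edge can climb to its root via parent pointers in $O(r_0)$ rounds, sending a single $O(\log n)$-bit message per step.

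Third, select vertex-disjoint anchor cycles, at most one per cell, chosen canonically (say, the closing edge with lexicographically smallest endpoint IDs). The cell's tree then provides the orientation towards it. Cells that are not selected are merged, via their contracted graph, into a selected neighbour in $O(r_0)$ further rounds.

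The main obstacle is the second and third steps in CONGEST: cycles found by different cells may overlap, and a naive cycle-based orientation would then create conflicts or congestion. My first attempt would be to only ever orient cycles lying \emph{inside} one cell, whose tree paths are disjoint from other cells by construction. This avoids congestion because each edge carries messages of at most two cells. For cells with no internal cycle, I would show via the degree-counting argument that there is a chain of cells of length $O(1)$ in the cell graph leading to a cell that does have one, or alternatively that the cell graph itself has minimum degree $\ge 3$ so the argument can recurse on it with geometrically fewer nodes. The total round count would then remain $O(\log_\delta n)$.
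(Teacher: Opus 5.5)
The paper does not prove this lemma; it imports it from \cite{Nolin25}. Your outer reduction is sound: orient vertex-disjoint short cycles cyclically, run one BFS from their union, and orient parent edges outward. That costs a single BFS in \CONGEST. The gap is that the proposal never produces what this reduction needs, namely vertex-disjoint short cycles with every vertex within $O(\log_\delta n)$ hops of one. Producing that family is the whole difficulty.

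\textbf{Step~1 fails as stated.} Under ``forward the smallest ID seen'', a vertex ends with the minimum ID of its $r_0$-ball, but the neighbour it points to may end with a smaller token. Take a geodesic path $p_1,\dots,p_L$ whose IDs decrease along the path, with each $p_i$ padded to degree $3$ by a private pendant gadget with large IDs. Every $p_i$ adopts $p_{i+r_0}$ and points to $p_{i+1}$, so the pointer forest has depth $\Theta(n)$. The token class of $p_j$ is also disconnected: it contains $p_{j-r_0}$ and the gadget vertex hanging off $p_{j-r_0+1}$, but not $p_{j-r_0+1}$ itself. So you do not get connected cells with depth-$O(r_0)$ trees.

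\textbf{Cells need not contain cycles.} Even with a genuine clustering into connected cells of radius $O(r_0)$, the fact that every $r_0$-ball contains a cycle says nothing about cells, which may be much smaller than $r_0$-balls. In a cubic graph of girth $g=\Theta(\log n)$, any cell whose BFS depth $\rho$ satisfies $2\rho+1<g$ has no internal non-tree edge. Your rule of orienting only cycles inside one cell may therefore select nothing. Everything then rests on the two fallbacks, which you announce but do not prove.
\begin{itemize}
\item The claim of an $O(1)$-length chain of cells to an anchored cell has no supporting argument.
\item For the recursion, it is true that a $k$-vertex tree cell has at least $k+2\ge 3$ boundary edges, but you would also need the cell graph to shrink, which is not shown.
\item More seriously, each round on the cell graph costs $\Theta(r_0)=\Theta(\log_\delta n)$ rounds in $G$: that is the cell diameter, and messages from many neighbouring cells must also be aggregated through the cell tree. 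A recursive call taking $T'$ cell-rounds therefore costs $\Theta(r_0 T')$, which is $O(\log_\delta n)$ only if $T'=O(1)$. Nothing in the proposal gives that.
\end{itemize}
So the $O(\log_\delta n)$ bound is not established. A deterministic \CONGEST procedure that selects disjoint short cycles dominating the graph (or any equivalent mechanism) is exactly the nontrivial content the paper takes from \cite{Nolin25} as a black box.
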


\paragraph{Degree splitting.}
For self-containedness of this section, let us restate the problem in degree splitting. 
The undirected degree splitting problem seeks a partitioning of the graph edges $E$ into two parts so that the partition is nearly balanced around each node. Concretely, we should color each edge red or blue such that for each node, the difference between its number of red and blue edges is at most some small discrepancy value $\kappa$. In other words, we want an assignment $q \colon E \to \{+1, -1\}$ such that for each node $v\in V$, we have
\begin{equation*}
    \left|\sum_{e\in E(v)}q(e)\right| \le \kappa,
\end{equation*}
where $E(v)$ denotes the edges incident on $v$. We want $\kappa$ to be as small as possible.

In the \emph{directed} version, we orient all edges such that, at each node, the difference between its number of incoming and outgoing edges is at most the discrepancy value $\kappa$.

We can now state the \CONGEST degree splitting result and explain how to obtain it from the algorithm in \cite{GhaffariHKMSU20}.
\begin{lemma}[\CONGEST Degree Splitting]\label{lm:degree_spliting_CONGEST}
For every $\eta> 0$, there is a deterministic $\tilde O(\eta^{-1} )T_{SO}$
round \CONGEST algorithm for computing (directed) degree splitting with discrepancy at each node at most $\eta \cdot d(v) +O(1)$, where $T_{SO}$ is the round complexity of solving the sinkless orientation problem. 
\end{lemma}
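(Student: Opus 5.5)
The plan is to follow the \LOCAL degree splitting algorithm of \cite{GhaffariHKMSU20} step by step. We check that every step other than sinkless orientation uses only $O(\log n)$-bit messages on the original edges, and we treat sinkless orientation as a black box costing $T_{SO}$ rounds. Recall the structure of that algorithm.

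First, we handle vertex degrees and directed versus undirected splitting. We reduce directed degree splitting to the undirected version on an auxiliary graph in which each vertex is split into copies of bounded degree. This auxiliary graph is simulated locally by its host vertex, so the simulation costs no congestion.

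Second, we use an Euler-tour style decomposition. The edges are partitioned into edge-disjoint paths and cycles, so that each vertex is an endpoint of at most one path, or of $O(1)$ paths. Orienting each path or cycle consistently then gives discrepancy $O(1)$ at every vertex. The difficulty is that the paths can be long. The algorithm therefore cuts them into short segments of length $\Theta(\eta^{-1})$ (up to polylogarithmic factors). It then builds an auxiliary multigraph $H$ whose nodes are segments or their junction points, and whose edges encode the choice of orientation for each segment. Each vertex $v$ may be the endpoint of up to $\eta d(v)$ segments, and these create discrepancy, which is where the $\eta d(v)$ term comes from. A sinkless orientation of $H$, which has high minimum degree, fixes the segment orientations, and the analysis of \cite{GhaffariHKMSU20} gives discrepancy $\eta d(v)+O(1)$.

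Implementing this in \CONGEST takes three ingredients. Forming the local pairing of edges at each vertex (which incoming edge is continued by which outgoing one) is purely local bookkeeping. Cutting paths into segments of length $\tilde O(\eta^{-1})$ requires a ruling set or coloring along the paths. Because the paths are edge-disjoint, each edge of $G$ carries information for only one path, so this runs in $\tilde O(\eta^{-1}+\log^* n)$ rounds without congestion. Simulating one round of the sinkless orientation algorithm on $H$ means communicating along a segment of length $\tilde O(\eta^{-1})$. Since segments are edge-disjoint and each \CONGEST message of the black-box algorithm has $O(\log n)$ bits, it is pipelined along the segment. One round on $H$ therefore costs $\tilde O(\eta^{-1})$ rounds in $G$, for a total of $\tilde O(\eta^{-1})\cdot T_{SO}$.

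The main obstacle is confirming that the simulation of $H$ has no hidden congestion. In particular, a vertex of $G$ may host many segment endpoints, namely nodes of $H$. We need that every edge of $H$ maps to a distinct path in $G$, with disjoint edge sets, so that messages of different $H$-edges never share a $G$-edge. We also need to handle $H$-nodes that have degree below $3$, where sinkless orientation is not required, exactly as in \cite{GhaffariHKMSU20}. If cutting introduces any overlap, we fall back on recursing with $O(\log(1/\eta))$ splitting rounds, which contributes only polylogarithmic factors hidden in the $\tilde O$.
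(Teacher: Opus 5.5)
Your simulation principle is exactly the paper's CONGEST argument: virtual edges are edge-disjoint paths in $G$, so $O(\log n)$-bit messages can be forwarded hop by hop without congestion, and sinkless orientation is used as a black box. The gap is in the algorithm you apply it to. It is not the algorithm of \cite{GhaffariHKMSU20}, and as described it does not give discrepancy $\eta d(v)+O(1)$.

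\textbf{The cutting step.} A ruling set along the Euler-tour paths gives no control over how many cut points land on a fixed vertex. Indeed, $v$ may be a cut point on all $\approx d(v)/2$ paths through it. So ``each $v$ is the endpoint of at most $\eta d(v)$ segments'' is unjustified. No cheap step can supply such a bound either. Suppose every $v$ had $c(v)\le \eta d(v)/2+O(1)$ cut points. Then \emph{any} orientation of the (internally consistently oriented) segments would already have discrepancy at most $2c(v)+1\le \eta d(v)+O(1)$. This would need no sinkless orientation and only $\tilde O(\eta^{-1}+\log^* n)$ rounds. For constant $\eta$, on $\Delta$-regular graphs with $\Delta$ a large enough constant, that is a sinkless orientation in $O(\log^* n)$ rounds. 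This contradicts the deterministic $\Omega(\log n)$ lower bound.

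\textbf{The sinkless orientation of $H$.} A single sinkless orientation of your $H$ only guarantees one outgoing edge per $H$-node and creates no balance. So the appeal to ``the analysis of \cite{GhaffariHKMSU20}'' is not supported by your construction. The closing fallback is not an argument either.

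\textbf{The missing recursion.} The paper's construction runs as follows.
\begin{enumerate}
\item Split each $v$ into $\approx d(v)/3$ virtual degree-$3$ nodes and compute a sinkless orientation. Now $v$ exclusively controls at least $d(v)/3$ outgoing edges.
\item $v$ pairs these edges, replacing each pair $\{v,u\},\{v,w\}$ by a virtual edge $\{u,w\}$, i.e., a length-$2$ path through $v$. This leaves $v$ with $\approx 2d(v)/3$ incident edges.
\item Recurse on the resulting multigraph.
\end{enumerate}
An orientation of the final virtual edges unwinds to an orientation of $G$ that is balanced at $v$ except on the paths ending at $v$. Hence $O(\log(1/\eta))$ levels give imbalance $\eta d(v)+O(1)$. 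Path lengths double per level, and the bootstrapping in \cite{GhaffariHKMSU20} is what keeps the total at $\tilde O(\eta^{-1})\cdot T_{SO}$.

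\textbf{Repairing your argument.} Your CONGEST argument then applies verbatim to this recursion. At every level the virtual edges partition $E$ into paths, since merging two edges at $v$ preserves this. Splitting into degree-$3$ virtual nodes is local bookkeeping. So each round of the black-box sinkless orientation on the current virtual graph costs at most the current path length, with no congestion.

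Separately, your first step (reducing directed to undirected splitting) is unneeded and unclear. The paper goes the other way: it obtains the undirected split by alternately coloring the paths.
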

\begin{proof}[Proof Sketch]
Before we reason on how to obtain the so-called undirected splitting result of the lemma, we explain how \cite{GhaffariHKMSU20} obtains a directed splitting. The objective in the directed splitting is to orient the edges of the input graph such that the discrepancy between the number of incoming and outgoing edges at each node is minimized. 

At the core the degree splitting algorithm of \cite{GhaffariHKMSU20}, arbitrarily splits every node $v$ with degree $d(v)$ into $\approx d(v)/3$ virtual nodes  of degree $3$ each. Then, it computes a sinkless orientation of the resulting graph providing node $v$ with $d(v)/3$ outgoing edges. Now, $v$ can exclusively decide on the fate of all these outgoing edges without disturbance by other nodes. So, $v$ can re-orient these edges such that half of them are oriented outwards and half of them are oriented inwards. As a result this would limit the discrepancy between the number of incoming and outgoing edges at $v$ to the $2d(v)/3$ incident edges that it does not control. To obtain a better split, the algorithm now performs a clever trick. Each node pairs its $d(v)/3$ outgoing edges arbitrarily. Let $e=\{v,u\}$ and $f=\{v,w\}$ be the edges in one such pair. Now, node $v$ replaces these edges with a virtual edge $\{u,w\}$ corresponding to a length $2$ paths in the original communication network and effectively erasing the original edges. Let $H$ be the resulting (multi-)graph. After doing so, node $v$ is left with $\approx 2d(v)/3$ incident edges, where each of these edges could be the start/end of a path of length $2$ in the original network or the start/end of a normal edge. Now, to get a better split, they recurse on $H$ to compute a better orientation of these (possibly virtual) edges. The benefit is that after orienting these virtual edges aka paths, we can infer a natural orientation of the edges of the original graph. So, if the virtual edge $\{u,w\}$ was oriented from $u$ to $w$, we orient $e$ from $u$ to $v$ and $f$ from $v$ to $w$; a similar procedure is done for longer paths.  This provides a perfect split at $v$ between incoming and outgoing edges whenever $v$ is not an endpoint of a path. As the number of incident paths reduces by a factor $2/3$ with each recursion level, after $\log \Delta$ recursion levels we would have achieved a split with constant discrepancy.

Bounding the runtime is non-trivial for the following reason. One requires $2$ rounds of communication in the original communication network to perform one round of communication over a virtual edge representing a path of length $2$, and more generally $k$ rounds for a path of length $k$. As the paths length increases exponentially with the recursion depths, one has to carefully limit it. E.g., $O(\log \Delta)$ recursion levels would immediately lead to a runtime of $2^{O(\log \Delta)}=\poly\Delta$.

Thus, the algorithm consists of several levels of recursion and it also performs some bootstrapping which replaces the original sinkless orientation algorithm with a splitting algorithm obtained by applying the explain algorithm only with few levels of recursion. Carefully balancing the recursion levels provides and the bootstrapping breakpoints provides the desired runtime and can bring the discrepancy of the directed split down to $\eta d(v) +O(1)$. The details are spelled out in great detail in \cite{GhaffariHKMSU20}.

\paragraph{\boldmath \CONGEST implementation.}
All virtual edges represent edge disjoint paths in the original communication network. Vertices can locally keep track on which edges they merged. Hence we can also communicate a \CONGEST message in $k$ rounds over a length $k$ path by simple forwarding the messages to the respective recipient. This also works if the underlying network can only send bandwidth restricted messages. 
Thus, the whole procedure works in the \CONGEST model, if the underlying sinkless orientation algorithm does. We obtain a directed degree splitting result by using the algorithm from \Cref{lm:SO_CONGEST}. 

\paragraph{Undirected splitting.} To obtain the same result for undirected degree splitting, one has to undo long paths differently, by coloring its internal edges alternatingly red and blue. This provides a perfect split at every node that is not at the endpoint of a path. Hence the discrepancy is limited to the number of incident virtual paths in the lowest recursion level. Clearly coloring these paths alternatingly can also be implemented in the \CONGEST model in time proportional to the length of the path and with no overhead compared to the \LOCAL model algorithm. For further details on the algorithm and its correctness we refer to \cite{GhaffariHKMSU20}.
\end{proof}


\newpage
  \printbibliography[heading=bibintoc]

\end{document}